\newcommand{\mc}[2]{\multicolumn{#1}{c}{#2}}
\definecolor{Gray}{gray}{0.85}
\definecolor{LightCyan}{rgb}{0.88,1,1}
\newcolumntype{a}{>{\columncolor{Gray}}c}
\newcolumntype{b}{>{\columncolor{white}}c}
\def\mymathcal{\mathbbm}
\newcommand{\param}{\beta}
\newcommand{\tail}{\xi}
\newcommand{\paramspace}{\mathcal{B}}
      \theoremstyle{plain}
      \newtheorem{definition}{Definition}
      \newtheorem{theorem}{Theorem}
      \newtheorem{example}{Example}
      \newtheorem{proposition}{Proposition}
      \newtheorem{lemma}{Lemma}
      \newtheorem{corollary}{Corollary}
      \newtheorem{assumption}{Assumption}
      \newtheorem{remark}{Remark}
  \newenvironment{assumptionbis}[1]
    {\renewcommand{\theassumption}{\ref{#1}$'$}%
     \addtocounter{assumption}{-1}%
     \begin{assumption}}
    {\end{assumption}}
\renewcommand{\arraystretch}{1.5}
\def\lQ{\scalebox{-1}[1]{''}}
\renewenvironment{abstract}{%
    \if@twocolumn
      \section*{\abstractname}%
    \else 
      \begin{center}%
        {\bfseries \normalsize\abstractname\vspace{\z@}}
      \end{center} \vspace{-0.5cm}%
      \quotation
    \fi}
    {\if@twocolumn\else\endquotation\fi}
\begin{document}

  \title{A Sieve-SMM Estimator for Dynamic Models}
  \author{\Large Jean-Jacques Forneron\thanks{Department of Economics, Boston University, 270 Bay State Road, Boston, MA 02215.\newline Email: \href{mailto:jjmf@bu.edu}{jjmf@bu.edu}.   This paper is based on the third chapter of my doctoral dissertation at Columbia University. 
   I am indebted to my advisor Serena Ng for her continuous guidance and support. I would like to thank the co-editor and three anonymous referees for insightful and helpful comments. I also greatly benefited from comments and discussions with Jushan Bai, Tim Christensen, Benjamin Connault, Gregory Cox, Iv\'an Fern\'andez-Val, Ron Gallant, Eric Gautier, Hiro Kaido, Dennis Kristensen, Sokbae Lee, Kim Long, Nour Meddahi, Jos\'e Luis Montiel Olea, Zhongjun Qu, Christoph Rothe, Bernard Salani\'e and the participants of the Columbia Econometrics Colloquium as well as the participants of the econometrics seminar at Boston University, Chicago Booth, UC Berkeley, Bocconi, Georgetown, UPenn and participants at several conferences. All errors are my own.}
  } \date{\today}
  \maketitle

\begin{abstract}  
  \vspace{-0.2cm}
  This paper proposes a Sieve Simulated Method of Moments (Sieve-SMM) estimator for the parameters and the distribution of the shocks in nonlinear dynamic models where the likelihood and the moments are not tractable. An important concern with SMM, which matches sample with simulated moments, is that a parametric distribution is required. However, economic quantities that depend on this distribution, such as welfare and asset-prices, can be sensitive to misspecification. The Sieve-SMM estimator addresses this issue by flexibly approximating the distribution of the shocks with a Gaussian and tails mixture sieve. The asymptotic framework provides consistency, rate of convergence and asymptotic normality results, extending existing results to a new framework with more general dynamics and latent variables. 
  An application to asset pricing in a production economy shows a large decline in the estimates of relative risk-aversion, highlighting the empirical relevance of misspecification bias.
\end{abstract}

\bigskip
\noindent JEL Classification: C14, C15, C32, C33.\newline
\noindent Keywords:  Simulated Method of Moments, Mixture Sieve, Asset Pricing.

\bibliographystyle{ecta}
\baselineskip=18.0pt
\thispagestyle{empty}
\setcounter{page}{0}

\section{Introduction}
Complex nonlinear dynamic models with an intractable likelihood or moments are increasingly common in economics. A popular approach to estimating these models is to match informative sample moments with simulated moments from a fully parameterized model using SMM. However, economic models are rarely fully parametric since theory usually provides little guidance on the distribution of the shocks. The Gaussian distribution is often used in applications but in practice, different choices of distribution may have different economic implications; this is illustrated below. Yet to address this issue, results on semiparametric simulation-based estimation are few. 

This paper proposes a Sieve Simulated Method of Moments (Sieve-SMM) estimator for both the structural parameters and the distribution of the shocks and explains how to implement it. The dynamic models considered in this paper have the form:
\begin{align}
&y_t = g_{obs}(y_{t-1},x_t,\theta,f,u_t) \label{eq:observed}\\
&u_t = g_{latent}(u_{t-1},\theta,f,e_t), \quad e_t \sim f. \label{eq:unobserved}
\end{align}
The observed outcome variable is $y_t$, $x_t$ are exogenous regressors and $u_t$ is a vector of unobserved latent variables. The unknown parameters include $\theta$, a finite dimensional vector, and the distribution $f$ of the shocks $e_t$. The functions $g_{obs},g_{latent}$ are known, or can be computed numerically, up to $\theta$ and $f$. The Sieve-SMM estimator extends the existing Sieve-GMM literature to more general dynamics with latent variables and the literature on sieve simulation-based estimation of some static models. 

The estimator in this paper has two main building blocks: the first one is a sample moment function, such as the empirical characteristic function (CF) or the empirical cumulative distribution function (CDF); infinite dimensional moments are needed to identify the infinite dimensional parameters. As in the finite dimensional case, the estimator simply matches the sample moment function with the simulated moment function. To handle this continuum of moment conditions, this paper adopts the objective function of \citet{Carrasco2000,Carrasco2007} in a semi-nonparametric setting.

The second building block is to nonparametrically approximate the distribution of the shocks using the method of sieves, as numerical optimization over an infinite dimension space is generally not feasible. Typical sieve bases include polynomials and splines which approximate smooth regression functions. Mixtures are particularly attractive to approximate densities for three reasons: they are computationally cheap to simulate from, they are known to have good approximation properties for smooth densities, and draws from the mixture sieve are shown in this paper to satisfy the $L^2$-smoothness regularity conditions required for the asymptotic results. Restrictions on the number of mixture components, the tails and the smoothness of the true density ensure that the bias is small relative to the variance so that valid inferences can be made in large samples. To handle potentially fat tails, this paper also introduces a Gaussian and tails mixture.
The tail densities in the mixture are constructed to be easy to simulate from and also satisfy $L^2$-smoothness properties. The algorithm gives an overview of steps required to compute the estimator, more details are given in Section \ref{sec:the_estimator}.

\begin{algorithm}[h]
  \caption*{\quad \textsc{\textbf{Algorithm:}} Computing the Sieve-SMM Estimator}\label{algo:Sieve-SMM}
  \begin{algorithmic}

  \State Set a sieve dimension $k(n) \geq 1$ and a number of lags $L \geq 1$.
  \State Compute $\hat \psi_n$, the Characteristic Function (CF) of $(y_{t},\dots,y_{t-L},x_t,\dots,x_{t-L})$.
  \For {$s = 1,\dots,S$}
  \State Simulate the shocks $e_t^s$ from a mixture distribution $f$,
  \State Simulate artificial samples $(y_t^s)_{t=1,\dots,n}$ at $(\theta,f)$ using the innovations $(e_t^s)_{t=1,\dots,n}$,
  \State Compute $\hat \psi^s_n(\theta,f)$, the CF of the simulated data $(y_t^s,\dots,y_{t-L}^s,x_t,\dots,x_{t-L})$,
  \EndFor
  \State Compute the average simulated CF $\hat \psi_n^S (\theta,f)= \frac{1}{S}\sum_{s=1}^S \hat \psi_n^s (\theta,f)$.
  \State Compute the objective function $\hat Q_n^S(\theta,f)$: a distance between $\hat\psi_n$ and $\hat\psi_n^S$,
  \State Find the parameters $(\hat \theta_n,\hat f_n)$ that minimize the distance $\hat Q_n^S$.
  \end{algorithmic}
\end{algorithm}

To give intuition why the mixture sieve can be useful in economic analyses, consider an endowment economy where consumption growth $\Delta c_t=\log(C_t/C_{t-1})$ follows a simple AR(1) process  with mean zero innovations $e_t \sim f$. The parameters consist of the AR(1) coefficient and $f$. This simple specification provides a lot of flexibility to model asset prices; for a CRRA utility with risk aversion $\gamma$, the risk-free rate is:
\[ r_t = \text{const} - \log \left(  \mathbb{E}_t[\exp(-\gamma e_t)] \right).\]
The first $\textit{const}$ term involves the AR(1) parameters, $\Delta c_{t-1}$, time-preference $\delta$, and risk-aversion $\gamma$. The last term is the log of the moment generating function (MGF) for $f$ evalutated at $\gamma$. Heavy tails imply an infinite MGF so the paper will focus on Gaussian mixtures for which the MGF is finite and asset prices are well defined. Different distributions $f$ are associated with different MGF which, for a given $\gamma$, leads to different values of the risk-free rate above. Here, flexibly approximating $f$ would allow to better match features of $\Delta c_t$ but also of the risk-free rate. In comparison, with Gaussian shocks, the risk-free rate becomes $r_t =\text{const}-\gamma^2\sigma_e^2/2$, which requires a relatively large value $\gamma$ to match the data \citep{weil1989}. The empirical applications look at the asset pricing implications of using a flexible mixture specification for $f$; first in a very simple endowment model and then in the production economy of \citet{VanBinsbergen2012}. Both applications illustrate the discussion above. For the second application in particular, using quarterly US data between 1961 and 2019, baseline Gaussian estimates confirm their conclusion that macro and financial data are difficult to match. Here, with mixture instead of Gaussian shocks, estimates of risk aversion decline from $35$ to $10$, with a 95\% confidence interval of $[5,16]$.

As usual in the sieve literature, this paper provides a consistency result and derives the rate of convergence of the structural and infinite dimensional parameters, as well as asymptotic normality results for finite dimensional functionals of these parameters. While the results apply to both static and dynamic models alike, two important differences arise in dynamic models compared to the existing literature on sieve estimation: proving uniform convergence of the objective function and  controlling the dynamic accumulation of the nonparametric approximation bias.

The first challenge is to establish the rate of convergence of the objective function for dynamic models. To allow for the general dynamics (\ref{eq:observed})-(\ref{eq:unobserved}) with latent variables, this paper adapts results from \citet{Andrews1994} and \citet{BenHariz2005} to construct an inequality for uniformly bounded empirical processes which may be of independent interest. 
It holds under the geometric ergodicity condition found in \citet{Duffie1993}. 
The boundedness condition is satisfied by the CF and the CDF for instance. Also, the inequality implies a larger variance than typically found in the literature for iid or strictly stationary data with limited dependence induced by the moments.

The second challenge is that in the model (\ref{eq:observed})-(\ref{eq:unobserved}) the nonparametric bias accumulates dynamically. At each time period the bias appears because draws are taken from a mixture approximation instead of the true $f_0$, this bias is also transmitted from one period to the next since $(y_t^s,u_t^s)$ depends on $(y_{t-1}^s,u_{t-1}^s)$. To ensure that this bias does not accumulate too much, a decay condition is imposed on the data generating process (DGP).  For an AR(1) process with coefficient $\rho$, this condition holds if $|\rho|<1$. The resulting bias is generally larger than in static models and usual sieve estimation problems. Together, the increased variance and bias imply a slower rate of convergence for the Sieve-SMM estimates. Hence, in order to achieve the rate of convergence required for asymptotic normality, the Sieve-SMM requires additional smoothness of the true density $f_0$. Bias accumulation seems to be generic to sieve estimation of dynamic models: if the computation of the moments or likelihood involves a filtering step then the bias accumulates inside the prediction error of the filtered values. 
Monte-Carlo simulations illustrate the finite sample properties of the estimator and the effect of dynamics on the bias and the variance properties of the estimator.

While the paper proposes to relax certain parametric assumptions in the estimation, the model can still be misspecified along other dimensions and remain unable to match certain features of the data. This constrasts with more parsimonious choices of moments in indirect inference which could be more robust to misspecification in certain dimensions. This is illustrated in the second application, Table \ref{tab:moms}, using a sieve improves the fit in some dimensions but not all. To interpret the parameters being estimated here under misspecification, note that the objective function $Q = \lim_{n \to \infty} \hat{Q}_n^S$ can be re-written using Fubini's Theorem and direct calculations -- under Gaussian weighting with mean $0$ and variance $\Sigma$ -- as $\int \exp(-\frac{1}{2}[\textbf{y}_t-\tilde{\textbf{y}_t}]^\prime \Sigma [\textbf{y}_t-\tilde{\textbf{y}_t}])[f(\textbf{y}_t)-f_0(\textbf{y}_t)][f(\tilde{\textbf{y}_t})-f_0(\tilde{\textbf{y}_t})]d\textbf{y}_td\tilde{\textbf{y}_t}$ where $f$ and $f_0$ are the distribution of the simulated data $\textbf{y}_t = (y_t,\dots,y_{t-L},x_t,\dots,x_{t-L})$ and the data, respectively. This implies that the parameters minimize this distance between the joint distributions $f$ and $f_0$ weighted by a Gaussian kernel with variance $\Sigma^{-1}$. Notice that it is well defined even if the two distributions have different supports. 
In light of this, extending the theory to cover estimation and inference under misspecification, as in \citet{AI20075}, and understanding the impact of using infinite rather than finite dimensional moments and objective function on the pseudo-true parameters are of interest for future research. 


\subsection*{Related Literature}

The Sieve-SMM estimator presented in this paper combines two literatures: sieve estimation and the Simulated Method of Moments (SMM). This section provides a non-exhaustive review of the existing methods and results.


A key aspect to simulation-based estimation is the choice of moments $\hat \psi_n$. The Simulated Method of Moments (SMM) estimator of \citet{McFadden1989} relies on unconditional moments, the Indirect Inference (IND) estimator of \citet{Gourieroux1993} uses auxliary parameters from a simpler, tractable model and the Efficient Method of Moments (EMM) of \citet{Gallant1996} uses the score of the auxiliary model.
To achieve parametric efficiency, a number of papers consider using nonparametric moments but assume the distribution $f$ is known.\footnote{See e.g. \citet{Gallant1996,Fermanian2004}.} To avoid dealing with nonparametric moments, \citet{Carrasco2007} use the ECF. This paper uses a similar approach in a semi-nonparametric setting. 

General asymptotic results are given by \citet{Pakes1989} for SMM with iid data and \citet{Duffie1993} for time-series. The models considered in this paper are generative, they are fully specified so that one can generate a full dataset by simulation. A related but different class of problems relies on simulations to compute moment conditions for non-linear IV estimation, but these models are not fully parametrically specified. They cannot be used to simulate artificial datasets without additional modelling assumptions. 

Few papers are concerned with sieve simulation-based estimation; \citet{Bierens2012} and \citet{Newey2001} consider specific static models. \citet{Blasques2011} considers generic semi-nonparametric indirect inference. The results rely on generic uniform convergence results which do not apply in the present setting because of the non-standard dependence. His assumptions imply $\sqrt{n}$-convergence of $(\hat\theta_n,\hat f_n)$ which is restrictive. 
\citet{Dridi2000} propose a partial encompassing principle where parameters of interest are consistently estimable even if nuisance parameters are inconsistent because of misspecification.

An alternative to using sieves is to model several moments of the distribution with a parametric distribution. \citet{Ruge-Murcia2016} uses the skew Normal and Generalized Extreme Value distributions to model skewness in an asset pricing model. \citet{Gospodinov2013} use the Generalized Lambda family to estimate a non-invertible Moving Average (MA) model. In applications where the full unknown distribution matters for outcomes, estimates may be sensitive to the choice of distribution. As discussed in the introduction, asset prices depend on the full distribution via the MGF. 

Another related literature is the sieve estimation of models defined by moment conditions. These models can be estimated using either Sieve-GMM, Sieve Empirical Likelihood or Sieve Minimum Distance \citep[see][for a review]{Chen2007}. Applications include nonparametric estimation of IV and quantile IV regressions, and the semi-nonparametric estimation of asset pricing models,\footnote{See e.g. \citet{Hansen1987,Chen2009,Chen2013,Christensen2017}. } for instance. Existing results cover the consistency and the rate of convergence of the estimator as well as asymptotic normality of functional of the parameters for both iid and dependent data. See e.g. \citet{Chen2012,Chen2015a} and \citet{Chen2015b} for recent results with iid data and dependent data.

In the empirical Sieve-GMM literature, an application closely related to the dynamics encountered in this paper appears in \citet{Chen2013}. They estimate an Euler equation with recursive preferences where the value function is approximated using sieves. 
\citet{norets2014a} consider semiparametric Gaussian mixture estimation of dynamic discrete choice models. More generally, there is a large literature on Bayesian nonparametric estimation using mixtures. For non-linear state-space models where the likelihood is often intractable, simulations are also used to compute the objective function. This is implemented with the particle filter. Bayesian inference starts with a prior on both the finite dimensional and the non-parametric components, a common choice of prior for mixtures is used in Section \ref{sec:MonteCarlo}. Monte-Carlo Markov-Chain methods are then used to sample from the posterior.\\
To summarize, this paper extends existing results on Sieve and SMM estimation to a framework with non-linear dynamics, latent variables and flexible semi-nonparametric estimation.

\subsection*{Notation}
The following notation and assumptions will be used throughout the paper: the parameter of interest is $\param=(\theta,f) \in \Theta \times \mathcal{F} = \paramspace$.  The finite dimensional parameter space $\Theta$ is compact and the infinite dimensional set of densities $\mathcal{F}$ is possibly non-compact. The sets of mixtures satisfy $\paramspace_k \subseteq \paramspace_{k+1} \subseteq \paramspace$, $k$ is dimension of the sieve set $\paramspace_{k}$. The dimension $k$ increases with the sample size: $k(n)\to \infty$ as $n\to \infty$. $\Pi_{k(n)}f$ is the mixture approximation of $f$. The vector of shocks $e \sim f$ has dimension $d_e \geq 1$. The total variation (TV) distance between two densities is $\|f_1-f_2\|_{TV} = 1/2 \int |f_1(e)-f_2(e)|de$ and the supremum (or sup) norm is $\|f_1-f_2\|_{\infty} = \sup_{e \in \mathbb{R}^{d_e}}|f_1(e)-f_2(e)|$. Let $\|\param_1-\param_2\|_{TV}=\|\theta_1-\theta_2\|+\|f_1-f_2\|_{TV}$ and $\|\param_1-\param_2\|_{\infty}=\|\theta_1-\theta_2\|+\|f_1-f_2\|_{\infty}$, where $\|\theta\|$ and $\|e\|$ correspond the Euclidian norm of $\theta$ and $e$ respectively. $\|\param_1\|_m$ is a norm on the mixture components: $\|\param_1\|_m=\|\theta\|+\|(\omega,\mu,\sigma)\|$ where $\|\cdot\|$ is the Euclidian norm and $(\omega,\mu,\sigma)$ are the mixture parameters. For a functional $\phi$, its pathwise, or G\^ateaux, derivative at $\param_1$ in the direction $\param_2$ is $\frac{d \phi(\param_1)}{d \param}[\param_2] = \frac{d \phi\left( \param_1+\varepsilon\param_2 \right)}{d\varepsilon}\Big|_{\varepsilon=0}$. For two sequences $a_n$ and $b_n$, $a_n \asymp b_n$ implies that there exists $0<c_1\leq c_2 < \infty$ such that $c_1 a_n \leq b_n \leq c_2 a_n$ for all $n \geq 1$.

\subsection*{Structure of the Paper}
The paper is organized as follows: Section \ref{sec:the_estimator} provides an overview of the Sieve-SMM estimator and its implementation. Section \ref{sec:AsymMixture} gives the main asymptotic results. Section \ref{sec:MonteCarlo} illustrates the finite sample properties of the estimator using Bayesian nonparametric estimation as a benchmark. Section \ref{sec:empirical} applies the methodolgy to asset pricing in a production economy. Section \ref{sec:conclusion} concludes. Appendices \ref{apx:prelim}, \ref{apx:asymptotic_main} consist of preliminary lemmas and the proofs for the main results. The Supplement provides several additional appendices. Appendices \ref{sec:proof_prelim}, \ref{apx:additional_results} and \ref{apx:proof_additional} consist of the proofs for the preliminary lemmas, intermediate results and their proofs. Appendix \ref{apx:add_appli} provides additional material for the empirical applications and additional results.


\section{A Sieve-SMM Estimator} \label{sec:the_estimator}
This section describes the estimator and its implementation, including practical aspects such as tuning parameters and optimization, using a simple illustrative AR(1) example:
\begin{align} y_t = \rho y_{t-1} + e_t,\quad e_t \overset{iid}{\sim} f, \label{eq:ar1}\end{align}
where $t=1,\dots,n$; $n$ is the sample size. The parameters of interest are $\theta = \rho$ and the distribution  $f$. The latter is approximated by a mixture of $k$ Gaussians densities:
\begin{align*} f_{\omega,\mu,\sigma}(\cdot) = \sum_{j=1}^k \frac{\omega_j}{\sigma_j} \phi\left( \frac{\cdot - \mu_j}{\sigma_j} \right), \end{align*}
where $\phi$ is the normal pdf. The weights $\omega_j$ are positive and sum to one. The location and scale parameters are also restricted as discussed below. The sieve dimension $k$ increases with $n$ to reduce the approximation bias as sampling uncertainty declines. 

Simulation-based estimation requires sampling from $f_{\omega,\mu,\sigma}$ and then generating data from (\ref{eq:ar1}). For a given value of the mixture coefficients $(\omega,\mu,\sigma)$, $S\geq 1$ samples of $n$ Gaussian mixtures are simulated as follows. First, let $\omega_0=0$, compute the cumulative $\overline{\omega}_j = \sum_{\ell = 0}^j \omega_{\ell}$, draw a uniform and a Gaussian random variable $u_t^s \overset{iid}{\sim} \mathcal{U}_{[0,1]}, Z_t^s \overset{iid}{\sim} \mathcal{N}(0,1)$, $t=1,\dots,n$; $s=1,\dots,S$ to generate $e_t^s = \sum_{j=1}^k \mathbbm{1}_{u_t^s \in [\overline{\omega}_{j-1},\overline{\omega}_{j}]}(\mu_j + \sigma_j Z_t^s)$. By construction $e_t^s \overset{iid}{\sim} f_{\omega,\mu,\sigma}$. The pair $(u_t^s,Z_t^s)$ is only drawn once so that the optimization problem is well behaved and stochastic equicontinuity conditions hold. Then, to simulate from (\ref{eq:ar1}), set $y_0^s = y_0$ fixed and compute recursively $y_{t}^s = \rho y_{t-1}^s + e_t^s$ for $t=1,\dots,n$ and $s=1,\dots,S$. In DSGE models, $y_0$ is typically set at the steady-state; another common choice is $y_0 = 0$.

Estimation then requires comparing the sample with the simulated data. In particular, identifying both $\rho$ and $f$ requires information about the persistence of $y_t$ and the marginal distribution of $y_t-\rho y_{t-1}$. In (\ref{eq:ar1}), all of this information is contained in the joint distribution of $\mathbf{y}_t = (y_t,\dots,y_{t-L})$ for any $L \geq 1$. Following \citet{Carrasco2007}, this joint distribution is summarized by the ECF of the sample and simulated data:
\begin{align*} \hat \psi_n(\tau) = \frac{1}{n} \sum_{t=1}^n e^{i\tau^\prime \mathbf{y}_t }, \quad \hat \psi_n^S(\tau,\theta,f) = \frac{1}{nS} \sum_{s=1}^S \sum_{t=1}^n e^{i\tau^\prime \mathbf{y}^s_t(\theta,f) }, \end{align*}
where $\tau \in \mathbb{R}^{\text{dim}(\mathbf{y})}$ and $i$ is the imaginary number such that $i^2=-1$. In the general setting (\ref{eq:observed})-(\ref{eq:unobserved}), the joint ECF of $(\mathbf{y}_t,\mathbf{x}_t)$ and $(\mathbf{y}_t^s,\mathbf{x}_t)$ will be used. Matching the two ECFs over $\tau \in \mathbb{R}^{\text{dim}(\mathbf{y})}$ implies a continuum of moment conditions $\mathbb{E}[\hat \psi_n(\tau)-\hat \psi^S_n(\tau,\theta,f)]=0,\forall \tau$. The objective function is computed as a weighted distance of the moment functions \citep{Carrasco2000,Carrasco2007}:
\begin{align}
  \hat Q_n^S(\theta,f) = \int_{\mathbb{R}^{L+1}} \Big| \hat \psi_n(\tau) - \hat \psi_n^S(\tau,\theta,f) \Big|^2 \pi(\tau)d\tau, \label{eq:objective}
\end{align}
where $\pi$ is a continuous density with full support. In practice, the Gaussian density is used and the integral is computed over a fine grid as discussed below.
The estimated parameter $\hat \param_n = (\hat \rho_n,\hat f_n)$ is an approximate minimizer of this weighted distance:
\begin{align}
 \hat Q_n^S(\hat \param_n) \leq \inf_{\param \in \paramspace_{k(n)}} \hat Q_n^S(\beta) + O_p(\hat\eta_n),  \label{eq:estimator}
\end{align}
where $\hat\eta_n \geq 0$, $\hat\eta_n = O_p(\eta_n)$, $\eta_n=o(1)$ corresponds to numerical optimization and integration errors, assumed negligible. The following provides the detailed steps to implement the estimation and suggestions for the tuning parameters. 

\begin{algorithm}[h]
  \caption*{\quad \textsc{\textbf{Algorithm:}} Computing the Sieve-SMM Estimator}\label{algo:Sieve-SMM2}
  \begin{algorithmic}

  \State Set $k(n) \geq 1$, $L \geq 1$ the sieve dimension and number of lags.
  \State Compute the CF: $\hat \psi_n(\tau) = \frac{1}{n} \sum_{t=1}^n e^{i \tau^\prime \mathbf{y}_t}$,  $\mathbf{y}_t = (y_{t},\dots,y_{t-L},x_t,\dots,x_{t-L})$.
  \For {$s = 1,\dots,S$}
  \State Simulate the $e_t^s \sim f_{\omega,\mu,\sigma}$: a $k(n)$ component mixture distribution indexed by $(\omega,\mu,\sigma)$.
  \State Simulate artificial samples $(y_1^s,\dots,y_n^s)$ at $(\theta,f_{\omega,\mu,\sigma})$ using the innovations $e_t^s$ and $x_t$.
  \State Compute $\hat \psi^s_n(\tau,\theta,f_{\omega,\mu,\sigma})$, the CF of the simulated data $(y_t^s,\dots,y_{t-L}^s,x_t,\dots,x_{t-L})$.
  \EndFor
  \State Compute the average simulated CF $\hat \psi_n^S (\tau,\theta,f_{\omega,\mu,\sigma})= \frac{1}{S}\sum_{s=1}^S \hat \psi_n^s (\tau,\theta,f_{\omega,\mu,\sigma})$.
  \State Compute the objective function $\hat Q_n^S(\theta,f_{\omega,\mu,\sigma}) = \int |\hat\psi_n(\tau) - \hat\psi_n^S(\tau,\theta,f_{\omega,\mu,\sigma})|^2 \pi(\tau)d\tau$.
  \State Find the parameters $(\hat \theta_n,\hat f_{n})$ that minimize the objective $\hat Q_n^S$.
  \end{algorithmic}
\end{algorithm}

\paragraph{Inputs for the mixture:} The sieve dimension $k(n)$ and bounds on location/scale parameters $(\mu_j,\sigma_j)_{j=1,\dots,k}$ should be chosen jointly. Bounds complying with theoretical requirements are $|\mu_j-\mu| \leq \sigma C_\mu \log(k)$ and $\sigma_j \geq \sigma C_\sigma \log(k+1)/(k+1)$ where $\mu = \sum_{j=1}^k \omega_j \mu_j$ and $\sigma^2 = \sum_{j=1}^k \omega_j (\mu_j^2+\sigma_j^2)-\mu^2$ are the expected value and variance of the mixture. Both bounds adapt to the density's mean/variance and are easily handled by the preferred optimizer below. $C_\mu$ should be large enough to fit the tails of $f$, $C_\mu = 7$ performs well in the simulations and the applications. For a given $k$, $C_\sigma$ plays a similar role to a bandwidth in kernel density estimation. In particular, the local measure of ill-posedness increases with the inverse of the lower bound $\underline{\sigma}_{k(n)}$ on $\sigma_j$ which implies a slower rate of convergence for the estimator. The simulations and the application use $C_\sigma = 1.8$ and vary $k=2,\dots,5$.

\paragraph{Inputs for $\hat Q_n^S$:}  Three inputs are required: $L,\pi$ and an integration grid. If $y_t$ is markovian of order $\ell$, the first $\ell$ lags contain all of the information on the dependence; a natural choice is then $L=\ell$ \citep{Carrasco2007}. For non-markovian $y_t$, finding $L$ such that the first $L$ (non)linear autocorrelations capture the dependence is necessary. For instance, $L \geq \ell$ for $\text{MA}(\ell)$ models and $L \geq 1$ for a canonical stochastic volatility model with AR(1) volatility. With respect to $\pi$, using the ECF of $\Sigma_n^{-1/2}(\mathbf{y}_t - \bar{\mathbf{y}}_n)$ and $\Sigma_n^{-1/2}(\mathbf{y}^s_t - \bar{\mathbf{y}}_n)$, where $\bar{\mathbf{y}}_n,\Sigma_n$ are the sample mean and variance of $\mathbf{y}_t$, with Gaussian density weights corresponds (by a change of variable argument) to a choice of $\pi$ which has appealing features. Indeed, expanding the difference in ECF around $\tau=0$:
\[ \hat\psi_n(\tau) - \hat\psi_n^S(\tau) = i \tau^\prime \Sigma_n^{-1/2}(\bar{\mathbf{y}}_n^S - \mathbf{y}_n) + \frac{1}{2 n S} \sum_{s=1}^S \sum_{t=1}^n  (\mathbf{y}_t^s - \mathbf{y}_n)^\prime \Sigma_n^{-1/2} \tau \tau^\prime \Sigma_n^{-1/2} (\mathbf{y}_t^s - \mathbf{y}_n) - \frac{1}{2} \tau \tau^\prime + \dots\]
reveals that a density which puts more weight around $0$ gives more weight to lower-order moments. Akin to a GMM weighting scheme, the researcher can put more (or less) weight on lower-order moments (means, co-variances) vs. higher-order moments (skewness, kurtosis) in the estimation by choosing a smaller (or larger) variance for the Gaussian weights. To compute the integral in (\ref{eq:objective}) a finite grid of scrambled Sobol points is used. These can be more accurate than a Monte-Carlo approximation even for relatively large dimensions, unlike quadrature rules. The grid will be assumed to be large enough so that the integration error is negligible. In the second empirical application the integral is computed for $\text{dim}(\mathbf{y}_t)=28$ based on $7$ variables with $L=3$ lags. 

\paragraph{Choice of optimizer:} Numerical optimization is required to find a $\hat \param_n$ satisfying (\ref{eq:estimator}). Since $\hat Q_n^S$ is typically non-convex, has intractable derivatives and the numbers of coefficients is moderately large (between $11$ and $35$ in the application), a derivative-free global optimizer should be used. The simulations and application rely on particle swarm optimization, a stochastic search algorithm which converges fairly quickly. Matlab's implementation can evaluate $\hat Q_n^S$ in parallel, speeding up estimation significantly in the application where the policy function is very time-consuming to compute. After terminating the search, run several iterations of a local optimizer to check convergence. 

\paragraph{Modelling fat tails:} Gaussian mixtures can only approximate smooth densities $f$ sufficiently fast under a thin tail condition \citep{Kruijer2010}. Similar to \citet{Gallant1987}, adding a parametric tail component to form a Gaussian and tails mixture allows to model asymmetric excess tail behaviour:
\[ f_k(\cdot)  =\sum_{j=1}^k \frac{\omega_j}{\sigma_j} \phi \left(\frac{\cdot-\mu_j}{\sigma_j}\right) + \frac{\omega_{k+1}}{\sigma_{k+1}}\mymathcal{1}_{\cdot \leq \mu_{k+1}}f_L\left( \frac{\cdot-\mu_{k+1}}{\sigma_{k+1}} \right) + \frac{\omega_{k+2}}{\sigma_{k+2}}\mymathcal{1}_{\cdot \geq \mu_{k+2}}f_R\left( \frac{\cdot-\mu_{k+2}}{\sigma_{k+2}} \right), \] 
where $f_L(e,\tail_L) = (2+\tail_L)\frac{|e|^{1+\tail_L}}{[1+|e|^{2+\tail_L}]^2}$  for $e \leq 0$ and $f_R(e,\tail_R) = (2+\tail_R)\frac{e^{1+\tail_R}}{[1+e^{2+\tail_R}]^2}$ for $e \geq 0$ are the left and right tail components. They have finite variance if $\tail_L,\tail_R \geq 1$ which allows to prove $L^2$-smoothness of the tail draws. To sample from $f_L,f_R$, draw $u_L,u_R \sim \mathcal{U}_{[0,1]}$ and compute $Z_L = -(1/u_L-1)^{\frac{1}{2+\xi_L}},Z_R = (1/u_R-1)^{\frac{1}{2+\xi_R}}$.

\section{Asymptotic Properties} \label{sec:AsymMixture}


\subsection{Consistency} \label{sec:ConsistencyMixture}

Let $Q_n(\param) = \int \big| \mathbb{E} \big( \hat \psi_n(\tau) - \hat \psi_n^S(\tau,\param) \big)  \big|^2 \pi(\tau) d\tau$ be the population analog of the sample objective $\hat Q_n^S$. The dependence on $n$ arizes from $(y_t^s,x_t)$ not being covariance stationary since $y_0^s$ is usually not drawn from the stationary distribution. Since the CF is bounded, under geometric ergodicity, the dominated convergence theorem implies that it has a well-defined limit $Q(\param) = \int \big| \lim_{n\to\infty}\mathbb{E} \big( \hat \psi_n(\tau) - \hat \psi_n^S(\tau,\param) \big)  \big|^2 \pi(\tau) d\tau.$
For both $Q_n$ and $Q$, expectations are taken over the data $(\mathbf{y}_t,\mathbf{x}_t)$ and the simulated $(\mathbf{y}_t^s,\mathbf{x}_t)$. 

The space of true densities satisfying the assumptions will be denoted as $\mathcal{F}$ and $\mathcal{F}_{k}$ is the corresponding space of Gaussian and tails mixtures $\Pi_{k}f$. 
\begin{assumption}[Sieve, Identification, Dependence] \label{ass:sid} Suppose the following conditions hold. i) Sieve Space: the true density admits the decomposition $f = f_1 \times \dots \times f_{d_e}$ where for each $j=1,\dots,d_e$ $f_{j} = (1-\omega_{j,1}-\omega_{j,2})f_{j,S} + \omega_{j,1}f_L + \omega_{j,2}f_R$. $f_{j,S}$ is a smooth density with thin tails and the mixture space $\mathcal{F}_{k(n)}$ satisfying the assumptions of Lemma \ref{lem:ApproxSimuGAUT} with $k(n)^4\log[k(n)]^4/n \to 0$ as $k(n)$ and $n \to \infty.$  $\Theta$ is compact and $1 \leq \tail_L,\tail_R \leq \overline{\tail} < \infty$. ii) Identification: $\lim_{n\to \infty} \mathbb{E} \left( \hat \psi_n(\tau)-\hat \psi_n^s(\tau,\param)\right)=0, \pi  \text{ a.s. } \Leftrightarrow \|\param-\param_0\|_\paramspace = 0$. $\sup_\tau \|\tau\|_\infty \pi(\tau)^{1/4}$ is bounded and $\sqrt{\pi}$ is integrable. For any $n,k \geq 1$ and for all $\varepsilon>0$, $\inf_{ \param \in \paramspace_{k},\, \|\param-\param_0 \|_\paramspace \geq \varepsilon } Q_n(\param)$ is strictly positive and weakly decreasing in both $n$ and $k$. iii) Dependence: $(y_t,x_t)$ is strictly stationary and $\beta$-mixing with exponential decay, the simulated $(y_t^s(\param),x_t)$ are uniformly geometrically ergodic in $\param \in \paramspace$.
\end{assumption}

Condition \textit{i.} allows to use the approximation results in \citet{Kruijer2010}. Here the shocks are independent from one-another, this is a common assumption for structural shocks but could be restrictive in some settings.\footnote{The independence condition can be relaxed by using the results in \citet{DeJonge2010}.} The requirement on $k(n)$ is stronger than usual. First, the $\log[k(n)]$ term is due to simulating from the mixture. Second, the fourth-power is due to the non-standard dependence. The dependence properties of $y_t^s$ vary with $\param$ so that, even though it is strongly mixing, results from \citet{DoukhanP.MassartP.1995,Chen1998a} do not apply. Lemma \ref{lemma:maxineq_dep} provides a more conservative bound for the supremum of the empirical process, of order $\sqrt{k(n)^4\log[k(n)]^4/n}$ compared to $\sqrt{k(n)\log[k(n)]/n}$ with iid or strictly stationary data with fixed dependence.

Condition \textit{ii.} requires $L$ large enough and $g_{\text{obs}},g_{\text{latent}}$ to uniquely identify $\param = (\theta,f)$ as discussed for the AR(1) earlier. Condition \textit{iii.} is common in SMM \citep{Duffie1993}. It implies $(y_t^s,x_t)$ is strongly-mixing \citep{Liebscher2005} and the initial condition bias is negligible, i.e. $Q_n(\param_0)= O(1/n^2)$ as shown in Lemma \ref{lem:geom_ergo}.

Further restrictions on the data generating process are required for establishing uniform convergence of the simulated empirical process. Lemma \ref{lem:mixtures} below shows that mixture draws satisfy an $L^2$-smoothness property. Then, using restrictions on the DGP, Lemma \ref{lem:L2smooth} extends this property to the moments. Combined with Assumption \ref{ass:sid}, these allow to derive consistency and the rate of convergence of $\hat\param_n$.

\begin{lemma}[$L^2$-Smoothness of the Mixture Draws] \label{lem:mixtures}
  Let $e_t^s = \sum_{j=1}^{k(n)} \mymathcal{1}_{\nu^s_t \in [\sum_{l=0}^{j-1} \omega_l, \sum_{l=0}^{j} \omega_l]} ( \mu_j + \sigma_j Z_{t,j}^s )$ and $\tilde e_t^s = \sum_{j=1}^{k(n)} \mymathcal{1}_{\nu^s_t \in [\sum_{l=0}^{j-1}  \tilde \omega_l, \sum_{l=0}^{j} \tilde \omega_l]} ( \tilde \mu_j + \tilde \sigma_j Z_{t,j}^s )$ 
   with bounds $|\mu_j|,|\tilde \mu_j| \leq \bar{\mu}_{k(n)}$ and $|\sigma_j|,|\tilde \sigma_j| \leq \bar{\sigma}$ as in Lemma \ref{lem:ApproxSimuGAUT}. If $\mathbb{E}(|Z_{t,j}^s|^2) \leq C_Z^2 < \infty$ then there exists a finite constant $C$ which only depends on $C_Z$ such that:
   \[ \left[ \mathbb{E} \left( \sup_{\| f_{\omega,\mu,\sigma}-f_{\tilde\omega,\tilde\mu,\tilde\sigma} \|_m \leq \delta}  \Big| e_t^s-\tilde e_t^s  \Big|^2 \right) \right]^{1/2} \leq C \left(1+\bar \mu_{k(n)} + \bar \sigma + k(n)\right) \delta^{1/2}, \]
   where $\| f_{\omega,\mu,\sigma}-f_{\tilde\omega,\tilde\mu,\tilde\sigma} \|_m = \|(\omega,\mu,\sigma) - (\tilde\omega,\tilde\mu,\tilde\sigma)\|_1$.
\end{lemma}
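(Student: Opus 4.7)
The plan is to split $|e_t^s - \tilde e_t^s|^2$ according to whether the uniform draw $\nu_t^s$ selects the same mixture component under both parameter vectors. Let $J = J(\nu_t^s;\omega)$ and $\tilde J = J(\nu_t^s;\tilde\omega)$ denote the index $j$ for which $\nu_t^s$ falls in $[\Omega_{j-1},\Omega_j]$, where $\Omega_j = \sum_{l=0}^j \omega_l$, and similarly for $\tilde J$. On the matched event $\{J = \tilde J = j\}$ the difference reduces to $(\mu_j - \tilde\mu_j) + (\sigma_j - \tilde\sigma_j) Z_{t,j}^s$, which is linear in the mixture parameters; on the mismatched event $\{J \neq \tilde J\}$ the difference is bounded by the magnitudes of both summands, which are uniformly controlled by $\bar\mu_{k(n)}$, $\bar\sigma$ and the $Z_{t,j}^s$.

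For the matched part, I would use disjointness of the indicators in $j$, the fact that $\nu_t^s$ is independent of the $Z_{t,j}^s$, and $\mathbb{E}|Z_{t,j}^s|^2 \leq C_Z^2$, to bound the contribution by $2(1+C_Z^2)(\|\mu-\tilde\mu\|^2 + \|\sigma-\tilde\sigma\|^2) \leq 2(1+C_Z^2)\delta^2$. For the mismatched part, I would use $|\mu_{J} + \sigma_{J} Z_{t,J}^s|^2 \leq 2\bar\mu_{k(n)}^2 + 2\bar\sigma^2 |Z_{t,J}^s|^2$ and the independence of $\nu_t^s$ and $Z_{t,j}^s$ to obtain $\mathbb{E}[|e_t^s - \tilde e_t^s|^2 \mathbbm{1}_{J \neq \tilde J}] \leq C(\bar\mu_{k(n)}^2 + \bar\sigma^2 C_Z^2) \mathbb{P}(J \neq \tilde J)$.

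The crucial probability estimate uses that $J \neq \tilde J$ requires $\nu_t^s$ to fall in the symmetric difference of the intervals $[\Omega_{j-1},\Omega_j]$ and $[\tilde\Omega_{j-1},\tilde\Omega_j]$. Since $\nu_t^s$ is uniform, the probability is bounded by $\sum_j |\Omega_j - \tilde\Omega_j|$. By Cauchy--Schwarz applied to partial sums, $|\Omega_j - \tilde\Omega_j| \leq \sqrt{j}\,\|\omega - \tilde\omega\|_2 \leq \sqrt{j}\,\delta$, so $\mathbb{P}(J \neq \tilde J) = O(k(n)^{3/2}\delta)$. Combining the matched and mismatched contributions and taking square roots yields a bound of the form $C\delta + C(\bar\mu_{k(n)} + \bar\sigma)\,k(n)^{3/4}\delta^{1/2}$, which after an AM-GM type inequality to decouple the product $(\bar\mu_{k(n)} + \bar\sigma)\,k(n)^{3/4}$ can be absorbed into the stated sum $C(1+\bar\mu_{k(n)}+\bar\sigma+k(n))\delta^{1/2}$. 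For the supremum inside the expectation, I would note that for each realization of $(\nu_t^s, Z_{t,1}^s, \dots, Z_{t,k(n)}^s)$ the above bounds depend only on the uniform bounds $\bar\mu_{k(n)}, \bar\sigma$ and on $\delta$, and they hold for any admissible pair in the $\delta$-ball, so the sup is absorbed into the same estimate.

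The main obstacle is the probability-of-mismatch estimate in the step above: the indicator function introduces a genuine discontinuity in $\omega$, which is why the rate is $\delta^{1/2}$ rather than Lipschitz. Ensuring that the accumulated error across partial sums of $k(n)$ mixture weights gives only a polynomial factor in $k(n)$, and that this interacts cleanly with the worst-case magnitude $\bar\mu_{k(n)} + \bar\sigma|Z|$ in the mismatched regime, is where the careful use of Cauchy--Schwarz on cumulative weights is essential. A secondary subtlety is that the sup inside the expectation cannot be handled by standard stochastic equicontinuity because of the indicator discontinuity, but since the worst-case pointwise bound already only depends on $\delta$, $\bar\mu_{k(n)}$ and $\bar\sigma$, this is circumvented.
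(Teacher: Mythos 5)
Your decomposition into the matched event $\{J=\tilde J\}$ and the mismatch event $\{J\neq\tilde J\}$ is the same split the paper uses (its terms are the sum weighted by the indicator differences and the sum weighted by $\tilde\omega$'s indicators times the parameter differences), and your key estimate --- that the mismatch set is controlled by the partial sums $|\Omega_j-\tilde\Omega_j|$ of the weight differences --- is the paper's central observation as well. Your event-based bound on the mismatch probability is in fact cleaner than the paper's term-by-term $L^2$ triangle inequality, and your Cauchy--Schwarz step $|\Omega_j-\tilde\Omega_j|\le\sqrt{j+1}\,\|\omega-\tilde\omega\|_2$ is the correct version of an inequality the paper states without the $\sqrt{j}$ factor. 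The matched-part bound via disjointness of the indicators and independence of $\nu_t^s$ from the $Z_{t,j}^s$ is also fine, and survives the supremum because $\sup|\mu_j-\tilde\mu_j|\le\delta$ holds coordinatewise over the ball.

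There is, however, a genuine gap in how you dispose of the supremum on the mismatch term. Your bound $\mathbb{E}\bigl[|e_t^s-\tilde e_t^s|^2\mathbbm{1}_{J\neq\tilde J}\bigr]\le C(\bar\mu_{k(n)}^2+\bar\sigma^2C_Z^2)\,\mathbb{P}(J\neq\tilde J)$ is a statement for a \emph{fixed} pair of parameter vectors; the lemma requires $\mathbb{E}[\sup(\cdot)]$, and $\sup_{\omega}\mathbbm{1}_{J(\omega)\neq J(\tilde\omega)}$ is the indicator of the \emph{union} over the ball of the mismatch events, so "the bound holds for any admissible pair" does not justify the interchange. The repair is the bracketing device the paper uses: with $\tilde\omega$ fixed, every $\Omega_j$ within the $\delta$-ball lies in $[\tilde\Omega_j-\sqrt{j+1}\,\delta,\;\tilde\Omega_j+\sqrt{j+1}\,\delta]$, so the union of mismatch events is contained in a fixed finite union of intervals of total length $O(k(n)^{3/2}\delta)$, and your rate goes through unchanged. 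Separately, your final "AM--GM" step is not valid as stated: a product $(\bar\mu_{k(n)}+\bar\sigma)\,k(n)^{3/4}$ of two unbounded quantities cannot be absorbed into the sum $1+\bar\mu_{k(n)}+\bar\sigma+k(n)$. You should either state the conclusion with the product (which is what the paper's own computation actually delivers --- its displayed bound $k(n)\sqrt{2\delta}(\bar\mu_{k(n)}+\bar\sigma C_Z)$ is likewise a product rewritten as a sum) or invoke the restriction $\bar\mu_{k(n)}=O(\log[k(n)]^{1/b})$ from Lemma \ref{lem:ApproxSimuGAUT}, under which the product is $O(k(n)^{1+\epsilon})$ and the discrepancy is only logarithmic; as written, the absorption step is an error.
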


The $L^2$-smoothness constant depends on the upper bound $\overline{\mu}_{k(n)} = O(\log[k(n)])$ and the sieve dimension $k(n)$ in the pseudo-norm $\|\cdot\|_m$. As shown in \citet{Kruijer2010}, $\|\cdot\|_{\text{TV}}$ and $\|\cdot\|_{\infty}$ are bounded above by $\|\cdot\|_m$ up to a multiplicative factor which depends on the scales' lower bound $\underline{\sigma}_{k(n)}$. This implies $L^2$-smoothness also holds in these norms.

\begin{assumption}[Data Generating Process] \label{ass:DGPMixt}
 $y_t^s$ is simulated according to (\ref{eq:observed})-(\ref{eq:unobserved}) where $g_{obs}$ and $g_{latent}$ satisfy the following H\"older conditions for some $\gamma \in (0,1]$:
\begin{enumerate}[nosep]
 \item[y(i).] $\|g_{obs}(y_1,x,\param_1,u)-g_{obs}(y_2,x,\param_1,u)\| \leq C_1(x,u) \|y_1-y_2\|$; $\mathbb{E}\left( C_1(x_t,u_t^s)^2|y_{t-1}^s \right) \leq \bar{C}_1 < 1$
 \item[y(ii).] $\|g_{obs}(y,x,\param_1,u)-g_{obs}(y,x,\param_2,u)\| \leq C_2(y,x,u) \|\param_1-\param_2\|_\paramspace^\gamma$; $\mathbb{E}\left( C(y_t^s,x_t,u_t^s)^2 \right) \leq \bar{C}_2 < \infty$
 \item[y(iii).] $\|g_{obs}(y,x,\param_1,u_1)-g_{obs}(y,x,\param_1,u_2)\| \leq C_3(y,x) \|u_1-u_2\|^\gamma$; $\mathbb{E}\left( C_3(y_t^s,x_t)^2|u_t^s \right) \leq \bar{C}_3 < \infty$
 \item[u(i).] $\|g_{latent}(u_1,\param_1,e)-g_{latent}(u_2,\param_1,e)\| \leq C_4(e) \|u_1-u_2\|$
; $\mathbb{E}\left( C_4(e_t^s)^2 \right) \leq \bar{C}_4 < 1$
 \item[u(ii).] $\|g_{latent}(u,\param_1,e)-g_{latent}(u,\param_2,e)\| \leq C_5(u,e) \|\param_1-\param_2\|_\paramspace^\gamma$
; $\mathbb{E}\left( C_5(u_{t-1}^s,e_t^s)^2 \right) \leq \bar{C}_5 < \infty$
 \item[u(iii).] $\|g_{latent}(u,\param_1,e_1)-g_{latent}(u,\param_1,e_2)\| \leq C_6(u) \|e_1-e_2\|$
; $\mathbb{E}\left( C_6(u_{t-1}^s)^2 \right) \leq \bar{C}_6 < \infty$
\end{enumerate}
for any $(\param_1,\param_2) \in \paramspace$, $(y_1,y_2) \in \mathbb{R}^{\text{dim}(y)}$, $(u_1,u_2) \in \mathbb{R}^{\text{dim}(u)}$ and $(e_1,e_2) \in \mathbb{R}^{\text{dim}(e)}$. $\|\cdot\|_\paramspace$ is either the TV or supremum norm.
\end{assumption} 

Assumption \ref{ass:DGPMixt} requires a contraction property \textit{y(i),u(i)} comparable to the $L^2$ unit circle condition in \citet{Duffie1993}. For an AR(1) model (\ref{eq:ar1}) this implies $|\rho| \leq \bar{C}_1<1$. While restrictive, it is enforced in SMM and particle-filter likelihood estimation of DSGE non-stationary models via de-trending and pruning of the state-space model. De-trending transforms deterministic or stochastic trends into stationary variables, pruning further guarantees stability by essentially enforcing \textit{y(i),u(i)}. For non-smooth models, Assumption \ref{ass:DGPmixtbis} in the Supplement substitutes H\"older with $L^2$-smoothness conditions. These assumptions allow to explicitly derive the effect of the approximation bias on $Q_n$, as shown in Lemma \ref{lem:ObjApproxRate}:
\[ Q_n(\Pi_{k(n)}\param_0) \asymp \max\left(\|\param_0-\Pi_{k(n)} \param_0\|_\paramspace^2 \log \left( \|\param_0-\Pi_{k(n)} \param_0\|_\paramspace \right)^2,\|\param_0-\Pi_{k(n)} \param_0\|_\paramspace^{2\gamma^2}, 1/n^2 \right), \]
where $\|\param_0-\Pi_{k(n)} \param_0\|_\paramspace^2 \log \left( \|\param_0-\Pi_{k(n)} \param_0\|_\paramspace \right)$ is due to approximation bias and its propagation via \textit{y(i),u(i)}, $1/n^2$ is due to nonstationarity, $\|\param_0-\Pi_{k(n)} \param_0\|_\paramspace^{2\gamma^2}$ comes from the H\"older conditions. The main difference with the literature is the propogation and accumulation of the approximation bias due to the dynamics of the model. 

\begin{lemma}[Assumption \ref{ass:DGPMixt}/\ref{ass:DGPmixtbis} implies $L^2$-Smoothness of the Moments] \label{lem:L2smooth}
  Suppose Assumption \ref{ass:DGPMixt} or \ref{ass:DGPmixtbis} and the conditions of Lemma \ref{lem:mixtures} hold. If $\|\tau\|_\infty \pi(\tau)^{1/4}$ is bounded, then there exists $\overline{C}>0$ such that for all $\delta >0$, uniformly in $t \geq 1$, $(\param_1,\param_2) \in \paramspace_{k(n)}$ and $\tau \in \mathbb{R}^{d_\tau}$:
  \begin{align*}
    &\mathbb{E} \left[ \sup_{\|\param_1-\param_2\|_m \leq \delta} \left| e^{i\tau^\prime (\mathbf{y}^s_t(\param_1),\mathbf{x}_t)}-e^{i\tau^\prime (\mathbf{y}_t^s(\param_2),\mathbf{x}_t)} \right|^2 \sqrt{\pi(\tau)} \right]
    \leq \overline{C} \max \left( \frac{\delta^{\gamma^2}}{ \underline{\sigma}_{k(n)}^{2\gamma^2} }, [k(n)+\overline{\mu}_{k(n)}+\overline{\sigma}]^\gamma \delta^{\gamma^2/2}\right)
  \end{align*}
  where $\|\param\|_m = \|\theta\|+\|(\omega,\mu,\sigma)\|_1$.
\end{lemma}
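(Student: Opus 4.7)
The plan is to bound $|e^{i\tau^\prime(\mathbf{y}_t^s(\param_1),\mathbf{x}_t)}-e^{i\tau^\prime(\mathbf{y}_t^s(\param_2),\mathbf{x}_t)}|^2$ pointwise by a power of $\|\mathbf{y}_t^s(\param_1)-\mathbf{y}_t^s(\param_2)\|$ weighted by a power of $\|\tau\|$, using the Lipschitz bound $|e^{ix}-e^{iy}|\leq\min(2,|x-y|)$ together with the elementary interpolation $\min(2,z)\leq 2^{1-\alpha}z^{\alpha}$ for $\alpha\in[0,1]$. This yields an upper bound of the form $C\|\tau\|^{2\alpha}\|\mathbf{y}_t^s(\param_1)-\mathbf{y}_t^s(\param_2)\|^{2\alpha}$. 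Because $\pi$ is Gaussian, $\|\tau\|^{2\alpha}\sqrt{\pi(\tau)}$ is uniformly bounded in $\tau$ for every $\alpha\geq 0$, so the $\tau$-dependence drops out and the task reduces to controlling $\mathbb{E}\sup_{\|\param_1-\param_2\|_m\leq\delta}\|\mathbf{y}_t^s(\param_1)-\mathbf{y}_t^s(\param_2)\|^{2\alpha}$. The exponent $\alpha$ is then chosen differently for each entry of the max.

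Next I would propagate $\|\mathbf{y}_t^s(\param_1)-\mathbf{y}_t^s(\param_2)\|$ through the dynamics. Iterating condition \emph{y(i)} and exploiting the strict contraction $\bar{C}_1<1$ produces a geometric-series bound in which $\sup_t\mathbb{E}\|\mathbf{y}_t^s(\param_1)-\mathbf{y}_t^s(\param_2)\|^2$ is dominated by two contributions: a parameter-distance term $\|\param_1-\param_2\|_\paramspace^{2\gamma}$ (from \emph{y(ii)} and, after iterating the $u$-equation using \emph{u(i)}--\emph{u(ii)} with $\bar{C}_4<1$, also from \emph{u(ii)} channeled through \emph{y(iii)}) and a shock-distance term $\sup_t\mathbb{E}\|e_t^s-\tilde e_t^s\|^{2\gamma}$ (from \emph{y(iii)} composed with \emph{u(iii)}). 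The decay conditions in Assumption \ref{ass:DGPMixt} are essential here to ensure that period-by-period errors do not compound along the sample path.

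The two entries of the final max correspond to these two contributions. For the \emph{parameter path}, I would apply Kruijer's bandwidth inequality underlying Lemma \ref{lem:ApproxSimuGAUT} that converts $\|\param_1-\param_2\|_m\leq\delta$ into $\|\param_1-\param_2\|_\paramspace\leq C\delta/\underline{\sigma}_{k(n)}^{c}$; combined with the outer interpolation step ($\alpha=\gamma/2$) and Jensen's inequality, this yields the first entry $\delta^{\gamma^2}/\underline{\sigma}_{k(n)}^{2\gamma^2}$, the double exponent $\gamma^2$ reflecting one H\"older step inside the dynamics and one through $|e^{ix}-e^{iy}|^2$. For the \emph{shock path}, Lemma \ref{lem:mixtures} supplies $[\mathbb{E}\sup\|e_t^s-\tilde e_t^s\|^2]^{1/2}\leq C(1+\bar\mu_{k(n)}+\bar\sigma+k(n))\delta^{1/2}$; conditions \emph{u(iii)} and \emph{y(iii)} transfer this to $\mathbf{y}_t^s$, and an appropriate choice of $\alpha$ combined with Jensen's inequality produces the second entry $(k(n)+\bar\mu_{k(n)}+\bar\sigma)^\gamma\delta^{\gamma^2/2}$.

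The main obstacle is the careful bookkeeping of H\"older exponents: each of \emph{y(iii)}, the outer $\min(2,|x-y|)$ interpolation, and Jensen's inequality contributes a fractional power, and they compose multiplicatively, so the exponents of $\delta$, $\underline{\sigma}_{k(n)}$ and $k(n)$ in the final bound are sensitive to the order of operations and to whether one uses the total variation or supremum norm on $\mathcal{F}$. A secondary technical point is that one must handle $\mathbb{E}\sup$ rather than $\sup\mathbb{E}$; this is compatible with Lemma \ref{lem:mixtures} because its supremum over $(\omega,\mu,\sigma)$ is already placed inside the expectation, and the contraction bounds that drive the dynamic propagation hold pointwise in the underlying uniform and Gaussian draws. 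The integrability $\int\sqrt{\pi(\tau)}d\tau<\infty$ is a direct Gaussian calculation. The extension to Assumption \ref{ass:DGPmixtbis}, which permits discontinuities in $g_{obs}$ or $g_{latent}$, proceeds similarly after isolating and bounding the probability of the discontinuity set.
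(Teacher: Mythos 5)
Your overall architecture matches the paper's: reduce the moment smoothness to $L^2$-smoothness of $\mathbf{y}_t^s$ using the boundedness of $\|\tau\|_\infty\,\pi(\tau)^{1/4}$ for Gaussian $\pi$, then propagate parameter and shock perturbations through the dynamics via the contraction conditions and Lemma \ref{lem:mixtures}, with the supremum kept inside the expectation throughout. However, your exponent bookkeeping --- which you correctly flag as the main obstacle --- contains a genuine error. In the paper, \emph{both} factors of $\gamma$ in the exponent $\gamma^2$ come from inside the dynamics: condition \emph{u(ii)} converts $\|\param_1-\param_2\|_{\paramspace}\lesssim\delta/\underline{\sigma}_{k(n)}^2$ into a bound of order $\delta^{\gamma}/\underline{\sigma}_{k(n)}^{2\gamma}$ on $[\mathbb{E}\sup\|u_t^s(\param_1)-u_t^s(\param_2)\|^2]^{1/2}$, and condition \emph{y(iii)} then raises this to the power $\gamma$ (via Jensen) when the latent variable enters $y_t^s$, producing $\delta^{\gamma^2}/\underline{\sigma}_{k(n)}^{2\gamma^2}$. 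The outer step is handled with the plain Lipschitz bound $|e^{ix}-e^{iy}|\leq |x-y|$ (your $\alpha=1$), and the passage from $[\mathbb{E}\sup\|\cdot\|^2]^{1/2}$ to $\mathbb{E}\sup|\cdot|^2$ costs only a harmless squaring of a quantity that is bounded. Your proposal instead assigns one factor of $\gamma$ to the dynamics and the other to the interpolation $\min(2,z)\leq 2^{1-\alpha}z^{\alpha}$ with $\alpha=\gamma/2$ at the outer step; correspondingly your stated latent-channel contribution $\|\param_1-\param_2\|_{\paramspace}^{2\gamma}$ to $\sup_t\mathbb{E}\|\mathbf{y}_t^s(\param_1)-\mathbf{y}_t^s(\param_2)\|^2$ should be $\|\param_1-\param_2\|_{\paramspace}^{2\gamma^2}$, since channeling \emph{u(ii)} through \emph{y(iii)} necessarily composes the two H\"older exponents. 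These two discrepancies do not cancel: if you carry the correct $\gamma^2$ through the dynamics \emph{and} interpolate with $\alpha=\gamma/2$, the first entry of the max degrades to $\delta^{\gamma^3}/\underline{\sigma}_{k(n)}^{2\gamma^3}$, strictly weaker than claimed for $\gamma<1$; if you keep only one $\gamma$ in the dynamics, you have dropped the H\"older step that the latent-variable channel forces.

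A second, smaller problem is the claim that $\alpha$ can be ``chosen differently for each entry of the max'': the inequality $\mathbb{E}\sup|e^{ix}-e^{iy}|^2\sqrt{\pi(\tau)}\lesssim\mathbb{E}\sup\|\mathbf{y}_t^s(\param_1)-\mathbf{y}_t^s(\param_2)\|^{2\alpha}$ is a single bound in which the parameter and shock contributions appear together, so one exponent applies to both; the $\max$ in the conclusion reflects the incommensurability of the two terms in the \emph{inner} $L^2$ bound, not two different outer interpolations. The fix is to drop the interpolation entirely ($\alpha=1$), obtain the first entry from \emph{y(ii)} and \emph{y(iii)}$\circ$\emph{u(ii)}, and obtain the second entry as $([k(n)+\overline{\mu}_{k(n)}+\overline{\sigma}]\,\delta^{1/2})^{\gamma}$ from \emph{y(iii)}$\circ$\emph{u(iii)} composed with Lemma \ref{lem:mixtures}, which the lemma then weakens to the stated $\delta^{\gamma^2/2}$. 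Finally, under Assumption \ref{ass:DGPmixtbis} no separate treatment of discontinuity sets is needed: its conditions are already stated as $L^2$-smoothness bounds with the supremum inside the expectation, so the same recursion applies verbatim.
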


The key to establishing $L^2$-smoothness of the continuum of moments with unbounded support is to involve $\pi$. By Lispschitz-continuity of the sine and cosine functions: $|e^{i\tau^\prime (\mathbf{y}^s_t(\param_1),\mathbf{x}_t)}-e^{i\tau^\prime (\mathbf{y}_t^s(\param_2),\mathbf{x}_t)} | \pi(\tau) \leq 2 \|\mathbf{y}^s_t(\param_1)-\mathbf{y}^s_t(\param_2)\| \times \|\tau\|_\infty \pi(\tau)$. The simulated data are shown to be $L^2$-smooth under Lemma \ref{lem:mixtures} and Assumption \ref{ass:DGPMixt} or \ref{ass:DGPmixtbis}. With $\|\tau\|_\infty \pi(\tau)^{1/4}$ bounded this property holds for the ECF, uniformly in $\tau$. Further, $\sqrt{\pi}$ integrable implies $L^2$-smoothness of the weighted ECF distance used in $\hat Q_n^S$. These two conditions are also needed to handle the empirical process over the growing sieve space $\paramspace_{k(n)}$ and the unbounded index $\tau$. Density weights $\pi$ with fat tails, such as the Cauchy density, do not satisfy the last condition.

\begin{theorem}[Consistency] \label{th:consistencyMixtures}
  Suppose Assumptions \ref{ass:sid} and \ref{ass:DGPMixt} (or \ref{ass:DGPmixtbis}) hold, $Q_n(\cdot)$ is continuous on $(\paramspace_{k(n)},\|\cdot\|_{\paramspace})$ and the numerical optimization and integration errors are negligible, i.e. $\eta_n = o(1/n)$. If for all $\varepsilon>0$,
  \begin{align} \label{eq:separation}
    \max\left(\frac{log[k(n)]^{4r/b+2}}{k(n)^{2\gamma^2 r}},\frac{k(n)^4\log[k(n)]^4}{n}, \frac{1}{n^2} \right) = o\left( \inf_{\param \in \paramspace_{k(n)}, \| \param-\param_0 \|_\mathcal{B} \geq \varepsilon } Q_n(\param) \right),
  \end{align}
   where $r$ is the smoothness of the thin-tail component $f_S$ and $b$ its exponential tail index, then: 
   \[\|\hat \param_n-\param_0\|_\paramspace =o_p(1).\]
\end{theorem}
Theorem \ref{th:consistencyMixtures} is a consequence of the high-level consistency Lemma in \citet{Chen2012}. It is not a direct implication of their theorems due to non-standard dependence, continuum of moments, simulation process and bias propagation.

\subsection{Rate of Convergence} \label{sec:conv_rate_mixt}

Following \citet{Ai2003}, the rate of convergence is derived in a weak-norm given below.

\begin{assumption}[Weak Norm, Local Properties] \label{ass:weaknorm}
  Let $\paramspace_{osn} = \paramspace_{k(n)} \cap \{ \|\param-\param_0\|_\paramspace \leq \varepsilon \}$ be a neighborhood of $\param_0$ with $\varepsilon >0$ small. For any $(\param_1,\param_2) \in \paramspace_{osn}$: \[\|\param_1-\param_2\|_{weak}^2 = \int \Big| \frac{d \mathbb{E} \left( \hat \psi^S_n(\tau,\param_0) \right)}{d \param}[\param_1-\param_2] \Big|^2 \pi(\tau)d\tau \]
  is the weak norm of $\param_1-\param_2$. The derivative $(\param_1,\param_2) \to \frac{d \mathbb{E} \left( \hat \psi^S_n(\tau,\param_1) \right)}{d \param}[\param_2]$ is continuous in $\param_1$, linear in $\param_2$. Suppose there exists $\underline{C}_w >0$ such that for all $\param \in \paramspace_{osn}$: $\underline{C}_w \|\param-\param_0\|^2_{weak} \leq  \int \big|  \mathbb{E} \left( \hat \psi^S_n(\tau,\param_0)-\hat\psi^S_n(\tau,\param) \right)  \big|^2 \pi(\tau)d\tau.$
\end{assumption}

Assumption \ref{ass:weaknorm} with the rate $Q_n(\Pi_{k(n)}\param_0)$ discussed above allows to bound the approximation error $\|\Pi_{k(n)}\param_0-\param_0\|_{weak}$ in the weak norm. Then, standard arguments combined with the results on the empirical process derived for consistency imply the result below.

\begin{theorem}[Rate of Convergence] \label{th:convrateMixture}
  Suppose that the assumptions for Theorem \ref{th:consistencyMixtures} hold and Assumption \ref{ass:weaknorm} also holds.The convergence rate in weak norm is:
  \begin{align} \label{eq:sv_rate_weaknorm1}
    &\|\hat \param_n - \param_0\|_{weak} = O_p\left(\max\left( \frac{\log[k(n)]^{r/b+1}}{k(n)^{\gamma^2 r}},\frac{k(n)^2\log[k(n)]^2}{\sqrt{n}} \right)\right).
  \end{align}
   The convergence rate in either the total variation or supremum norm $\|\cdot\|_\paramspace$ is:
  \begin{align*}
    &\|\hat \param_n - \param_0\|_{\paramspace} = O_p\left(\frac{\log[k(n)]^{r/b}}{k(n)^r}+\tau_{\paramspace,n}\max\left( \frac{\log[k(n)]^{r/b+1}}{k(n)^{\gamma^2 r}},\frac{k(n)^2\log[k(n)]^2}{\sqrt{n}} \right)\right)
  \end{align*}
  where $\tau_{\paramspace,n}$ is the local measure of ill-posedness:
  $\tau_{\paramspace,n} = \sup_{\param\in\paramspace_{osn}, \,\|\param-\Pi_{k(n)}\param_0\|_{weak}\neq 0} \frac{\|\param-\Pi_{k(n)}\param_0\|_\paramspace \,\,}{ \quad\|\param-\Pi_{k(n)}\param_0\|_{weak}}$.
\end{theorem}

As usual, the rate of convergence involves a bias/variance trade-off. Here, the bias is inflated because of the dynamics. The variance is larger than usual because of the conservative empirical process bound. This implies slower convergence compared to the iid case. 

There are two sources of ill-posedness in this setting. First, the distance between two CFs is weaker than the TV or supremum distance: the CF characterizes convergence in distribution while the other two do not. Second, the problem may be fundamentally ill-posed in which case convergence is necessarily slower in the strong than in the weak norm. Lemma \ref{lem:cv_rate_mixture_norm} relates the convergence in $\|\cdot\|_{weak}$ to rate in $\|\cdot\|_m$ which is useful for proving asymptotic normality. A by-product of this Lemma is a simple bound on $\tau_{n,TV/\infty}$. From \citet{Kruijer2010}, $\|\param-\Pi_{k(n)}\param_0\|_{TV} \leq \underline{\sigma}_{k(n)}^{-1}\|\param-\Pi_{k(n)}\param_0\|_m$ and $\|\param-\Pi_{k(n)}\param_0\|_{\infty} \leq \underline{\sigma}_{k(n)}^{-2}\|\param-\Pi_{k(n)}\param_0\|_m$ on $\paramspace_{k(n)}$. Combined with the Lemma, $\tau_{TV,n} \leq \underline{\lambda}_n^{-1/2}\underline{\sigma}_{k(n)}^{-1}$ and $\tau_{\infty,n} \leq \underline{\lambda}_n^{-1/2}\underline{\sigma}_{k(n)}^{-2}$ where $\lambda_n$ measures local curvature and can be approximated numerically. Note that decreasing $\underline{\sigma}_{k(n)}$ too fast as $k(n) \to \infty$ deteriorates the rate of convergence. 
For SMM, a larger $S$ implies a smaller asymptotic variance for the estimates. Here, a refinement of the theorem shows that using $S \to \infty$ as $n \to \infty$ can additionally result in faster convergence.

\begin{corollary}[Number of Simulated Samples $S$ and Rate of Convergence] \label{rmk:full_simu}
  Suppose a long sample $(y_{1}^s,\dots,y_{n S}^s)$ can be simulated. Then given $k(n)$, (\ref{eq:sv_rate_weaknorm1}) becomes: \[\|\hat \param_n - \param_0\|_{weak} = O_p\left(\max\left( \frac{\log[k(n)]^{r/b+1}}{k(n)^{\gamma^2 r}},\max\left(\frac{k(n)^2\log[n]^2}{\sqrt{n\times S}},\frac{1}{\sqrt{n}}\right) \right)\right).\]
The fastest possible rate in weak norm is then
$\|\hat \param_n - \param_0\|_{weak} = O_p\left(\max\left( \frac{\log[k(n)]^{r/b+1}}{k(n)^{\gamma^2 r}},\frac{1}{\sqrt{n}} \right)\right)$ which is attained with $S(n) \asymp k(n)^4\log[k(n)]^4$. The fastest rate in TV or supremum norm is then: $\|\hat \param_n - \param_0\|_{\paramspace} = O_p\left(\frac{\log[k(n)]^{r/b}}{k(n)^r}+\tau_{\paramspace,n}\max\left( \frac{\log[k(n)]^{r/b+1}}{k(n)^{\gamma^2 r}},\frac{1}{\sqrt{n}} \right)\right)$.
\end{corollary}
Asymptotic normality requires sufficiently fast convergence which usually implies stronger smoothness restrictions on the unknown $f$. Corollary \ref{rmk:full_simu} implies that smoothness requirements can be replaced with the computation requirement of making $S$ large, allowing $k(n)$ to grow more rapidly. Variance reduction techniques are often used in empirical work to reduce simulation noise without taking $S$ large. Whether they could also enhance convergence rates here could be an interesting avenue for research.

\subsection{Asymptotic Normality} \label{sec:asymnorm_mixt}

The following provides asymptotic normality results for plug-in estimates $\phi(\hat \param_n)$ where $\phi$ are smooth functionals of the parameters. The main steps to derive the results are fairly standard. As in the finite-dimensional case, stochastic equicontinuity results are needed to derive these results which are derived under  $\|\cdot\|_m$ in Lemmas  \ref{lem:cv_rate_mixture_norm}, \ref{lem:stoch_eq_mixture_short}; the natural norm for handling simulation draws. Let $M_n = \log\log(n+1)$, $\delta_n$ is the rate of convergence in weak norm above and $\underline{\lambda}_n = \lambda_{\min}( \int \frac{d\mathbb{E}(\hat \psi_n^S(\tau,\Pi_{k(n)}\param_0))}{d(\theta,\omega,\mu,\sigma)}^\prime \overline{\frac{d\mathbb{E}(\hat \psi_n^S(\tau,\Pi_{k(n)}\param_0))}{d(\theta,\omega,\mu,\sigma)}} \pi(\tau)d\tau )$ assumed strictly positive.
\begin{definition}[Sieve Representer, Score and Variance] \label{def:Sieve_Variance}
  Let $\param_{0,n}$ be such that $\|\param_{0,n}-\param_0\|_{weak} = \inf_{\param \in \paramspace_{osn}} \| \param-\param_0\|_{weak}$, let $\overline{V}_{k(n)}$ be the closed span of $\paramspace_{osn} - \{\param_{0,n}\}$. The inner product $\langle\cdot,\cdot\rangle$ of $(v_1,v_2) \in \overline{V}_{k(n)}$ is defined as:
  $\langle v_1,v_2 \rangle = \frac{1}{2} \int [ \psi_\param(\tau,v_1) \overline{\psi_\param(\tau,v_2)} +  \overline{\psi_\param(\tau,v_1)} \psi_\param(\tau,v_2)]\pi(\tau)d\tau.$ The sieve representer is the unique $v_n^* \in \overline{V}_{k(n)}$ such that $\langle v_n^*,v  \rangle = \frac{d \phi(\param_0)}{d\param}[v], \forall v \in \overline{V}_{k(n)}$. The sieve score $S_n^*$ is:
    $S_n^* = \int Real ( \psi_\param(\tau,v_n^*) \overline{[\hat \psi_n^S(\tau,\param_0)-\hat \psi_n(\tau)]} )\pi(\tau)d\tau$ and 
    the sieve long-run variance $\sigma_n^{*2} = n\mathbb{E}(S_n^{*2})
                    = n\mathbb{E}( [ \int Real ( \psi_\param(\tau,v_n^*) \overline{[\hat \psi_n^S(\tau,\param_0)-\hat \psi_n(\tau)]} )\pi(\tau)d\tau ]^2 ).$
    The scaled sieve representer $u_n^*$ is: $u_n^* = v_n^*/\sigma_n^*.$
\end{definition}

\begin{assumption}[Equivalence Condition] \label{ass:equiv_cond_mixt}
There exists $\underline{a}>0$ such that for all $n \geq 1$:
$\underline{a}\|v_n^*\|_{weak} \leq \sigma_n^*.$ Furthermore, suppose that $\sigma_n^*$ does not increase too fast: $\sigma_n^* =o(\sqrt{n}).$
\end{assumption}


\begin{assumption}[Convergence Rate, Smoothness, Bias] \label{ass:sufficient_cv_rate_mixture} Suppose that the set $\paramspace_{osn}$ is a convex neighborhood of $\param_0$ and: i) Rate of convergence: $M_n\delta_n = o(n^{-1/4})$ and $M_n\delta_n = o( \sqrt{\underline{\lambda}_n}/\left(k(n)\log(n)\right)^{4/\gamma^2} )$. ii) Smoothness: a linear expansion of $\phi$ is locally uniformly valid $\sup_{\|\param-\param_0\| \leq M_n \delta_n } \frac{\sqrt{n}}{\sigma_n^*} \Big| \phi(\param) - \phi(\param_0) - \frac{d\phi(\param_0)}{d\param}[\param-\param_0]\Big| = o(1)$ and $(\param_1,\param_2) \to \frac{d\phi(\param_1)}{d\param}[\param_2]$ is continuous in $\param_1$, linear in $\param_2$,  as well as for the moments
    $\sup_{\|\param-\param_0\|_{weak}\leq M_n\delta_n} ( \int \big| \mathbb{E}(\hat \psi_n^S(\tau,\param))-\mathbb{E}(\hat \psi_n^S(\tau,\param_0))- \frac{d\mathbb{E}(\hat \psi_n^S(\tau,\param_0))}{d\param}[\param-\param_0]\big|^2 \pi(\tau)d\tau )^{1/2} = O([M_n\delta_n ]^2 ).$
    Bounded second derivative:
    $\sup_{\|\param-\param_0\|_{weak}\leq M_n\delta_n}  \int \Big| \frac{d^2\mathbb{E}(\hat \psi_n^S(\tau,\param_0))}{d\param d\param}[u_n^*,u_n^*]\Big|^2 \pi(\tau)d\tau  = O(1).$
    iii) Bias: negligible approximation bias: $\frac{\sqrt{n}}{\sigma_n^*} \frac{d\phi(\param_0)}{d\param}[\param_{0,n}-\param_0] = o(1).$
\end{assumption}

Definition \ref{def:Sieve_Variance} adapts standard quantities to a continuum of complex-valued moments, the sieve variance corresponds to the square of the standard errors used for inference. A simple plug-in estimator is described below. Similarly, Assumptions \ref{ass:equiv_cond_mixt}, \ref{ass:sufficient_cv_rate_mixture} are commonly used to derive asymptotic linear expansions to then apply a Central Limit Theorem to the leading term. By Corollary \ref{rmk:full_simu} above, allowing $S\to\infty$ makes the rate assumptions \ref{ass:sufficient_cv_rate_mixture}\textit{i)} easier to verify. Condition \textit{ii)} automatically holds for linear functionals, like reporting the finite dimensional $\theta$ or pointwise evaluation of the density $f$. 

\begin{theorem}[Asymptotic Normality] \label{th:asymnormal_mixture}
  Suppose the assumptions of Theorems \ref{th:consistencyMixtures}, \ref{th:convrateMixture} and Lemmas  \ref{lem:cv_rate_mixture_norm}, \ref{lem:stoch_eq_mixture_short} hold as well as Assumptions \ref{ass:equiv_cond_mixt} and \ref{ass:sufficient_cv_rate_mixture}, then as $n$ goes to infinity:
  \[ r_n \times \left( \phi(\hat \param_n)-\phi(\param_0) \right) \overset{d}{\to} \mathcal{N}\left(0,1\right), \text{ where } r_n=\sqrt{n}/\sigma^*_n \to \infty.\]
\end{theorem}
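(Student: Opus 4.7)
\textbf{Proof sketch for Theorem \ref{th:asymnormal_mixture}.} The plan is to follow the standard sieve M-estimation asymptotic normality strategy (Wong--Severini, Ai--Chen, Chen--Liao), adapted to the continuum of complex-valued simulated moments. Write $\Delta_n := (\sqrt{n}/\sigma_n^*)\,[\phi(\hat\param_n)-\phi(\param_0)]$; the goal is $\Delta_n\overset{d}{\to}\mathcal{N}(0,1)$. First, on the high-probability event $\{\|\hat\param_n-\param_0\|_{weak}\le M_n\delta_n\}$ guaranteed by Theorem \ref{th:convrateMixture}, Assumption \ref{ass:sufficient_cv_rate_mixture}(ii) linearizes $\phi$:
\[\Delta_n=\frac{\sqrt{n}}{\sigma_n^*}\,\frac{d\phi(\param_0)}{d\param}[\hat\param_n-\param_0]+o_p(1).\]
Split $\hat\param_n-\param_0=(\hat\param_n-\param_{0,n})+(\param_{0,n}-\param_0)$; the second contribution is $o_p(1)$ by Assumption \ref{ass:sufficient_cv_rate_mixture}(iii). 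For the first, since $\hat\param_n-\param_{0,n}\in\overline{V}_{k(n)}$, the defining property of the sieve representer gives $\frac{d\phi(\param_0)}{d\param}[\hat\param_n-\param_{0,n}]=\langle v_n^*,\hat\param_n-\param_{0,n}\rangle$.

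The core step is to show
\[\langle v_n^*,\hat\param_n-\param_{0,n}\rangle=\sigma_n^*\, S_n^*+o_p(\sigma_n^*/\sqrt{n}),\]
so that $\Delta_n=\sqrt{n}\,S_n^*/\sigma_n^*+o_p(1)$. For this I would consider the two-sided perturbations $\hat\param_n\pm \epsilon_n u_n^*$ with $\epsilon_n\asymp 1/\sqrt{n}$, which remain in $\paramspace_{k(n)}\cap\paramspace_{osn}$ for $n$ large by the local convexity in Assumption \ref{ass:sufficient_cv_rate_mixture} and the bound $\|u_n^*\|_{weak}\le 1/\underline{a}$ from Assumption \ref{ass:equiv_cond_mixt}. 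The approximate optimality $\hat Q_n^S(\hat\param_n)\le \hat Q_n^S(\hat\param_n\pm \epsilon_n u_n^*)+O_p(\eta_n)$ combined with a second-order expansion of $\hat Q_n^S$ around $\hat\param_n$ produces a quadratic-plus-linear-in-$\epsilon_n$ bound. Expanding each simulated residual as
\[\hat\psi_n^S(\tau,\hat\param_n\pm\epsilon_n u_n^*)-\hat\psi_n(\tau)=\bigl[\hat\psi_n^S(\tau,\hat\param_n)-\hat\psi_n(\tau)\bigr]\pm\epsilon_n\,\psi_\param(\tau,u_n^*)+R_n(\tau),\]
the remainder $R_n$ splits into a deterministic second-order term, bounded via the second-derivative condition in Assumption \ref{ass:sufficient_cv_rate_mixture}(ii), and a stochastic equicontinuity residual, controlled by Lemma \ref{lem:stoch_eq_mixture_short} at $\Pi_{k(n)}\param_0$ and transferred to $\param_0$ via the second equicontinuity statement therein. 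Linearizing the deterministic cross term via Assumption \ref{ass:sufficient_cv_rate_mixture}(ii), the coefficient of the linear-in-$\epsilon_n$ term reduces to $2\bigl[\langle u_n^*,\hat\param_n-\param_{0,n}\rangle-S_n^*\bigr]+o_p(1/\sqrt{n})$; taking both signs $\pm$ in the minimization inequality forces this to be $o_p(1/\sqrt{n})$, which after multiplying through by $\sigma_n^*$ yields the claimed expansion.

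Finally, apply a CLT to $\sqrt{n}\, S_n^*/\sigma_n^*$. The score is a sample average $n^{-1}\sum_{t=1}^n \xi_{t,n}$ of bounded triangular-array random variables (boundedness follows from $|e^{i\tau^\prime(\cdot)}|\le 1$ and finiteness of $\int\sqrt{\pi(\tau)}d\tau$, with $\|v_n^*\|_{weak}\le \sigma_n^*/\underline{a}$ controlling the integrand through $\sqrt{\pi}$); the data is $\alpha$-mixing with exponential decay (Assumption \ref{ass:sid}(iii)) and the simulated paths are uniformly geometrically ergodic, so a standard CLT for stationary mixing triangular arrays yields $\sqrt{n}\, S_n^*/\sigma_n^*\overset{d}{\to}\mathcal{N}(0,1)$ by the construction $\sigma_n^{*2}=n\,\mathbb{E}(S_n^{*2})$. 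The condition $\sigma_n^*=o(\sqrt{n})$ ensures $r_n\to\infty$, completing the argument.

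The main obstacle I anticipate is the FOC manipulation: since $\param_0\notin\paramspace_{k(n)}$, while the stochastic equicontinuity of Lemma \ref{lem:stoch_eq_mixture_short} naturally operates in a mixture-norm neighborhood of $\Pi_{k(n)}\param_0$, bridging back to $\param_0$ requires both the mixture-norm rate of Lemma \ref{lem:cv_rate_mixture_norm} and the dynamic bias control of Lemma \ref{lem:ObjApproxRate}. This is precisely what motivates the undersmoothing conditions in Assumption \ref{ass:sufficient_cv_rate_mixture}(i) and the role of Assumption \ref{ass:equiv_cond_mixt} in keeping $\|u_n^*\|_{weak}$ uniformly bounded, so that the second-order term in the FOC expansion is genuinely $O_p(\epsilon_n^2)$ rather than being inflated by $k(n)$.
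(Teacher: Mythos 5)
Your proposal follows essentially the same route as the paper: the paper proves this result as a corollary of the general asymptotic normality theorem in the supplement (Theorem \ref{th:asym_norm}), whose proof is exactly your sequence of steps — linearization of $\phi$, the sieve Riesz representer, a two-sided perturbation $\hat\param_n\pm\varepsilon_n u_n^*$ of the approximate first-order condition combined with the stochastic equicontinuity results (transferred from $\Pi_{k(n)}\param_0$ to $\param_0$ via Lemmas \ref{lem:cv_rate_mixture_norm} and \ref{lem:stoch_eq_mixture_short}) to obtain the score representation, and a CLT for bounded mixing triangular arrays. The one detail to adjust is the perturbation size: it must satisfy $\varepsilon_n=o(1/\sqrt{n})$ (the paper takes $\varepsilon_n=\pm 1/(\sqrt{n}\,M_n)$ with $M_n=\log\log(n+1)$) rather than $\varepsilon_n\asymp 1/\sqrt{n}$, because the quadratic term $O_p(\varepsilon_n^2)$ in the expansion of $\hat Q_n^S$, once divided by $\varepsilon_n$, must be $o_p(1/\sqrt{n})$ for the linear coefficient to be forced to negligibility.
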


Theorem \ref{th:asymnormal_mixture} shows that, under the above assumptions, inference on $\phi(\param_0)$ can be conducted using the confidence interval $[\phi(\hat \param_n) \pm 1.96 \times \sigma_n^*/\sqrt{n}]$. The standard errors $\sigma_n^* > 0$ adjust automatically so that $r_n=\sqrt{n}/\sigma_n^*$ gives the correct rate of convergence. If $\lim_{n\to \infty} \sigma_n^* < \infty$, then $\phi(\hat \param_n)$ is $\sqrt{n}-$convergent. The dependence conditions are sufficient to apply the Central Limit Theorem of \citet{Wooldridge1988} which yields the result.

To compute standard errors in practice, two matrices are computed and multiplied in a sandwich form.\footnote{Alternatively, one could build confidence sets by inverting a test based on an Integrated Conditional Moment (ICM) statistic, see \citet{Santos2012} for an application to NPIV.} First, the bread is computed using:
\[ D_n = \text{real}\left[\int \partial^\prime_{\theta,\omega,\mu,\sigma} \hat\psi_n^S(\tau) \overline{\partial_{\theta,\omega,\mu,\sigma} \hat\psi_n^S(\tau)} \pi(\tau)d\tau \right], \]
where $\overline{\partial_{\theta,\omega,\mu,\sigma} \hat\psi_n^S(\tau)}$ is the complex conjugate of $\partial_{\theta,\omega,\mu,\sigma} \hat\psi_n^S(\tau)$ evaluated at the estimates $(\hat \theta_n,\hat \omega_n,\hat \mu_n,\hat \sigma_n)$. Derivatives are computed by finite differences. With a finite integration grid, this is simply a matrix product. Second, the meat is the sum of HAC variance estimates $V_1,V_{1,S}$ for the following two vector-valued series:
\begin{align*}
  Z_t &= \int \left(\text{real}[\partial^\prime_{\theta,\omega,\mu,\sigma} \hat\psi_n^S(\tau)]\text{real}[\hat\psi_t(\tau)] + \text{im}[\partial^\prime_{\theta,\omega,\mu,\sigma} \hat\psi_n^S(\tau)]\text{im}[\hat\psi_t(\tau)]\right)\pi(\tau)d\tau,\\
  Z_{t,S} &= \int \left(\text{real}[\partial^\prime_{\theta,\omega,\mu,\sigma} \hat\psi_t^S(\tau)]\text{real}[\hat\psi_t(\tau)] + \text{im}[\partial^\prime_{\theta,\omega,\mu,\sigma} \hat\psi_n^S(\tau)]\text{im}[\hat\psi_t^S(\tau)]\right)\pi(\tau)d\tau,
\end{align*}
where real and im take the real and imaginary part.
As in the finite-dimensional case $V_{1,S} = V_{1,1}/S$: a larger $S$ implies more precise estimates. The plug-in estimate of $\sigma_n^{*2}$ is:
\[  \hat \sigma_n^{*2} = \partial_{\theta,\omega,\mu\sigma} \phi(\hat \param_n) D_n^{-1}(V_1+V_{1,S}) D_n^{-1} \partial^\prime_{\theta,\omega,\mu\sigma} \phi(\hat \param_n). \]
Supplemental Appendix \ref{sec:sieveLRR} derives the formula and provides the assumptions required for consistency of the standard errors.
On efficiency, it can be showed that estimating $f$ typically affects the estimates for $\theta$. Further, \citet{yu2004} shows for the location-scale Gaussian model that minimizing the ECF distance is not efficient because the objective puts weight on all moments. To achieve efficiency, \citet{Carrasco2000} apply a regularized inverse of the covariance operator to the moment function which extends optimal weighting in GMM. The proofs for the results above allow to apply a bounded operator $B$ which implies a non-vanishing regularization in their setting. While this could improve the properties of the estimates in theory, simulations (not reported here) suggest that the estimates are sensitive to the choice of regularization parameter. 

\section{Monte-Carlo Illustrations} \label{sec:MonteCarlo}
Three simple examples illustrate the properties of the estimator and compare it with a Bayesian nonparametric estimator based on Gaussian mixtures. All examples are conducted in R using the \textit{PSO} package. Given sample and simulated data, the ECF and ECF distance are computed with \textit{RcppArmadillo} which is more efficient for standard matrix operations than baseline R. The unknown $f$ is the skewed-logistic distribution. To illustrate Theorem \ref{th:asymnormal_mixture}, rejection rates of confidence intervals for $\theta$ are reported. The DGPs considered are:
\begin{align}
  y_t &= e_t, \label{eq:iid}\\
  y_t &= \rho_y y_{t-1} + e_t, \label{eq:ar_1}\\
  y_t &= \mu_y + \rho_y (y_{t-1} - \mu_y) + \sigma_t (e_{1,t} + \vartheta_y e_{1,t-1}), \, \sigma_t^2 = \mu_\sigma + \rho_\sigma \sigma_{t-1}^2 + \kappa_\sigma e_{2,t}. \label{eq:SV2}
\end{align}
where $e_t \overset{iid}{\sim} f$ in (\ref{eq:iid})-(\ref{eq:ar_1}) and $e_{1,t} \overset{iid}{\sim} f$, $e_{1,t} \overset{iid}{\sim} \chi^2_1$ in (\ref{eq:SV2}). (\ref{eq:iid}) and (\ref{eq:ar_1}) provide a benchmark to illustrate the effect of the dependence on the estimated $\hat f_n$ and the effect of $S$ in Corollary \ref{rmk:full_simu}. In (\ref{eq:iid}), $\beta = (f)$, and for (\ref{eq:ar_1}) $\beta = (\rho_y,f)$. The stochastic volatility (SV) model (\ref{eq:SV2}) illustrates the first empirical example with a DGP similar to those used in estimations of Long-Run Risks (LRR) models. In (\ref{eq:SV2}), $f$ is restricted to have mean zero and unit variance. For DGPs (\ref{eq:iid})-(\ref{eq:ar_1}), the sample size is $n=200$; for (\ref{eq:SV2}), $n=750$ similar to the application. The inputs are chosen as described in Section \ref{sec:the_estimator}, with $L=0,1$ for (\ref{eq:iid}), (\ref{eq:ar_1}) respectively. In (\ref{eq:SV2}), $L=5$ is used -- it is sufficiently large to identify $(\theta,f)$,  where $\theta = (\mu_y,\rho_y,\vartheta_y,\mu_\sigma,\rho_\sigma,\kappa_\sigma)$. The first two examples  use $200$ and the third $500$ integration points. $1000$ Monte-Carlo replications are used in (\ref{eq:iid})-(\ref{eq:ar_1}), $200$ in (\ref{eq:SV2}). 
\begin{table}[h] \setlength\tabcolsep{4.5pt} \renewcommand{\arraystretch}{0.935}
  \begin{center} \caption{Models (\ref{eq:ar_1})-(\ref{eq:SV2}) -- Bias, Standard Deviation and Size for $\rho_y$} \label{tab:resMC} {
    \small
    \begin{tabular}{cl|aaaa|bb|b} \hline \hline
      & & \multicolumn{4}{c}{Sieve-SMM} & \multicolumn{2}{c}{Bayesian} & GMM\\ \hline
      & $k$ & \multicolumn{2}{c}{2} & \multicolumn{2}{c|}{3} & 2 & 3 & -\\
      & $S$ & \mc{1}{1} & \mc{1}{5} & \mc{1}{1} & \multicolumn{1}{c|}{5} & - & - & -\\ \hline
      \multirow{3}{*}{AR(1)}       & bias  & -0.014 & -0.010 & -0.017 & -0.016 & -0.011 & -0.010 & -0.015 \\ 
      & std & \,\,0.082 & \,\,0.064 & \,\,0.077 & \,\,0.062 & \,\,0.048 & \,\,0.049 & \,\,0.056 \\
      & size  & \,\,0.044 & \,\,0.033 & \,\,0.026 & \,\,0.018 & \,\,0.051 & \,\,0.054 & \,\,0.019 \\ \hline
      \multirow{3}{*}{SV} & bias  & -0.003 & -0.006 & -0.002 & -0.006 & - & - & -0.006 \\ 
       & std & \,\,0.014 & \,\,0.012 & \,\,0.015 & \,\,0.012 & - & - & \,\,0.027 \\
      & size  & \,\,0.200 & \,\,0.170 & \,\,0.190 & \,\,0.140 & - & - & \,\,0.060 \\ \hline  \hline
    \end{tabular} }\\
  {\footnotesize \textit{Note: size reported for 95\% confidence intervals. Model (\ref{eq:ar_1}): $\rho_y = 0.6$, $n=200$.\\ Model (\ref{eq:SV2}): $(\mu_y,\rho_y,\vartheta_y,\rho_\sigma,\kappa_\sigma) = (0.025,0.98,-0.73,0.7,0.6)$, $n=750$}}
  \end{center}
\end{table}
Bayesian estimation is conducted using a Metropolis-Hastings algorithm, the proposal is tuned to target an acceptance rate between $20$ and $40\%$ accross simulations. For (\ref{eq:iid}), (\ref{eq:ar_1}) the likelihood is computed analytically. Bayesian estimates are not reported for (\ref{eq:SV2}), due to the computational burden of performing many Monte Carlo replications. 
The prior is uniform for $\rho$, Dirichlet(1/2) for $\omega$, $\mathcal{N}(0,10)$ for $\mu$ and inverse-Gamma($2.1,1.1$) for $\sigma$. For reference, a semiparametric GMM estimator is also reported for (\ref{eq:ar_1}), (\ref{eq:SV2}); the usual OLS estimator for (\ref{eq:ar_1}) and moments conditions that identify $(\mu_y,\rho_y)$ separately from other parameters in $(\theta,f)$. Figure \ref{fig:MCsimu} illustrates estimates of $f$ in (\ref{eq:iid})-(\ref{eq:SV2}) for $k=3$, $S=1,5$ and compares with Bayesian estimates in (\ref{eq:iid})-(\ref{eq:ar_1}) and an infeasible kernel estimator that directly observed $e_{1,t}$ in (\ref{eq:SV2}).
\begin{figure}[h]
  \begin{center} \caption{Models (\ref{eq:iid})-(\ref{eq:SV2}) -- Density Estimates: Sieve-SMM and Bayesian}\label{fig:MCsimu}
  \includegraphics[scale=0.48]{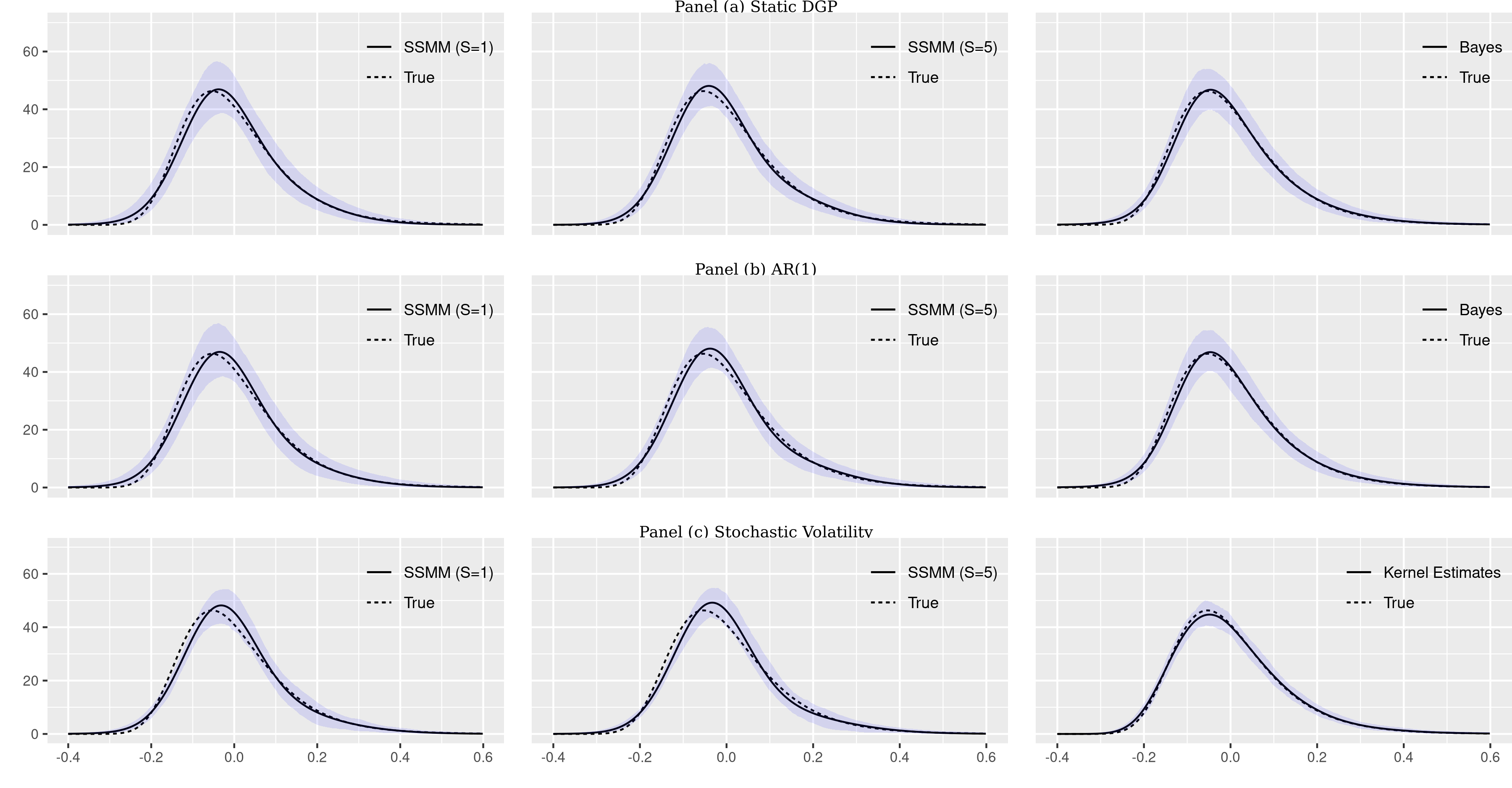}\\
  {\footnotesize \textit{Note: bands = 95\% pointwise interquantile range, $n=200$ for (a)-(b), $n=750$ for (c), $k=3$.}} \end{center} 
\end{figure} 
Table \ref{tab:resMC} summarizes the simulation results for $\rho_y=0.6$ in (\ref{eq:ar_1}) and $\rho_y = 0.98$ in (\ref{eq:SV2}). For the AR(1), biases, standard deviations and sizes are similar across methods. There is size distortion due to small sample bias. There is more size distortion for the SV model (\ref{eq:SV2}), but distortion declines with $k$ and $S$. Here, using a larger $k=4$ leads to rejection rates of $0.11$ and $0.10$ for $S=5,20$ respectively which is closer to the nominal $5\%$ level. Figure \ref{fig:MCsimu} show the properties of the estimated distribution $f$. Both bias and variance are slightly larger with dynamics (panel b) compared to the static case (panel a). Bias is larger for the SV model (panel c) where the higher-order moments of $y$ identify both $(\mu_\sigma,\rho_\sigma,\kappa_\sigma)$ and $f$. 
\section{Applications to Asset Pricing}\label{sec:empirical}
\subsection{Non-Gaussian Shocks and Long-Run Uncertainty}
The first application considers a reduced-form specification of the consumption process in \citet{Yaron2016}. They model consumption growth as a persistent AR(1) process plus white noise with a shared stochastic volatility component. The reduced form used here is the same ARMA(1,1) with time-varying volatility process used in the simulations above:
\begin{align}
  y_t &= \mu_y + \rho_y (y_{t-1} - \mu_y) + \sigma_t (e_{1,t} + \vartheta_y e_{1,t-1}), \, \sigma_t^2 = \mu_\sigma + \rho_\sigma \sigma_{t-1}^2 + \kappa_\sigma e_{2,t}, \tag{\ref{eq:SV2}}
\end{align}
The data consists of real monthly consumption growth, excluding food and energy, from Feb 1959 to Dec 2019.
Parameter estimates and standard errors for $S=20$ are reported in Table \ref{tab:SVest}. Gaussian ARMA QMLE estimates (without SV) are reported for reference. The estimate of $\hat\rho_{y,n}$ is large, in line with calibrations and estimates in the LRR literature. The large negative $\hat \vartheta_{y,n}$ further confirms that the persistent long-run risk component is small. $\hat\kappa_{\sigma,n}$ is multiplied by $10^4$ in the table for readability. Volatility is less persistent than calibrated in \citet{Yaron2016}, but its magnitude is comparable. Table \ref{tab:riskfree} shows the effect of uncertainty on the risk-free rate by evaluating the last term in $r_t = \text{const} - \log[\mathbb{E}_t(\exp(-\gamma \sigma_{t+1} e_{1,t+1}))]$ conditional on $\sigma_t = \overline{\sigma}$, $e_{1,t}=0$; i.e. both are set equal to their long-run average. For the Gaussian ARMA model, the effect is small and negative. The Gaussian SV model finds a positive effect. The symmetry in the distribution implies equal probability for large positive and negative outcomes. This can lead to surprising results such as a higher welfare with time-varying uncertainty than in a deterministic economy \citep{Cho2015}. Using a recursive utility with preference for early resolutions of uncertainty can negate this positive income effect with a intertemporal substitution effect. Without changing the utility function, mixture estimates with $k=2,3,4$ find a larger, negative term compared to the other two baseline predictions. This simple exercise suggests $f_e$ can have interesting asset pricing implications which are further explored in the second application.
\begin{figure}[ht]
  \begin{floatrow} 
    \capbtabbox{%
    \caption{Model (\ref{eq:SV2}) -- Estimates, Standard Errors} \label{tab:SVest}
    }{  \setlength\tabcolsep{4.5pt} \renewcommand{\arraystretch}{0.935}
    \small
  \begin{tabular}{l|ccccc}
    \hline \hline
     $k$ & $\mu_y$ & $\rho_y$ & $\vartheta_y$ & $\rho_\sigma$ & $\kappa_\sigma$ \\ 
    \hline
    \multirow{2}{*}{\textsc{arma}} & 0.030 & 0.988 & -0.759 & - & - \\ 
     & (0.008) & (0.006) & (0.026) & - & - \\ \hline
    \multirow{2}{*}{$1$} & 0.028 & 0.986 & -0.803 & 0.656 & 0.458 \\ 
     & (0.002) & (0.005) & (0.103) & (0.316) & (0.184) \\ \hline
     \multirow{2}{*}{$2$} & 0.029 & 0.987 & -0.784 & 0.736 & 0.370 \\ 
     & (0.002) & (0.005) & (0.086) & (0.330) & (0.253) \\ \hline
     \multirow{2}{*}{$3$} & 0.029 & 0.988 & -0.806 & 0.718 & 0.397 \\ 
     & (0.003) & (0.004) & (0.120) & (0.387) & (0.327) \\ \hline
     \multirow{2}{*}{$4$} & 0.029 & 0.988 & -0.829 & 0.680 & 0.430 \\ 
     & (0.003) & (0.005) & (0.123) & (0.374) & (0.251) \\ 
     \hline \hline
  \end{tabular} 
  }
  \capbtabbox{  \setlength\tabcolsep{4.5pt} \renewcommand{\arraystretch}{0.935}
    \small \hspace{-0.8cm}
    \begin{tabular}{l|ccccc}
      \hline \hline
     $k/\gamma$ & 1 & 2 & 4 & 6 & 10 \\ 
      \hline
      \textsc{arma} & -0.00 & -0.00 & -0.00 & -0.00 & -0.01 \\ 
      $1$ & 0.03 & 0.05 & 0.10 & 0.14 & 0.21 \\ 
      $2$ & -0.50 & -0.99 & -1.99 & -2.98 & -4.97 \\ 
      $3$ & -0.49 & -0.99 & -1.97 & -2.96 & -4.93 \\ 
      $4$ & -0.65 & -1.29 & -2.58 & -3.88 & -6.46 \\ 
       \hline \hline
    \end{tabular}
  }{%
  \caption{Effect of Uncertainty on $r_t$} \label{tab:riskfree}
  }
  \end{floatrow}
  {\footnotesize \textit{Note: Monthly consumption growth data: 02/1959-12/2019, $S=20$.}}
  \end{figure}
  
\subsection{Bond Pricing in a Production Economy}

This empirical application illustrates the empirical relevance of non-Gaussian shocks for estimates of relative risk-aversion using the model of \citet{VanBinsbergen2012}. They estimate a bond pricing model in a production economy with inflation and recursive utility by maximum likelihood using the particle-filter and report large estimates of risk-aversion. 
\paragraph{Model:} A representative agent maximizes intertemporal utility over consumption $C_t$:
\begin{align} U_t = \left[C_t^{1-\psi} + \beta \left[\mathbb{E}_t\left(U_{t+1}^{1-\gamma}\right)\right]^{\frac{1-\psi}{1-\gamma}}\right]^{\frac{1}{1-\psi}}, \end{align} 
where $\beta \in (0,1)$ is the discount factor, $\gamma$ measures relative risk aversion and $1/\psi$ the intertemporal elasticity of substitution (IES). If $\gamma=\psi$, the utility becomes CRRA. Leisure is omitted here because the calibrated specification in \citet{VanBinsbergen2012} fits the data very poorly.\footnote{See \citet{Gourio2012} Section IV.A and \citet{rudebusch2012} p110 for more detailed discussions.} Technology evolves in logs according to a random-walk with drift:
\begin{align} \log Z_{t+1} = \lambda + \log Z_{t} + e_{1,t+1}, \end{align} 
where $e_{1,t+1} \overset{iid}{\sim} f_1$ has mean zero. The budget and ressource constraints are: 
\begin{align*}
  C_t + I_t + \frac{B_{t+1}}{P_t R_t} = r_t K_t + w_t l_t + \frac{B_t}{P_t}, \quad Y_t = C_t + I_t.
\end{align*} 
$I_t$ is investement, $K_t$ capital, $l_t=1$ hours worked, $B_t$ number of contingent bonds with price $1/R_t$, $P_{t}$ the price of goods and $Y_t= Z_t^{1-\alpha}K_t^\alpha$ is aggregate output. Capital accumulation evolves according to $K_{t+1} = (1-\delta)K_t + G(I_t/K_t)K_t$.  $\delta\in(0,1)$ is the depreciation rate, $G(I_t/K_t) = a_1 + \frac{a_2}{1-1/\tau}(I_t/K_t)^{1-1/\tau}$ with $\tau > 0$ measures adjustment costs as in \citet{jermann1998}. $a_1,a_2$ are set to have no adjustement costs in the steady-state.  Inflation $\pi_{t+1} = P_{t+1}/P_t$ follows ARMA(1,1) dynamics in logs, where the MA(1) component is the sum of two independent MA(1) processes:
\begin{align} \log\pi_{t+1} = \bar\pi + \rho (\log\pi_t-\bar\pi) + e_{2,t+1} + \nu_{\pi} e_{2,t} + \nu_{z,1}e_{1,t+1} + \nu_{z,2} e_{1,t}, \end{align}
where $e_{2,t+1} \overset{iid}{\sim} f_2$ has mean zero. The stochastic discount factor (SDF) is:
\begin{align} M_{t+1} = \beta \mathbb{E}\left[ \left( \frac{V_{t+1}}{W_t} \right)^{\psi-\gamma} \left( \frac{C_{t+1}}{C_t} \right)^{-\psi}\right], \end{align}
where $V_t = \max_{C_t,I_t} U_t$ is the value function and $W_t = \mathbb{E}_t[V_{t+1}^{1-\gamma}]^{\frac{1}{1-\gamma}}$ is certainty-equivalent future utility. With the SDF, the price $Q_{t,\ell}$ of a $\ell \geq 1$ period bond is computed recursively:
\begin{align}
  Q_{t,\ell} = \mathbb{E}_t\left( M_{t+1} \frac{Q_{t+1}^{\ell-1}}{\pi_{t+1}} \right), \label{eq:bond}
\end{align}
where $Q^0_{t+1} = 1$. The $\ell$-period yield is $i_{t,\ell} = -\log(Q_{t,\ell})/\ell$. \citet{Ruge-Murcia2016} estimates a similar model with skewness but in a stationary economy.
\paragraph{Solution method:} Accurate approximations require using as much information from $f_1,f_2$ as possible. Perturbation methods of order $\ell$ only use the first $\ell$ moments of $f_1,f_2$. Value function iteration is too computationally costly. Projection methods are not sufficiently stable for estimation. Taylor projection \citep{levintal2018} provides a good compromise between perturbation and projection as shown in \citet{fernandez2018}. This appears to be the first application of Taylor projection for estimation. Besides solving in logs rather than levels, the equations above should be normalized to ensure the solution is stable and accurate, e.g. $1=\mathbb{E}_t\left( M_{t+1} Q_{t+1}^{\ell-1}/[Q_{t}^{\ell}\pi_{t+1}] \right)$ for (\ref{eq:bond}).

The model is non-stationary since all variables, except inflation and yields, are driven by a unit-root. It is solved in terms of de-trended variables $\tilde C_t = C_t/Z_{t-1},\tilde K_t = K_t/Z_{t-1},\tilde I_t = I_t/Z_{t-1}, \tilde Z_t = \exp(\lambda + e_{1,t})$. Growth rates are computed as $\Delta \log(C_{t+1}) = \Delta \log(\tilde C_{t+1}) + \Delta \log(Z_t)$. Assumption \ref{ass:DGPMixt} \textit{y(i),u(i)} holds for $\tilde Z_t$ and $\log\pi_t$ if $|\rho| \leq \bar{\rho} <1$. Pruning is then used to stabilize the remaining variables. 

\paragraph{Data:} The data consist of $n=235$ observations for quarterly growth rate of consumption (personal expenditure in services plus durables),   investment growth (private non-residential fixed), quarterly inflation (growth rate of GDP deflator) and three/six-month Treasury yields between 1961Q2 and 2019Q4, all taken from the FRED database. One and two-year yields are constructed from the Federal Reserve's daily nominal yield curve database. \citet{VanBinsbergen2012} also use yields at longer horizons.

\paragraph{Estimation results:} Several parameters are calibrated $\lambda = 0.0045$, $\delta = 0.0294$, $\alpha = 0.3$. \citet{VanBinsbergen2012} also calibrate $\rho = 0.955$ which is estimated here. Sample and simulated consumption and investment growth are de-meaned to remove the effect of the calibration on the levels. The model is estimated with $S=10$ for $k= 1$ (Gaussian), 2, 3, 4 and 5. Preliminary estimates are computed by first-order projection, which are then added to the initial swarm matrix to compute the final estimates with second-order projection. 

Table \ref{tab:estimates} below reports the estimates for $\theta$. The main pattern of interest is the decline of $\hat\gamma_n$ with $k$. This is reminiscent of the long-run risks and rare disasters literature which essentially find that a better representation of risk allows to match the asset prices with lower levels of risk aversion. Empirically however, \citet{backus2011} find that rare disasters are not large and frequent enough in the data to solve the equity premium puzzle. Here, the focus is on business cycle frequency risks, with a sample that excludes world wars and the great depression but still includes several recessions and inflationary events. The interesting finding is that these risks accomodate much lower levels of relative risk aversion \textit{in an estimation setting}.  Standard errors also decrease with $k$  since the objective has more curvature for smaller $\gamma$. As in \citet{VanBinsbergen2012}, the model is very hard to estimate with Gaussian shocks. Here, several coefficients are close to the optimization bounds (lb, ub). The choice of $k=4$ seems to best balance bias and variance: estimates are similar with $k=5$ but standard errors are greater. In a 12 core cluster environment, estimation takes 14h15m, 8h34m, 7h24m, 7h19m and 6h2m for $k=1,\dots,5$ respectively. The main bottleneck is in solving the model. Taylor projection is initialized with a third-order perturbation, which is more accurate for smaller $\gamma$. For $\gamma \geq 40$ and some corner values, the default solver may fail to converge, using exceptions to switch for a slower more robust solver after a failed convergence works but makes estimation very time-consuming. This mostly affects $k=1$ for which $\hat\theta_n$ is closer to the bounds.

The estimated $\widehat{\text{IES}} = 1/\hat\psi_n$ is greater than $5$ in all specifications. The null hypothesis of a CRRA utility is rejected, $H_0: \gamma = \psi$, with t-statistics of 2.2, 4.1, 3.8, 3.8 and 3.7 for $k=1,\dots,5$ respectively. For reference, in their calibration \citet{Yaron2016} favour $\text{IES}=1.5$. Using aggregate consumption, \citet{Chen2013} report a confidence interval ranging from $2$ to $5$. \citet{VanBinsbergen2012} estimate $\text{IES}=1.7$ with a very large  $\hat\gamma_n = 66$ and a small $\hat\tau_n = 0.1$. On the latter, they exclude investment from the estimation and report a poor fit in that dimension. Here, $k=1$ estimates a small $1/\hat\tau_n=1e-3$, $\text{se}(1/\hat\tau_n) = 0.035$ not significantly different from zero.\footnote{The model is solved in terms of $1/\tau$ making these quantities readily available.} The delta-method used to produce Table \ref{tab:estimates} is invalid at $1/\tau=0$, using the continuous mapping theorem to a CI for $1/\tau$ yields a more robust CI for $\tau$ itself: $[128,+\infty)$. For $k=2$, $1/\hat\tau_n=0.0198$, $\text{se}(1/\hat\tau_n) = 0.0076$ is statistically different from zero at the 1\% significance level and less problematic. Table \ref{tab:CMT} in the Supplement provides additional results for $1/\hat\tau_n$ as well as $1/\hat\gamma_n$.
\begin{table}[h] \setlength\tabcolsep{4.5pt} \renewcommand{\arraystretch}{0.935}
  \begin{center} \caption{Production Economy: Parameter Estimates} \label{tab:estimates} {
    \small
    \begin{tabular}{l|ccccccccc|c} \hline \hline
      $k$ & $\beta$ & $\gamma$ & $\psi$ & $\tau$ & $\rho$ & $\nu_\pi$  & $\nu_{z,1}$ & $\nu_{z,2}$ & $\log(\bar\pi)$ & $\hat Q_n^S(\hat\param_n)$\\ \hline

      \multirow{2}{*}{$1$} & 0.994 & 34.629 & 0.145 & 1000.000 & 0.985 & -0.990 & -0.100 & 0.001 & 0.005 & \multirow{2}{*}{3.58} \\ & (0.001) & (15.373) & (0.007) & (3465.036) & (0.004) & (0.051) & (0.012) & (0.011) & (0.003) & \\ \hline

      \multirow{2}{*}{$2$} & 0.972 & 20.153 & 0.192 & 50.526 & 0.766 & -0.152 & -0.127 & -0.015 & 0.007 & \multirow{2}{*}{2.73} \\ & (0.014) & (4.939) & (0.010) & (19.425) & (0.036) & (0.049) & (0.014) & (0.014) & (0.002)  & \\ \hline

      \multirow{2}{*}{$3$} & 0.988 & 12.754 & 0.189 & 54.559 & 0.774 & -0.059 & -0.099 & -0.010 & 0.007 & \multirow{2}{*}{2.70} \\ & (0.007) & (3.339) & (0.011) & (18.263) & (0.029) & (0.046) & (0.015) & (0.014) & (0.002) \\ \hline

      \multirow{2}{*}{$4$} & 0.992 & 10.476 & 0.165 & 51.906 & 0.711 & 0.005 & -0.112 & -0.004 & 0.007 & \multirow{2}{*}{2.58} \\ & (0.006) & (2.692) & (0.012) & (13.748) & (0.039) & (0.048) & (0.014) & (0.014) & (0.002)&\\ \hline

      \multirow{2}{*}{$5$} & 0.991 & 11.970 & 0.170 & 66.961 & 0.694 & 0.139 & -0.085 & -0.016 & 0.007& \multirow{2}{*}{2.46}\\ & (0.007) & (3.147) & (0.011) & (24.847) & (0.034) & (0.045) & (0.012) & (0.012) & (0.002)&\\ \hline\hline
      lb & 0.965 & 0.5 & 0.05 & 0.01 & 0.2 & -0.99 &-0.25 &-0.25 & 0.005 & - \\
      ub & 0.999 & 70  & 110 & 1000  & 0.995  & 0.2   & 0.25 & 0.25 & 0.0085 & -\\ \hline\hline
    \end{tabular} }
  \end{center}
\end{table}
Table \ref{tab:moms} below compares selected sample with simulated moments. The fit is generally better with larger $k$. To better understand the estimated IES, the last row changes the IES to $1.5$, keeping the other coefficients at the $k=4$ estimates. The smaller IES increases average yields but reduces the slope of the yield curve and the variance of consumption. The correlation between consumption growth and yields is slightly positive in the data but very negative for $k=1$ and $\text{IES}=1.5$. For larger $k$, these correlations are closer to the sample.
\begin{table}[h] \setlength\tabcolsep{4.5pt} \renewcommand{\arraystretch}{0.935}
  \begin{center} \caption{Production Economy: Sample and Simulated Moments} \label{tab:moms} {
    \small
    \begin{tabular}{l|cccc|cccccc|cccc} \hline \hline
      & \multicolumn{4}{c|}{Average Yield} & \multicolumn{6}{c|}{Standard Deviations} & \multicolumn{4}{c}{Corr($\Delta c_t$,Yield)} \\
       & $3m$ & $6m$ & $1y$ & $2y$ & $3m$ & $6m$ & $1y$ & $2y$ & $\Delta c_t$ & $\Delta i_t$  & $3m$ & $6m$ & $1y$ & $2y$ \\ \hline
      Sample & 4.57 & 4.71 & 5.03 & 5.24 & 3.19 & 3.18 & 3.31 & 3.25 & 0.57 & 1.96 & \,\,0.07 & \,\,0.07 & \,\,0.08 & \,\,0.09 \\ \hline 
      Gaussian & 2.80 & 2.87 & 2.99 & 3.05 & 2.44 & 2.35 & 2.31 & 2.26 & 0.53 & 1.66 & -0.25 & -0.25 & -0.25 & -0.25 \\ 
      $k=2$ & 3.59 & 3.64 & 3.73 & 3.85 & 1.66 & 1.43 & 1.11 & 0.73 & 0.83 & 1.23 & -0.11 & -0.07 & -0.04 & -0.02 \\ 
      $k=3$ & 3.74 & 3.76 & 3.80 & 3.85 & 1.96 & 1.72 & 1.37 & 0.92 & 0.68 & 1.48 & -0.06 & -0.03 & -0.01 & \,\,0.01 \\ 
      $k=4$ & 3.88 & 3.88 & 3.89 & 3.90 & 1.90 & 1.63 & 1.22 & 0.76 & 0.69 & 1.42 & \,\,0.01 & \,\,0.03 & \,\,0.05 & \,\,0.07 \\ 
      $k=5$ & 3.81 & 3.83 & 3.86 & 3.89 & 1.94 & 1.70 & 1.28 & 0.79 & 0.66 & 1.41 & -0.02 & \,\,0.02 & \,\,0.04 & \,\,0.05 \\ \hline 
      $\text{IES}=1.5$ & 5.08 & 5.08 & 5.08 & 5.08 & 1.90 & 1.63 & 1.23 & 0.77 & 0.34 & 2.04 & -0.16 & -0.17 & -0.15 & -0.08 \\ \hline\hline
    \end{tabular}\\ }
    {\footnotesize \textit{Note: yields (3m, 6m, 1y, 2y) are annualized, growth rates $\Delta c_t,\Delta i_t$ are \%QoQ.}}
  \end{center}
\end{table}
Figure \ref{fig:est} compares sample with simulated distributions and shows the estimated densities $f_1,f_2$. The latter two are normalized to have variance equal to $1$. For $k=4$, estimates of $f_1,f_2$ have skewnesses of $-1.45,2.18$ and kurtoses of $6.57,11.48$ which indicate excess downside risks for technology shocks and upwards risks for inflation. While mixtures improve the fit for consumption and inflation, investment and 3m yields are more challenging to match. In the sample, investment has smaller kurtosis than consumption, $5$ and $9$ respectively, while in simulations, the converse is true: investment has larger kurtosis than consumption, $5$ and $4$ with $k=4$. In the model, investment is the only source of endogenous variation for output, adding labor would provide another. Also, varying capital utilization could provide more realistic fluctuations in output and investment \citep{king1999}. 3m yields were above 10\% only between 1979Q3 and 1984Q3 and below 0.5\% only between 2008Q4 and 2016Q3, i.e. both tails are associated with specific monetary policy regimes. This suggests that modelling monetary policy regimes is needed to improve the fit of yields in the tails.
\begin{figure}[h]
  \begin{center} \caption{Production Economy: Sample and Simulated Distributions, Density Estimates}\label{fig:est}
  \includegraphics[scale=0.52]{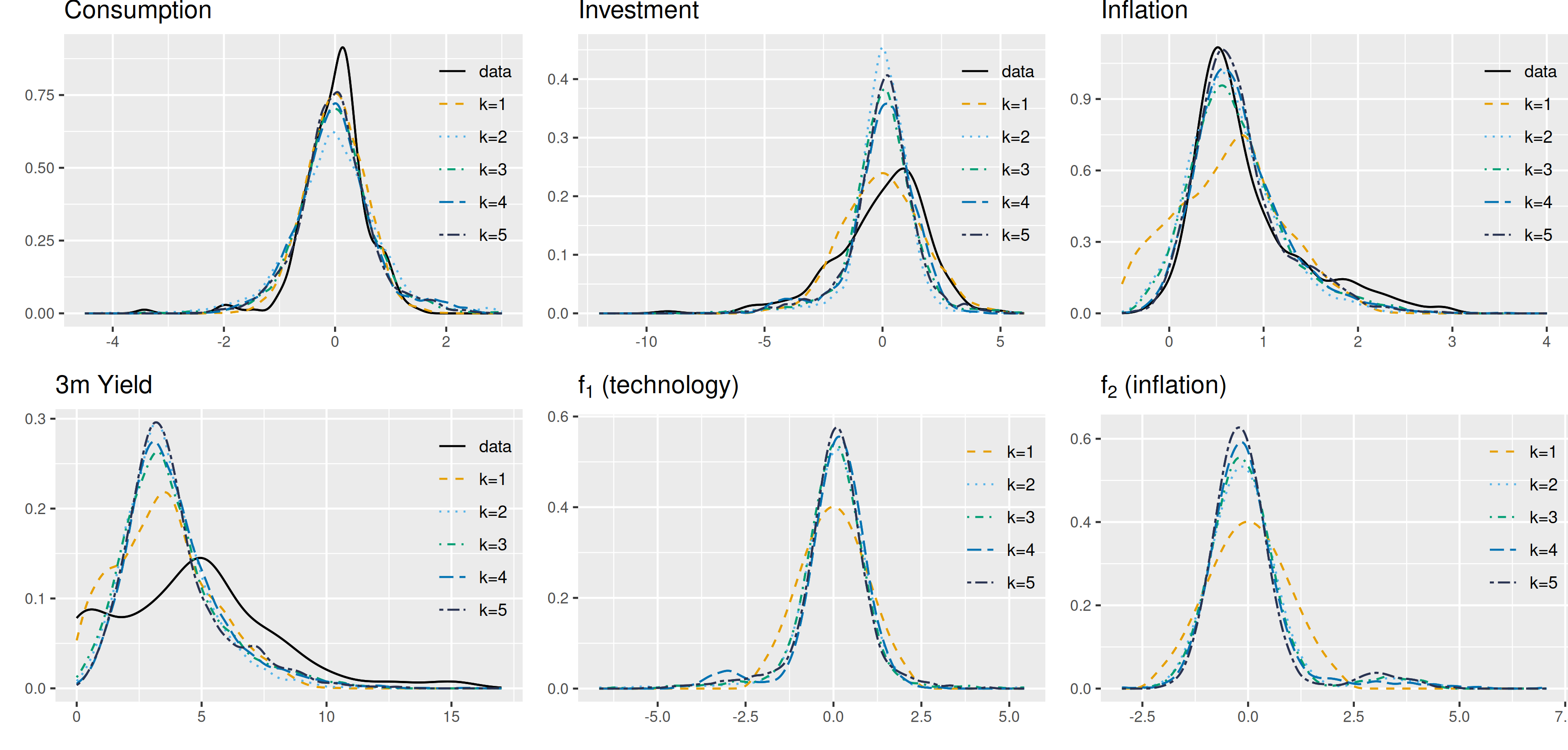}\\
   {\footnotesize \textit{Note: 3m yields are annualized, growth rates $g_c,g_i$ are \%QoQ and de-meaned, inflation is \%QoQ.}} \end{center}
\end{figure}
There are two main takeways from this application. First, allowing for a flexible distribution in the shocks $(e_{1,t},e_{2,t})$ allows to better capture risks and leads to much smaller estimates of relative risk aversion. This highlights the empirical relevance of using a semi-nonparametric approach in this setting.  Second, the model is very simple and has limitations that show in the results. It cannot match the variance of consumption without a large IES, as shown in Table \ref{tab:moms}. Overall, the flexible estimation fits the data better in some dimensions using more reasonable parameters values that also seem to be more accurately estimated. However, the flexible distribution does not improve the fit in all dimensions and issues remain such as the zero lower bound on interest rates, the joint dynamics of consumption and investment, among others. Going forward, estimating the distribution of the shocks in a more realistic model that can capture these feature would be of interest.


\section{Conclusion} \label{sec:conclusion}
Simulation-based estimation is a powerful approach to estimate intractable models. Using a mixture sieve with the empirical characteristic function, this paper provides an approach to estimate semi-nonparametric models by simulation. Estimation using the ECF can be unstable depending on the choice of the weight function $\pi$, see e.g. \citet{chen2019} section 2.1.1 for a discussion. The approach suggested in Section \ref{sec:the_estimator} provides a simple way to give more or less weight to lower-order moments and then to check the fit for selected moments as in Table \ref{tab:moms}. Alternatively, the conditional cdf or pdf can be used as moments. 
Approximation results in \citet{DeJonge2010}, \citet{Norets2010} can be used to consider joint or conditional densities. Estimating other objects nonparametrically, such as a utility or production function, can also be of interest. Another direction of research would be to develop general theory for sieve indirect inference estimation.

\newpage

\baselineskip=12.0pt
\bibliography{refs.bib,refsABC.bib}
\baselineskip=18.0pt
\newpage
\begin{appendices}
  \renewcommand\thetable{\thesection\arabic{table}}
  \renewcommand\thefigure{\thesection\arabic{figure}}
  \renewcommand{\theequation}{\thesection.\arabic{equation}}
  \renewcommand\thelemma{\thesection\arabic{lemma}}
  \renewcommand\thetheorem{\thesection\arabic{theorem}}
  \renewcommand\thedefinition{\thesection\arabic{definition}}
    \renewcommand\theassumption{\thesection\arabic{assumption}}
  \renewcommand\theproposition{\thesection\arabic{proposition}}
    \renewcommand\theremark{\thesection\arabic{remark}}
    \renewcommand\thecorollary{\thesection\arabic{corollary}}

\setcounter{lemma}{0} 
\section{Preliminary Results} \label{apx:prelim}
\begin{lemma}[Approximation Properties of the Gaussian and Tails Mixture] \label{lem:ApproxSimuGAUT}
      Suppose that the shocks $e=(e_{t,1},\dots,e_{t,d_e})$ are independent with density $f = f_1 \times \dots \times f_{d_e}$. Suppose that each marginal $f_j$ can be decomposed into a smooth density $f_{j,S}$ and the two tails density $f_L,f_R$:
      \[ f_j = (1-\omega_{j,1}-\omega_{j,2})f_{j,S} + \omega_{j,1}f_L + \omega_{j,2}f_R. \]
      Let each $f_{j,S}$ satisfy the assumptions of \citet{Kruijer2010}: i) Smoothness: $f_{j,S}$ is $r$-times continuously differentiable with bounded $r$-th derivative. ii) Tails: $f_{j,S}$ has exponential tails, i.e. there exists $\bar{e}, M_{f}, a,b>0$ such that $f_{j,S}(e) \leq M_{f}e^{-a|e|^b},\, \forall |e| \geq \bar{e}.$ iii) Monotonicity in the Tails: $f_{j,S}$ is strictly positive and there exists $\underline{e}<\overline{e}$ such that $f_{j,S}$ is weakly decreasing on $(-\infty,\underline{e}]$ and weakly increasing on $[\overline{e},\infty)$ and $\|f_j\|_\infty \leq \overline{f}$ for all $j$. Then there exists a Gaussian and tails mixture $\Pi_{k}f = \Pi_{k}f_1 \times \dots \times \Pi_{k}f_{d_e}$ satisfying the restrictions of \citet{Kruijer2010}: iv) Bandwidth: $\sigma_j \geq \underline{\sigma}_{k} = O(\frac{\log[k]^{2/b}}{k})$. v) Location Parameter Bounds: $\mu_j \in [-\bar{\mu}_{k},\bar{\mu}_{k}]$ with $\bar{\mu}_{k} = O\left(\log[k]^{1/b} \right)$ such that as $k\to\infty$:
      \begin{align*}
        &\|f-\Pi_{k} f\|_{\mathcal{F}} = O \left(  \frac{\log[k]^{2r/b}}{k^r} \right)
      \end{align*}
      where $\|\cdot\|_\mathcal{F} = \|\cdot\|_{TV}$ or $\|\cdot\|_{\infty}$.  
\end{lemma}

The following Lemma is needed to verify the $L_2$-smoothness condition when using the Gaussian and tails mixture.
\begin{lemma}[Properties of the Tails Distributions] \label{lem:tailsdist}
  Let $\bar{\tail} \geq \tail_1,\tail_2 \geq \underline{\tail} > 0$. Let $\nu_{t,1}^s$ and $\nu_{t,2}^s$ be uniform $\mathcal{U}_{[0,1]}$ draws and:
  \begin{align*}
  &e_{t,1}^s = - \left(\frac{1}{\nu_{t,1}^s}-1\right)^{\frac{1}{2+\tail_1}}, \quad
  e_{t,2}^s =  \left(\frac{1}{1-\nu_{t,2}^s}-1\right)^{\frac{1}{2+\tail_2}}.
  \end{align*}
  The densities of $e_{t,1}^s,e_{t,2}^s$ satisfy $f_{e_{t,1}^s}(e) \sim e^{-3-\tail_1}$ as $e\to -\infty$, $f_{e_{t,2}^s}(e) \sim e^{-3-\tail_2}$ as $e\to +\infty$. There exists a finite $C$ bounding the second moments $\mathbb{E}\left( |e_{t,1}^s|^2 \right) \leq C < \infty$ and $\mathbb{E}\left( |e_{t,2}^s|^2 \right) \leq C < \infty$. Furthermore, the draws $y_{t,1}^s$ and $y_{t,2}^s$ are $L^2$-smooth in $\tail_1$ and $\tail_2$ respectively:
  \begin{align*}
    &\left[ \mathbb{E}\left( \sup_{|\tail_1-\tilde \tail_1| \leq \delta} |e_{t,1}^s(\tail_1)-e_{t,1}^s(\tilde \tail_1)|^2 \right) \right]^{1/2} \leq C \delta, \quad
    \left[ \mathbb{E}\left( \sup_{|\tail_2-\tilde \tail_2| \leq \delta} |e_{t,2}^s(\tail_2)-e_{t,2}^s(\tilde \tail_2)|^2 \right) \right]^{1/2} \leq C \delta
  \end{align*}
  Where the constant $C$ only depends on $\underline{\tail}$ and $\bar{\tail}$.
\end{lemma}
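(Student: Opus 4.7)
The lemma has three assertions, each amenable to a direct calculation; the common engine is the explicit inverse-CDF formula $e_{t,1}^s(\xi_1)=-(1/\nu_{t,1}^s-1)^{1/(2+\xi_1)}$ (and its mirror image for $e_{t,2}^s$). My plan is to handle them one at a time.

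For the tail behavior, the plan is simply change of variables. Solving $e = -(1/\nu - 1)^{1/(2+\xi_1)}$ for $\nu \in (0,1)$ gives $\nu = 1/(1+|e|^{2+\xi_1})$ on $e \leq 0$, so the density of $e_{t,1}^s$ equals $|d\nu/de| = (2+\xi_1)|e|^{1+\xi_1}/(1+|e|^{2+\xi_1})^2$, which coincides with $f_L$ of Definition \ref{def:GAUT}. Its asymptotics as $e \to -\infty$ are $(2+\xi_1)|e|^{-3-\xi_1}$, yielding the claimed $\sim e^{-3-\xi_1}$ rate; the argument for $e_{t,2}^s$ is symmetric.

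For the second-moment bound, I would write $\mathbb{E}|e_{t,1}^s|^2 = \int_0^1 (1/\nu - 1)^{2/(2+\xi_1)}\,d\nu$ and split the integral near $\nu = 0$ (where $(1/\nu-1)^{2/(2+\xi_1)} \sim \nu^{-2/(2+\xi_1)}$) and $\nu = 1$ (where the integrand vanishes continuously). Since $\xi_1 \geq \underline{\xi} > 0$, we have $2/(2+\xi_1) \leq 2/(2+\underline{\xi}) < 1$ uniformly, so the integrand is dominated by an integrable function of $\nu$ depending only on $\underline{\xi}$, giving a uniform bound $C(\underline{\xi})$.

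For the $L^2$-smoothness, I would use the mean value theorem together with a uniform bound on the derivative. Writing $h(\xi) = (1/\nu-1)^{1/(2+\xi)}$, logarithmic differentiation gives
\[
\frac{d h}{d\xi} \;=\; -\,\frac{(1/\nu-1)^{1/(2+\xi)} \,\log(1/\nu-1)}{(2+\xi)^2},
\]
so for any $|\xi_1-\tilde\xi_1|\leq\delta$, $|e_{t,1}^s(\xi_1)-e_{t,1}^s(\tilde\xi_1)| \leq \delta\cdot\sup_{\xi\in[\underline{\xi},\bar{\xi}]}|h'(\xi)|$. Squaring, taking expectations, and using monotonicity in $\xi$ of each factor (so that the pointwise sup over $\xi$ is attained at $\xi = \underline{\xi}$ on the region $\nu$ near $0$ and harmlessly bounded on $\nu$ near $1$) reduces the problem to showing
\[
\int_0^1 (1/\nu - 1)^{2/(2+\underline{\xi})}\,\bigl|\log(1/\nu-1)\bigr|^2\,d\nu \;<\;\infty.
\]
Near $\nu = 0$ the integrand behaves like $\nu^{-2/(2+\underline{\xi})}|\log\nu|^2$, which integrates since $2/(2+\underline{\xi})<1$ (a polylog factor is harmless); near $\nu = 1$ the factor $(1/\nu-1)^{2/(2+\underline{\xi})}$ kills the $|\log|^2$ blow-up. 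This yields a finite constant $C$ depending only on $\underline{\xi}$ and $\bar{\xi}$. The argument for $e_{t,2}^s$ is identical by symmetry. The only mildly delicate point is the polylog singularity near $\nu=0$, but it is tamed by the strict inequality $2/(2+\underline{\xi})<1$ that the hypothesis $\underline{\xi}>0$ provides.
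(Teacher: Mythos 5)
Your proposal is correct and follows essentially the same route as the paper's proof: invert the quantile formula to obtain the density and its tail asymptotics, and establish $L^2$-smoothness via the mean value theorem in $\xi$ with a dominating integrable function of $\nu$ whose finiteness rests on $2/(2+\underline{\xi})<1$. If anything you are slightly more careful than the paper (you compute the second-moment integral explicitly, keep the correct $(2+\xi)^{-2}$ factor from logarithmic differentiation, and retain the square on the logarithm in the final integral), but these are cosmetic differences, not a different argument.
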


\begin{lemma}[Covering Numbers] \label{lem:CoveringNumbers}
  Under the $L^2$-smoothness of the DGP (as in Lemma \ref{lem:L2smooth}), the bracketing number satisfies for $x\in (0,1)$ and some $\overline{C}$:
  \begin{align*}
  &N_{[\,]}(x,\Psi_{k(n)}(\tau),\|\cdot\|_{L^2}) \\&\leq  \left( 3[k(n)+2]+d_\theta \right)  \left( 2\max(\bar\mu_{k(n)},\underline{\sigma})\overline{C}^{2/\gamma^2}\frac{\left( k(n)+\overline{\mu}_{k(n)}+\overline{\sigma}\right)^{2/\gamma}+\underline{\sigma}_{k(n)}^4 }{x^{2/\gamma^2}}+1 \right)^{3[k(n)+2]+d_\theta}.
  \end{align*}
  For $\tau \in \mathbb{R}^{d_\tau}$, let $\Psi_{k(n)}(\tau)$ be the set of functions $\Psi_{k(n)}(\tau) = \left\{ \param \rightarrow e^{i \tau^\prime (\mathbf{y}_t(\param),\mathbf{x}_t)} \pi(\tau)^{1/2},\, \param \in \paramspace_{k(n)}  \right\}$.
  The bracketing entropy of each set $\Psi_{k(n)}(\tau)$ satisfies for some $\tilde{C}$:
  \begin{align*}
  &\log \left(N_{[\,]}(x,\Psi_{k(n)}(\tau),\|\cdot\|_{L^2} \right) \leq \tilde{C} k(n)\log[k(n)])|\log \delta|.
  \end{align*}
  Using the above, for some $\tilde{C}_2 < \infty$:
  \begin{align*}
  \int_0^1 \log^2 \left(N_{[\,]}(x,\Psi_k(n),\|\cdot\|_{L^2} \right) dx \leq \tilde{C}_2  k(n)^2\log[k(n)]^2.
  \end{align*}
  \end{lemma}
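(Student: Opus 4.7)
The plan is to reduce the bracketing entropy of the complex-valued class $\Psi_{k(n)}(\tau)$ to an ordinary covering number of the finite-dimensional sieve $\paramspace_{k(n)}$ under the pseudo-norm $\|\cdot\|_m$, and then invoke Lemma \ref{lem:L2smooth} to translate the radius in $\|\cdot\|_m$ into an envelope in $L^2$. First, I would note that $\paramspace_{k(n)}$ is a bounded subset of $\mathbb{R}^{d}$ with $d=3[k(n)+2]+d_\theta$, with coordinate-wise diameters uniformly bounded by a constant multiple of $\max(\bar\mu_{k(n)},\bar\sigma,1)$ (and the tail indices $\tail_L,\tail_R$ confined to $[\underline{\tail},\bar{\tail}]$, so they contribute only a constant to the effective diameter and can be absorbed into the pre-factor).

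Second, a standard volumetric argument gives an $\|\cdot\|_m$-covering of $\paramspace_{k(n)}$ by at most $\bigl(C D/\delta+1\bigr)^{d}$ balls of radius $\delta$, where $D$ is the maximum coordinate diameter. For any such ball with center $\param_j$, define the bracket $[L_j,U_j]$ on $\Psi_{k(n)}(\tau)$ by centering $e^{i\tau^\prime(\mathbf{y}_t(\param_j),\mathbf{x}_t)}\pi(\tau)^{1/2}$ and adding/subtracting the envelope
\[
\Delta_j(\tau)=\sup_{\param\in B_j}\bigl|e^{i\tau^\prime(\mathbf{y}_t(\param),\mathbf{x}_t)}-e^{i\tau^\prime(\mathbf{y}_t(\param_j),\mathbf{x}_t)}\bigr|\pi(\tau)^{1/2}.
\]
(For complex-valued classes one works with real and imaginary parts; this at most doubles the bracketing number and is absorbed in constants.) By Lemma \ref{lem:L2smooth},
\[
\|\Delta_j\|_{L^2}\le \overline{C}\max\!\left(\frac{\delta^{\gamma^2}}{\underline\sigma_{k(n)}^{2\gamma^2}},\;\bigl(k(n)+\overline{\mu}_{k(n)}+\overline{\sigma}\bigr)^{\gamma}\delta^{\gamma^2/2}\right).
\]

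Third, to obtain $L^2$-brackets of size $x$ it suffices to pick
\[
\delta^{-1}\;\lesssim\;\frac{\overline{C}^{2/\gamma^2}}{x^{2/\gamma^2}}\left(\underline{\sigma}_{k(n)}^{-2}+\bigl(k(n)+\overline{\mu}_{k(n)}+\overline{\sigma}\bigr)^{2/\gamma}\right),
\]
which follows by inverting both branches of the max and using $x\in(0,1)$ with $\gamma\in(0,1]$ so that $x^{-1/\gamma^2}\le x^{-2/\gamma^2}$. Plugging this into the covering-number formula for $\paramspace_{k(n)}$ and collecting the geometric diameter factor into a quantity of order $\max(\bar\mu_{k(n)},\underline{\sigma})$ yields precisely the first stated bound (up to the constant placement indicated in the statement); the outer multiplicative factor $3[k(n)+2]+d_\theta$ comes from the bracket-vs-covering conversion on the complex-valued class (real/imaginary parts and standard doubling arguments).

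Fourth, taking logs gives $\log N_{[\,]}(x,\Psi_{k(n)}(\tau),\|\cdot\|_{L^2})\le d\log\bigl(\text{stuff}/x^{2/\gamma^2}+1\bigr)$. Since $d\asymp k(n)$, $\log d\lesssim\log k(n)$, $\log\bar\mu_{k(n)}\lesssim \log\log k(n)$, and $\log\underline{\sigma}_{k(n)}^{-1}\lesssim \log k(n)$ by the bandwidth and location bounds in Lemma \ref{lem:ApproxSimuGAUT}, the logarithm of the bracketed factor is $O(\log k(n)+|\log x|)$, giving the entropy bound $\tilde C\, k(n)\log k(n)\,|\log x|$. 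Finally, for the integral I would use $(\log k(n)+|\log x|)^{2}\le 2\log^{2}k(n)+2(\log x)^{2}$, together with the elementary fact $\int_0^1(\log x)^{2}dx=2$, to conclude
\[
\int_0^1 \log^{2} N_{[\,]}(x,\Psi_{k(n)},\|\cdot\|_{L^2})\,dx\;\le\;\tilde C_2\, k(n)^{2}\log^{2}k(n).
\]
The main technical obstacle is the first step: the envelope in Lemma \ref{lem:L2smooth} is stated pointwise in $\tau$ after multiplication by $\sqrt{\pi(\tau)}$, so some care is required to convert that into a global $L^2(\pi)$-envelope for a fixed centering $\param_j$, which uses $\int\sqrt{\pi(\tau)}d\tau<\infty$ (Gaussian $\pi$). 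Everything else is either a routine volumetric covering argument, algebraic manipulation to invert the $\max$, or bookkeeping of the sieve constants $\underline{\sigma}_{k(n)}$ and $\bar\mu_{k(n)}$.
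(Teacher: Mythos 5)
Your proposal is correct and follows essentially the same route as the paper: the paper's own (very terse) proof just invokes the Kolmogorov--Tikhomirov covering bound for the ball of radius $\max(\overline{\mu}_{k(n)},\overline{\sigma},\|\theta\|_{\infty})$ in $\mathbb{R}^{3[k(n)+2]+d_\theta}$ under $\|\cdot\|_m$ and then defers to Lemma \ref{lem:L2smooth} and Appendix \ref{apx:asymptotic} for exactly the covering-to-bracketing conversion and entropy-integral bookkeeping that you spell out. The only cosmetic discrepancy is that inverting the first branch of the max in Lemma \ref{lem:L2smooth} yields $\underline{\sigma}_{k(n)}^{-2}$ rather than the $\underline{\sigma}_{k(n)}^{4}$ appearing in the statement (almost certainly a sign typo there), which is immaterial once logarithms are taken.
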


\begin{lemma}[Nonparametric Approximation Bias] \label{lem:ObjApproxRate} Suppose Assumptions \ref{ass:sid} and \ref{ass:DGPMixt} (or \ref{ass:DGPmixtbis}) hold. Furthermore suppose that $\mathbb{E}\left( \|y_t^s\|^2 \right)$ and $\mathbb{E}\left(\|u_t^s\|^2\right)$ are bounded for $\param=\param_0$ and $\param=\Pi_{k(n)}\param_0$ for all $k(n)\geq 1$, $t \geq 1$ then:
  \begin{align*}
    Q_n(\Pi_{k(n)}\param_0) &= O\left( \max \left[  \frac{\log[k(n)]^{4r/b+2}}{k(n)^{2r}},\frac{\log[k(n)]^{4 \gamma^2 r/b}}{k(n)^{2\gamma^2 r}}, \frac{1}{n^2} \right]\right) = O\left(\frac{\log[k(n)]^{4r/b+2}}{k(n)^{2\gamma^2 r}}\right)
  \end{align*}
where $\Pi_{k(n)}\param_0$ is the mixture approximation of $\param_0$, $\gamma$ the H\"older coefficient in Assumption \ref{ass:DGPMixt}, $b$ and $r$ are the exponential tail index and the smoothness of the density $f_S$ in Lemma \ref{lem:ApproxSimuGAUT}.
\end{lemma}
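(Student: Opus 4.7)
The plan is to split $Q_n(\Pi_{k(n)}\param_0)$ into a pure initial condition bias (at the truth) and a nonparametric approximation bias, using $|a+b|^2 \leq 2|a|^2 + 2|b|^2$ inside the integral:
\begin{equation*}
Q_n(\Pi_{k(n)}\param_0) \leq 2\, Q_n(\param_0) + 2 \int \Big| \mathbb{E}\big(\hat\psi_n^S(\tau,\param_0) - \hat\psi_n^S(\tau,\Pi_{k(n)}\param_0)\big)\Big|^2 \pi(\tau)\, d\tau.
\end{equation*}
The first summand handles the initial condition bias: because $(y_t,x_t)$ is strictly stationary and the simulated chain $(y_t^s(\param_0),x_t)$ is geometrically ergodic by Assumption \ref{ass:sid}\textit{iii}, the expected CF difference at each $t$ decays geometrically toward the stationary regime, and Proposition \ref{prop:geom_ergo} of the supplement gives $Q_n(\param_0) = O(1/n^2)$, which accounts for the third entry of the displayed maximum.

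For the second summand, I would apply $|e^{ix}-e^{iy}| \leq \|\tau\|\,\|x-y\|$, Jensen's inequality, and the finiteness of $\int \|\tau\|^2 \pi(\tau)\,d\tau$ (Gaussian $\pi$), to reduce the problem to bounding $\sup_{t \geq 1}\mathbb{E}\bigl(\|\mathbf{y}_t^s(\param_0) - \mathbf{y}_t^s(\Pi_{k(n)}\param_0)\|^2\bigr)$ under an explicit coupling of the two shock sequences. I would take at each $t$ a maximal total-variation coupling of $e_t^s(\param_0)\sim f_0$ with $e_t^s(\Pi_{k(n)}\param_0)\sim \Pi_{k(n)}f_0$; the hypothesized second-moment bounds on $y_t^s,u_t^s$ at both parameter values, together with Cauchy--Schwarz on the disagreement event, give $\mathbb{E}(|e_t^s(\param_0)-e_t^s(\Pi_{k(n)}\param_0)|^2) \lesssim \|f_0 - \Pi_{k(n)}f_0\|_\paramspace$. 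Plugging this into the recursions implied by Assumption \ref{ass:DGPMixt}, the decay conditions \textit{y(i)}, \textit{u(i)} (with $\bar C_1,\bar C_4<1$) combined with the elementary splitting $(a+b+c)^2 \leq (1+\varepsilon)a^2 + 2(1+\varepsilon^{-1})(b^2+c^2)$ for $\varepsilon$ small enough to preserve strict contraction, turn the dynamic inequalities for $D_t := \mathbb{E}(\|y_t^s(\param_0)-y_t^s(\Pi_{k(n)}\param_0)\|^2)$ and $E_t := \mathbb{E}(\|u_t^s(\param_0)-u_t^s(\Pi_{k(n)}\param_0)\|^2)$ into summable geometric recursions, yielding a bound uniform in $t$. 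The composition of \textit{y(iii)} (Hölder $\gamma$ in $u$) with \textit{u(ii)} and \textit{u(iii)} (each Hölder $\gamma$ in $\param$ and in $e$ respectively) is what produces the squared exponent on the propagated approximation bias, so that $\sup_{t\geq 1} D_t \lesssim \|f_0-\Pi_{k(n)}f_0\|_\paramspace^{2\gamma^2}$.

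Substituting the approximation rate $\|f_0 - \Pi_{k(n)}f_0\|_\paramspace = O(\log[k(n)]^{2r/b}/k(n)^r)$ from Lemma \ref{lem:ApproxSimuGAUT} delivers the second entry $\log[k(n)]^{4\gamma^2 r/b}/k(n)^{2\gamma^2 r}$ of the maximum. The first entry $\log[k(n)]^{4r/b+2}/k(n)^{2r}$ comes from a parallel route that invokes Lemma \ref{lem:L2smooth} on a comparison between $\Pi_{k(n)}\param_0$ and a nearby mixture in the pseudo-norm $\|\cdot\|_m$: the Gaussian bandwidth factor $\underline{\sigma}_{k(n)}^{-2\gamma^2}$ contributes the extra $\log[k(n)]^2$, while retaining the strictly better $k(n)^{-2r}$ rate is only possible because this comparison lives entirely inside the parametric pseudo-norm and avoids the shock-coupling Hölder composition. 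Taking the maximum of the three contributions and using $\gamma^2 \leq 1$ and $4r/b + 2 \geq 4\gamma^2 r/b$ to absorb everything into the conservative common bound $\log[k(n)]^{4r/b+2}/k(n)^{2\gamma^2 r}$ closes the proof.

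The hard part will be the second paragraph: constructing the shock coupling so the per-period $L^2$ error is controlled by the scalar $\|f_0-\Pi_{k(n)}f_0\|_\paramspace$ (rather than by a $t$-dependent quantity) and then running the Hölder--decay recursions without accumulating the parametric bias over $t=1,\dots,n$. A naive tensorization of total variation over the sample horizon produces an unacceptable factor of $n$; what rules this out—and simultaneously forces the $\gamma^2$ rather than $\gamma$ exponent on the approximation rate—is precisely the compounded Hölder--decay structure of Assumption \ref{ass:DGPMixt}, which is the one place in the proof where the dynamic model specification enters non-trivially.
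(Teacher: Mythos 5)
Your decomposition into $Q_n(\param_0)$ plus the population CF difference between $\param_0$ and $\Pi_{k(n)}\param_0$, the $O(1/n^2)$ treatment of the initial-condition term via geometric ergodicity, and the Hölder--decay recursions for the part of the difference that comes from changing $\param$ in $g_{obs},g_{latent}$ while holding the shock sequence fixed, all match the paper and correctly produce the $\|\param_0-\Pi_{k(n)}\param_0\|_\paramspace^{2\gamma^2}$ term. The gap is in the part you yourself flag as hard: the change in the \emph{distribution} of the shocks. A maximal total-variation coupling of $e_t^s\sim f_0$ with $\tilde e_t^s\sim\Pi_{k(n)}f_0$ gives at best $\mathbb{E}(|e_t^s-\tilde e_t^s|^2)\lesssim\delta$ with $\delta=\|f_0-\Pi_{k(n)}f_0\|_{TV}$ (and even that needs conditional moment control on the disagreement event, which is not among the hypotheses). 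Feeding $\left(\mathbb{E}|e_t^s-\tilde e_t^s|^2\right)^{1/2}\lesssim\delta^{1/2}$ through \textit{u(iii)} (Lipschitz in $e$), the contraction in \textit{u(i)}, and the Hölder exponent $\gamma$ in \textit{y(iii)} yields $\sup_t\mathbb{E}\|y_t^s-\tilde y_t^s\|^2\lesssim\delta^{\min(\gamma,2\gamma^2)}$, not $\delta^{2\gamma^2}$, and certainly not the near-quadratic term $\delta^2\log(\delta)^2$ that constitutes the first entry of the lemma's maximum. In the leading Lipschitz case $\gamma=1$ your route delivers $O(\delta)$ where the lemma asserts $O(\delta^2\log[k(n)]^2)$; the coupling irreversibly trades the TV distance (which enters \emph{linearly} in expectations of bounded functionals) for an $L^2$ pathwise discrepancy of order $\delta^{1/2}$, and squaring at the end does not recover the lost power.

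The missing idea is the paper's $m$-step truncation. For each $t$, construct $\tilde y_t^s$ by restarting the chain from $\tilde y_{t-m}^s=\tilde u_{t-m}^s=0$; the decay conditions \textit{y(i)}, \textit{u(i)} together with the hypothesized second-moment bounds give $\mathbb{E}\|y_t^s-\tilde y_t^s\|\leq\overline{C}\max(\overline{C}_1,\overline{C}_4)^{\gamma m}$ uniformly in $t$. The truncated process depends on only $m$ shocks, so the difference of its expected CF under $f_0^{\otimes m}$ versus $(\Pi_{k(n)}f_0)^{\otimes m}$ is bounded \emph{directly} by $2m\|f_0-\Pi_{k(n)}f_0\|_{TV}$ using only boundedness of the CF and a telescoping of the product density --- no coupling, no moments, and the TV distance enters to the first power. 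Choosing $m=-\frac{r}{\gamma\log[\max(\overline{C}_1,\overline{C}_4)]}\log[k(n)]$ balances the two errors at $O(\log[k(n)]^{2r/b+1}/k(n)^{r})$, whose square is the term $\log[k(n)]^{4r/b+2}/k(n)^{2r}$. Your attribution of that term to a comparison in the pseudo-norm $\|\cdot\|_m$ via Lemma \ref{lem:L2smooth} does not work: that lemma compares elements \emph{within} the sieve space $\paramspace_{k(n)}$, whereas $\param_0\notin\paramspace_{k(n)}$, and the $\log^2$ factor has nothing to do with the bandwidth $\underline{\sigma}_{k(n)}$ --- it is the squared truncation horizon $m^2\asymp\log[k(n)]^2$.
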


\begin{lemma}[Convergence Rate in $\|\cdot\|_m$] \label{lem:cv_rate_mixture_norm}
  Let $\delta_n = \sqrt{(k(n)\log[k(n)])^4/n}$ and $M_n = \log\log(n+1)$. Suppose the following undersmoothing assumptions hold: i) Rate of Convergence: $\| \hat \param_n-\param_0 \|_{weak}=O_p(\delta_n)$. ii) Negligible Bias: $\| \Pi_{k(n)} \param_0-\param_0 \|_{weak}=o(\delta_n)$.
  Furthermore, suppose that the population CF is smooth in $\param$ and satisfies:
    iii) Rate 1: uniformly over $\param \in \{ \param \in \paramspace_{osn}, \|\param-\param_0\|_{weak} \leq M_n\delta_n \}$:
    $\int \big| \frac{d\mathbb{E}(\hat \psi_n^S(\tau,\param_0))}{d\param}[\param-\param_0]-\frac{d\mathbb{E}(\hat \psi_n^S(\tau,\Pi_{k(n)}\param_0))}{d\param}[\param-\param_0] \big|^2 \pi(\tau)d\tau = O(\delta_n^2).$
    iv) Rate 2: $\Pi_{k(n)}\param_0$ satisfies 
    $\int \Big| \frac{d\mathbb{E}(\hat \psi_n^S(\tau,\Pi_{k(n)}\param_0))}{d\param}[\Pi_{k(n)} \param_0-\param_0] \Big|^2 \pi(\tau)d\tau = O(\delta_n^2).$
  Suppose $\underline{\lambda}_n = \lambda_{\min}( \int \frac{d\mathbb{E}(\hat \psi_n^S(\tau,\Pi_{k(n)}\param_0))}{d(\theta,\omega,\mu,\sigma)}^\prime \overline{\frac{d\mathbb{E}(\hat \psi_n^S(\tau,\Pi_{k(n)}\param_0))}{d(\theta,\omega,\mu,\sigma)}} \pi(\tau)d\tau )$ is strictly positive and $\delta_n\underline{\lambda}_n^{-1/2}=o(1)$ then: \[\|\hat \param_n-\Pi_{k(n)}\param_0\|_{m} =O_p(\delta_n \underline{\lambda}_n^{-1/2}).\]
\end{lemma}
The following stochastic equicontinuity result, together with a longer version presented in Lemma \ref{lem:stoch_eq_mixture}, is needed to prove asymption normality (Theorem \ref{th:asymnormal_mixture}).
\begin{lemma}[Stochastic Equicontinuity] \label{lem:stoch_eq_mixture_short}
  Let $\delta_{mn} = \delta_n \underline{\lambda}_n^{-1/2}$, $M_n = \log\log(n)$. If the assumptions in Lemma \ref{lem:cv_rate_mixture_norm} hold and $(M_n\delta_{mn})^{\frac{\gamma^2}{2}}\max(\log[k(n)]^2,|\log[M_n\delta_{mn}]|^2)k(n)^2=o(1)$, then:
  \begin{align*}
    &\sup_{\|\param-\Pi_{k(n)}\param_0\|_m \leq M_n \delta_{mn}} \int   \Big| [\hat \psi_n^S(\tau,\param)-\hat \psi_n^S(\tau,\Pi_{k(n)}\param_0)]-\mathbb{E}[\hat \psi_n^S(\tau,\param)-\hat \psi_n^S(\tau,\Pi_{k(n)}\param_0)]\Big|^2  \pi(\tau)d\tau  \\ &= o_p(1/n).
  \end{align*}
  Also, suppose that $\param \rightarrow \int \mathbb{E} \Big| \hat \psi_t^s(\tau,\param_0)-\hat \psi_t^s(\tau,\param) \Big|^2\pi(\tau)d\tau$ is continuous with respect to $\|\cdot\|_\paramspace$ at $\param=\param_0$, uniformly in $t \geq 1$, then a second stochastic equicontinuity result holds:
  \begin{align*}
    \sup_{\|\param-\Pi_{k(n)}\param_0\|_m \leq M_n \delta_{mn}} \int   \Big| [\hat \psi_n^S(\tau,\param)-\hat \psi_n^S(\tau,\param_0)]-\mathbb{E}[\hat \psi_n^S(\tau,\param)-\hat \psi_n^S(\tau,\param_0)]\Big|^2  \pi(\tau)d\tau  = o_p(1/n).
  \end{align*}
\end{lemma}

\newpage
\section{Proofs for the Main Results} \label{apx:asymptotic_main}

The proofs for the main results allow for a bounded linear operator $B$, as in \citet{Carrasco2000}, to weight the moments. The operator is assumed to be fixed:
\[ \hat Q_n^S(\param) = \int \Big| B\hat \psi_n(\tau)-B\hat \psi_n^S(\tau,\param) \Big|^2 \pi(\tau)d\tau. \]
Since $B$ is bounded linear there exists a $M_B>0$ such that for any two CFs:
\[ \int \Big| B\hat \psi_n(\tau)-B\hat \psi_n^S(\tau,\param) \Big|^2 \pi(\tau)d\tau \leq M_B^2 \int \Big| \hat \psi_n(\tau)-\hat \psi_n^S(\tau,\param) \Big|^2 \pi(\tau)d\tau. \]
As a result, the rate of convergence for the objective function with the weighting $B$ is the same as the rate of convergence without the operator $B$.\footnote{For results on estimating the optimal $B$ see \citet{Carrasco2000,Carrasco2007}. Using their method would lead to $M_{\hat B}\to \infty$ as $n\to \infty$ resulting in a slower rate of convergence for $\hat \param_n$. Having $M_{\hat B}\to \infty$ sufficiently slow would not alter the main results besides having a different, possibly more efficient, asympotic variance.}

\subsection{Consistency} \label{apx:proofsconsistency}

\begin{proof}[Proof of Lemma \ref{lem:mixtures}]:\\
The difference between $e_t^s$ and $\tilde e_t^s$ can be split into two terms:
\begin{align}
& \sum_{j=1}^{k(n)} \left( \mymathcal{1}_{\nu^s_t \in [\sum_{l=0}^{j-1} \omega_l, \sum_{l=0}^{j} \omega_l]} - \mymathcal{1}_{\nu^s_t \in [\sum_{l=0}^{j-1} \tilde \omega_l, \sum_{l=0}^{j} \tilde \omega_l]} \right) \left( \mu_j + \sigma_j Z_{t,j}^s \right) \label{eq:mixt1}\\
& \sum_{j=1}^{k(n)} \mymathcal{1}_{\nu^s_t \in [\sum_{l=0}^{j-1} \tilde \omega_l, \sum_{l=0}^{j} \tilde \omega_l]} \left( \mu_j-\tilde \mu_j + [\sigma_j-\tilde \sigma_j] Z_{t,j}^s \right). \label{eq:mixt2}
\end{align}
To bound the term (\ref{eq:mixt1}) in expectation, combine the fact that $|\mu_j| \leq \bar{\mu}_{k(n)}, |\sigma_j| \leq \bar{\sigma}$ and $\nu_t^s$ and $Z_{t,j}^s$ are independent so that:
\begin{align*}
&\left[\mathbb{E} \left( \sup_{\| (\omega,\mu,\sigma)-(\tilde\omega,\tilde\mu,\tilde\sigma) \|_2 \leq \delta} \Big| \sum_{j=1}^{k(n)} \left( \mymathcal{1}_{\nu^s_t \in [\sum_{l=0}^{j-1} \omega_l, \sum_{l=0}^{j} \omega_l]} - \mymathcal{1}_{\nu^s_t \in [\sum_{l=0}^{j-1} \tilde \omega_l, \sum_{l=0}^{j} \tilde \omega_l]} \right) \left( \mu_j + \sigma_j Z_{t,j}^s \right) \Big|^2 \right) \right]^{1/2}\\
& \leq  \sum_{j=1}^{k(n)} \left[ \mathbb{E} \left( \sup_{\| (\omega,\mu,\sigma)-(\tilde\omega,\tilde\mu,\tilde\sigma) \|_2 \leq \delta} \Big|  \mymathcal{1}_{\nu^s_t \in [\sum_{l=0}^{j-1} \omega_l, \sum_{l=0}^{j} \omega_l]} - \mymathcal{1}_{\nu^s_t \in [\sum_{l=0}^{j-1} \tilde \omega_l, \sum_{l=0}^{j} \tilde \omega_l]} \Big|^2 \right) \right]^{1/2} \\ &\times \left( \bar{\mu}_{k(n)} + \bar\sigma \mathbb{E} \left( |Z_{t,j}^s|^2 \right)^{1/2}  \right).
\end{align*}
The last term is bounded above by $\bar{\mu} + \bar\sigma C_Z.$ Next, note that\\ $ \mymathcal{1}_{\nu^s_t \in [\sum_{l=0}^{j-1} \omega_l, \sum_{l=0}^{j} \omega_l]} - \mymathcal{1}_{\nu^s_t \in [\sum_{l=0}^{j-1} \tilde \omega_l, \sum_{l=0}^{j} \tilde \omega_l]} \in \{0,1\}$ so that:
\begin{align*}
&\mathbb{E} \left( \sup_{\| (\omega,\mu,\sigma)-(\tilde\omega,\tilde\mu,\tilde\sigma) \|_2 \leq \delta} \Big|  \mymathcal{1}_{\nu^s_t \in [\sum_{l=0}^{j-1} \omega_l, \sum_{l=0}^{j} \omega_l]} - \mymathcal{1}_{\nu^s_t \in [\sum_{l=0}^{j-1} \tilde \omega_l, \sum_{l=0}^{j} \tilde \omega_l]} \Big|^2 \right)\\
&= \mathbb{E} \left( \sup_{\| (\omega,\mu,\sigma)-(\tilde\omega,\tilde\mu,\tilde\sigma) \|_2 \leq \delta} \Big|  \mymathcal{1}_{\nu^s_t \in [\sum_{l=0}^{j-1} \omega_l, \sum_{l=0}^{j} \omega_l]} - \mymathcal{1}_{\nu^s_t \in [\sum_{l=0}^{j-1} \tilde \omega_l, \sum_{l=0}^{j} \tilde \omega_l]} \Big| \right).
\end{align*}
Also, for any $j$: $|\sum_{l=0}^j \tilde \omega_l - \sum_{l=0}^j  \omega_l| \leq \sum_{l=0}^j |\tilde \omega_l - \sum_{l=0}^j  \omega_l| \leq \left( \sum_{l=0}^j |\tilde \omega_l - \omega_l|^2 \right)^{1/2} \leq \|\tilde \omega - \omega\|_2 \leq \delta.$ Following a similar approach to \citet{Chen2003}:
\begin{align*}
&\left[ \mathbb{E} \left( \sup_{\| (\omega,\mu,\sigma)-(\tilde\omega,\tilde\mu,\tilde\sigma) \|_2 \leq \delta} \Big|  \mymathcal{1}_{\nu^s_t \in [\sum_{l=0}^{j-1} \omega_l, \sum_{l=0}^{j} \omega_l]} - \mymathcal{1}_{\nu^s_t \in [\sum_{l=0}^{j-1} \tilde \omega_l, \sum_{l=0}^{j} \tilde \omega_l]} \Big| \right) \right]^{1/2} \\
&\leq \left[ \mathbb{E} \left( \sup_{\| (\omega,\mu,\sigma)-(\tilde\omega,\tilde\mu,\tilde\sigma) \|_2 \leq \delta} \Big|  \mymathcal{1}_{\nu^s_t \in [(\sum_{l=0}^{j-1} \tilde \omega_l)-\delta, (\sum_{l=0}^{j} \tilde \omega_l)+\delta]} - \mymathcal{1}_{\nu^s_t \in [\sum_{l=0}^{j-1} \tilde \omega_l, \sum_{l=0}^{j} \tilde \omega_l]} \Big| \right) \right]^{1/2}\\
&= \left[ \left( [(\sum_{l=0}^{j} \tilde \omega_l)+\delta] - [(\sum_{l=0}^{j-1} \tilde \omega_l)-\delta] - [(\sum_{l=0}^{j} \tilde \omega_l) - (\sum_{l=0}^{j-1} \tilde \omega_l)] \right) \right]^{1/2} =\sqrt{2\delta}.
\end{align*}
Overall the term (\ref{eq:mixt1}) is bounded above by $\sqrt{2}(1+C_Z)\left(\bar \mu_{k(n)} + \bar \sigma + k(n) \right) \sqrt{\delta}$.
The term (\ref{eq:mixt2}) can be bounded above by using $0 \leq \mymathcal{1}_{\nu^s_t \in [\sum_{l=0}^{j-1} \tilde \omega_l, \sum_{l=0}^{j} \tilde \omega_l]} \leq 1$ and:
\begin{align*}
&\left[ \mathbb{E} \left( \sup_{\| (\omega,\mu,\sigma)-(\tilde\omega,\tilde\mu,\tilde\sigma) \|_2 \leq \delta} \Big| \sum_{j=1}^{k(n)} \mymathcal{1}_{\nu^s_t \in [\sum_{l=0}^{j-1} \tilde \omega_l, \sum_{l=0}^{j} \tilde \omega_l]} \left( \mu_j-\tilde \mu_j + [\sigma_j-\tilde \sigma_j] Z_{t,j}^s \right) \Big|^2 \right) \right]^{1/2}\\
&\leq \sum_{j=1}^{k(n)} \left[ \mathbb{E} \left( \sup_{\| (\omega,\mu,\sigma)-(\tilde\omega,\tilde\mu,\tilde\sigma) \|_2 \leq \delta} \Big|  (\mu_j-\tilde \mu_j) + [\sigma_j-\tilde \sigma_j] Z_{t,j}^s\Big|^2 \right)\right] ^{1/2} \\
&\leq \sum_{j=1}^{k(n)} \sup_{\| (\omega,\mu,\sigma)-(\tilde\omega,\tilde\mu,\tilde\sigma) \|_2 \leq \delta} \left( |\mu_j-\tilde \mu_j| + |\sigma_j-\tilde \sigma_j|C_Z\right)\\
&\leq (1+C_Z) \sup_{\| (\omega,\mu,\sigma)-(\tilde\omega,\tilde\mu,\tilde\sigma) \|_2 \leq \delta}    \left(\sum_{j=1}^{k(n)} |\mu_j-\tilde \mu_j|^2 + |\sigma_j-\tilde \sigma_j|^2\right)^{1/2} \leq (1+C_Z)\delta.
\end{align*}
Without loss of generality assume that $\delta \leq 1$ so that:
\begin{align*}
  \left[ \mathbb{E} \left( \sup_{\| (\omega,\mu,\sigma)-(\tilde\omega,\tilde\mu,\tilde\sigma) \|_2 \leq \delta} \Big| e_t^s -\tilde e_t^s \Big|^2 \right) \right]^{1/2} \leq 2\sqrt{2}(1+C_Z) \left( 1 +\bar{\mu}_{k(n)}+\bar{\sigma}+k(n) \right)\delta^{1/2}.
\end{align*}
which concludes the proof.
\end{proof}

\begin{proof}[Proof of Lemma \ref{lem:L2smooth}:]
     First note that the cosine and sine functions are uniformly Lispchitz on the real line with Lipschitz coefficient $1$. This implies for any two $(\mathbf{y}_1,\mathbf{y}_2,\mathbf{x})$ and any $\tau \in \mathbb{R}^{d_\tau}$:
  \begin{align*}
    & |\cos(\tau^\prime(\mathbf{y}_1,\mathbf{x}) )-\cos(\tau^\prime(\mathbf{y}_2,\mathbf{x}) )| \leq |\tau^\prime(\mathbf{y}_1-\mathbf{y}_2,0)| \leq \|\tau\|_\infty \|\mathbf{y}_1-\mathbf{y}_2\|,\\
    & |\sin(\tau^\prime(\mathbf{y}_1,\mathbf{x}) )-\sin(\tau^\prime(\mathbf{y}_2,\mathbf{x}) )| \leq |\tau^\prime(\mathbf{y}_1-\mathbf{y}_2,0)| \leq \|\tau\|_\infty \|\mathbf{y}_1-\mathbf{y}_2\|.
  \end{align*}
  As a result, the moment function is also Lipschitz in $\mathbf{y},\mathbf{x}$:
  \begin{align*}
    & |e^{i\tau^\prime(\mathbf{y}_1,\mathbf{x})}-e^{i\tau^\prime(\mathbf{y}_2,\mathbf{x})}|\pi(\tau)^{\frac{1}{4}} \leq 2 \|\tau\|_\infty \pi(\tau)^{\frac{1}{4}}\|\mathbf{y}_1-\mathbf{y}_2\|.
  \end{align*}
  Since $\pi$ is chosen to be the Gaussian density, it satisfies $\sup_\tau \|\tau\|_\infty \pi(\tau)^{\frac{1}{4}} \leq C_{\pi} < \infty$ and $\pi(\tau)^{\frac{1}{2}} \propto \pi(\tau/\sqrt{2})$ which has finite integral.
  The Lispschitz properties of the moments combined with the conditions properties of $\pi$ imply that the $L^2$-smoothness of the moments is implied by the $L^2$-smoothness of the simulated data itself. As a result, the remainder of the proof focuses on the $L^2$-smoothness of $\mathbf{y}_t^s$.
  First note that since $\mathbf{y}_t=(y_t,\dots,y_{t-L})$:
  \begin{align*}
    &\|\mathbf{y}_t(\beta_1)-\mathbf{y}_t(\beta_2)\| \leq \sum_{j=1}^L \|y_{t-j}(\beta_1)-y_{t-j}(\beta_2)\|.
  \end{align*}
  To bound the term in $\mathbf{y}$ above, it suffices to bound the expression for each term $y_t$ with arbitrary $t \geq 1$.
  Assumptions \ref{ass:DGPMixt}, \ref{ass:DGPmixtbis} imply that, for some $\gamma \in (0,1]$:
  \begin{align*}
    &\left[ \mathbb{E} \left( \sup_{\|\beta_1-\beta_2\|_m}\|y_t(\beta_1)-y_t(\beta_2)\|^2 \right)\right]^{1/2} \leq \overline{C}_1 \left[\mathbb{E} \left( \sup_{\|\beta_1-\beta_2\|_m}\|y_{t-1}(\beta_1)-y_{t-1}(\beta_2)\|^2  \right) \right]^{1/2} \\&+ \overline{C}_2 \frac{\delta^\gamma}{\underline{\sigma}_{k(n)}^{2\gamma}} + \overline{C}_3 \left[ \mathbb{E} \left( \sup_{\|\beta_1-\beta_2\|_m}\|u_{t}(\beta_1)-u_{t}(\beta_2)\|^2  \right) \right]^{\gamma/2}.
  \end{align*}
  The term $ \frac{\delta^\gamma}{\underline{\sigma}_{k(n)}^{2\gamma}} $ comes from the fact that $\|\beta_1-\beta_2\|_\infty \leq \frac{\|\beta_1-\beta_2\|_m}{\underline{\sigma}^2_{k(n)}}$ and $\|\beta_1-\beta_2\|_{TV} \leq \frac{\|\beta_1-\beta_2\|_m}{\underline{\sigma}_{k(n)}}$ on $\paramspace_{k(n)}$. Without loss of generality, suppose that $\underline{\sigma}_{k(n)} \leq 1$.\footnote{Recall that by assumption  $\underline{\sigma}_{k(n)} = O(\frac{log[k(n)]^{2/b}}{k(n)})$ goes to zero.}
  Applying this inequality recursively, and using the fact that $y_0^s,u_0^s$ are the same regardless of $\beta$, yields:
  \begin{align*}
    &\left[ \mathbb{E} \left( \sup_{\|\beta_1-\beta_2\|_m}\|y_t(\beta_1)-y_t(\beta_2)\|^2 \right) \right]^{1/2} \\&\leq  \frac{\overline{C}_2}{1-\overline{C}_1} \frac{\delta^\gamma}{\underline{\sigma}_{k(n)}^{2\gamma}} + \overline{C}_3 \sum_{l=0}^{t-1} \overline{C}_1^l \left[ \mathbb{E} \left( \sup_{\|\beta_1-\beta_2\|_m}\|u_{t-l}(\beta_1)-u_{t-l}(\beta_2)\|^2  \right) \right]^{\gamma/2}.
  \end{align*}
  Using Lemmas \ref{lem:mixtures} and \ref{lem:tailsdist} and the same approach as above:
  \begin{align*}
    &\left[ \mathbb{E} \left( \sup_{\|\beta_1-\beta_2\|_m}\|u_t(\beta_1)-u_t(\beta_2)\|^2 \right) \right]^{1/2} \leq  \overline{C}_4 \left[ \mathbb{E} \left( \sup_{\|\beta_1-\beta_2\|_m}\|u_{t-1}(\beta_1)-u_{t-1}(\beta_2)\|^2 \right) \right]^{1/2} \\&+ \overline{C}_5 \frac{\delta^\gamma}{\underline{\sigma}_{k(n)}^{2\gamma}} + \overline{C}_6 C \left(k(n)+ \bar \mu_{k(n)} + \bar \sigma \right) \delta^{\gamma/2}.
  \end{align*}
  Again, applying this inequality recursively yields:
  \begin{align*}
    &\left[ \mathbb{E} \left( \sup_{\|\beta_1-\beta_2\|_m}\|u_t(\beta_1)-u_t(\beta_2)\|^2 \right) \right]^{1/2} \leq  \frac{\overline{C}_5}{1-\overline{C}_4} \frac{\delta^\gamma}{\underline{\sigma}_{k(n)}^{2\gamma}} + \frac{\overline{C}_6}{1-\overline{C}_4} C \left(k(n)+ \bar\mu_{k(n)} + \bar\sigma \right) \delta^{\gamma/2}.
  \end{align*}
  Putting everything together:
  \begin{align*}
    &\left[ \mathbb{E} \left( \sup_{\|\beta_1-\beta_2\|_m}\|y_t(\beta_1)-y_t(\beta_2)\|^2 \right) \right]^{1/2} \\&\leq  \frac{\overline{C}_2}{1-\overline{C}_1} \frac{\delta^\gamma}{\underline{\sigma}_{k(n)}^{2\gamma}} + \frac{\overline{C}_3}{1-\overline{C}_1} \left(  \frac{\overline{C}_5}{1-\overline{C}_4} \frac{\delta^\gamma}{\underline{\sigma}_{k(n)}^{2\gamma}} + \frac{\overline{C}_6}{1-\overline{C}_4} C \left(k(n)+ \bar\mu_{k(n)} + \bar\sigma \right) \delta^{\gamma/2} \right)^\gamma.
  \end{align*}
  Without loss of generality, suppose that $\delta \leq 1$. Then, for some positive constant $\overline{C}$:
  \begin{align*}
    &\left[ \mathbb{E} \left( \sup_{\|\beta_1-\beta_2\|_m}\|y_t(\beta_1)-y_t(\beta_2)\|^2 \right) \right]^{1/2} \leq  \overline{C} \max \left( \frac{\delta^{\gamma^2}}{ \underline{\sigma}_{k(n)}^{2\gamma^2} }, [k(n)+\overline{\mu}_{k(n)}+\overline{\sigma}]^\gamma \delta^{\gamma^2/2} \right).
  \end{align*}
\end{proof}

\begin{proof}[Proof of Theorem \ref{th:consistencyMixtures}:]

The main idea is to show that the Assumptions for Lemma \ref{lem:consistency} hold.
The proof proceeds in in four steps:
\begin{enumerate}
\item First, geometric ergodicity and uniform boundedness of $\hat \psi_n$ implies:
\begin{align*}
  \int |\hat \psi_n(\tau)-\mathbb{E}(\hat \psi_n(\tau))|^2 \pi(\tau)d\tau = O_p(1/n)
\end{align*}
\item Then Lemma \ref{lem:L2smooth} combined with Lemmas \ref{lem:CoveringNumbers}, \ref{lemma:maxineq_dep} imply that uniformly over $\param \in \paramspace_{k(n)}$:
\begin{align*}
  \int |\hat \psi^S_n(\tau,\param)-\mathbb{E}(\hat \psi^S_n(\tau,\param))|^2 \pi(\tau)d\tau = O_p(C_n/n),
\end{align*}
where $C_n$ is given below.
\item The triangle inequality and the previous steps imply that, uniformly over $\param \in \paramspace_{k(n)}$:
\begin{align*}
  \int \Big|[\hat \psi^S_n(\tau,\param)-\hat \psi_n(\tau)]-\mathbb{E}[\hat \psi^S_n(\tau,\param)-\hat \psi_n(\tau)]\Big|^2\pi(\tau)d\tau = O_p(\max(1,C_n)/n).
\end{align*}
And, because $B$ is a bounded linear operator:
\begin{align*}
  &\int \Big|[B\hat \psi^S_n(\tau,\param)-B\hat \psi_n(\tau)]-\mathbb{E}[B\hat \psi^S_n(\tau,\param)-B\hat \psi_n(\tau)]\Big|^2\pi(\tau)d\tau \\ &\leq
  M_B^2 \int \Big|[\hat \psi^S_n(\tau,\param)-\hat \psi_n(\tau)]-\mathbb{E}[\hat \psi^S_n(\tau,\param)-\hat \psi_n(\tau)]\Big|^2\pi(\tau)d\tau = O_p(\max(1,C_n)/n).
\end{align*}
\item By the inequality $|a-b|^2 \geq 1/2|a|^2+|b|^2$ and the previous step, uniformly over $\param\in\paramspace_{k(n)}$:
\begin{align*}
  &1/2\int |B\hat \psi^S_n(\tau,\param)-B\hat \psi_n(\tau)|^2 \pi(\tau)d\tau \\ &\leq \int |\mathbb{E}(B\hat \psi^S_n(\tau,\param)-B\hat \psi_n(\tau))|^2 \pi(\tau)d\tau + O_p(\max(1,C_n)/n)
\end{align*}
and $1/2\int |\mathbb{E}(B\hat \psi^S_n(\tau,\param)-B\hat \psi_n(\tau))|^2 \pi(\tau)d\tau \leq \int |B\hat \psi^S_n(\tau,\param)-B\hat \psi_n(\tau)|^2 \pi(\tau)d\tau + O_p(\max(1,C_n)/n).$
  \end{enumerate}
  This will help show that condition d) in Lemma \ref{lem:consistency} holds.

  First, consider steps \textit{1.} and \textit{2}:\\
\textit{Step 1.:}
For $M>0$, a convergence rate $r_n$ and Markov's inequality:
\begin{align*} \mathbb{P} \left(   \int |\hat \psi_n(\tau)-\mathbb{E}(\hat \psi_n(\tau))|^2 \pi(\tau)d\tau \geq M r_n \right) &\leq \frac{1}{Mr_n}  \mathbb{E} \left(   \int |\hat \psi_n(\tau)-\mathbb{E}(\hat \psi_n(\tau))|^2 \pi(\tau)d\tau  \right)
\\ &= \frac{1}{Mr_n}\int \mathbb{E} \left( |\hat \psi_n(\tau)-\mathbb{E}(\hat \psi_n(\tau))|^2 \right)\pi(\tau)d\tau
\\ & \leq \frac{2}{Mr_n}\frac{1+24 \sum_{m \geq 0} \alpha(m)^{1/p}}{n} \int \pi(\tau)d\tau
\\ &\leq \frac{C_{\alpha,p}}{Mr_nn}.
\end{align*}
The last two inequalities come from Lemma \ref{lemma:covariance_ineq}. If the data is iid then the mixing coefficients $\alpha(m)=0$ for all $m \geq 1$. $C_{\alpha,p}$ is a constant that only depends on the mixing rate $\alpha$, $p$ and the bound on $|\hat \psi_t(\tau)-\mathbb{E}(\hat \psi_t(\tau))| \leq 2$. For $r_n=1/n$ and $M \to \infty$ the probability goes to zero. As a result: $\int |\hat \psi_n(\tau)-\mathbb{E}(\hat \psi_n(\tau))|^2 \pi(\tau)d\tau=O_p(1/n)$.\\

\textit{Step 2.:} The proof is similar to the proof of Lemma C.1 in \citet{Chen2012}. It also begins similarly to \textit{Step 1}, for $M>0$, a convergence rate $r_n$; using Markov's inequality:
\begin{align*} &\mathbb{P} \left(  \sup_{\param \in \paramspace_{k(n)}} \int |\hat \psi^S_n(\tau,\param)-\mathbb{E}(\hat \psi^S_n(\tau,\param))|^2 \pi(\tau)d\tau \geq M r_n \right) \\ &\leq \frac{1}{Mr_n}  \mathbb{E} \left(   \sup_{\param \in \paramspace_{k(n)}} \int |\hat \psi^S_n(\tau,\param)-\mathbb{E}(\hat \psi^S_n(\tau,\param))|^2 \pi(\tau)d\tau  \right)
\\ &\leq \frac{1}{Mr_n}\int \mathbb{E} \left(  \sup_{\param \in \paramspace_{k(n)}} |\hat \psi^S_n(\tau,\param)-\mathbb{E}(\hat \psi^S_n(\tau))|^2 \right)\pi(\tau)d\tau
\\ &\leq \frac{1}{Mr_n}\int \mathbb{E} \left(  \sup_{\param \in \paramspace_{k(n)}} |\hat \psi^s_n(\tau,\param)-\mathbb{E}(\hat \psi^s_n(\tau))|^2 \right)\pi(\tau)d\tau
\end{align*}

Suppose that there is an upper bound $C_n$ such that for all $\tau$:
\[ \mathbb{E} \left(  \sup_{\param \in \paramspace_{k(n)}} |[\hat \psi^s_n(\tau,\param)-\mathbb{E}(\hat \psi^s_n(\tau,\param))]\pi(\tau)^{1/(2+\eta)}|^2 \right) \leq C_n/n\]
If the following also holds $\int \pi(\tau)^{1-2/(2+\eta)}d\tau = C_\eta < \infty$ then:
\[ \frac{1}{Mr_n}\int \mathbb{E} \left(  \sup_{h \in \paramspace_{k(n)}} |\hat \psi^s_n(\tau,\param)-\mathbb{E}(\hat \psi^s_n(\tau,\param))|^2 \right)\pi(\tau)d\tau \leq \frac{C_\eta C_n}{Mr_n n}. \]
Take $r_n = C_n/n = o(1)$, then for $M \to \infty$ the probability goes to zero. As a result:
\[\sup_{\param \in \paramspace_{k(n)}} \int |\hat \psi^S_n(\tau,\param)-\mathbb{E}(\hat \psi^S_n(\tau,\param))|^2 \pi(\tau)d\tau = O_p(C_n/n).\]

The bounds $C_n$ are now computed, first in the iid case. By theorem 2.14.5 of \citet{VanderVaart1996}:
\begin{align*}
&\mathbb{E} \left(  \sup_{\param \in \paramspace_{k(n)}} \Big|\sqrt{n}[\hat \psi^s_n(\tau,\param)-\mathbb{E}(\hat \psi^s_n(\tau,\param))]\pi(\tau)^{1/(2+\eta)}\Big|^2 \right) \\& \leq \left( 1+ \mathbb{E} \left(  \sup_{\param \in \paramspace_{k(n)}} \Big|\sqrt{n}[\hat \psi^s_n(\tau,\param)-\mathbb{E}(\hat \psi^s_n(\tau,\param))]\pi(\tau)^{1/(2+\eta)}\Big| \right)\right)^2.
\end{align*}
Also, by theorem 2.14.2 of \citet{VanderVaart1996} there exists a universal constant $K>0$ such that for each $\tau \in \mathbb{R}^{d_\tau}$:
\[ \mathbb{E} \left(  \sup_{\param \in \paramspace_{k(n)}} \Big|\sqrt{n}[\hat \psi^s_n(\tau,\param)-\mathbb{E}(\hat \psi^s_n(\tau,\param))]\pi(\tau)^{1/(2+\eta)}\Big| \right) \leq K \int_0^1 \sqrt{1+\log N_{[\,]}(x,\Psi_{k(n)},\|\cdot\|)}dx \]
with $\Psi_{k(n)} = \big\{\psi: \paramspace_{k(n)}\to \mathbb{C}, \param \to  \psi_t^S(\tau,\param)\pi(\tau)^{1/(2+\eta)}\big\}$, $N_{[\,]}$ is the covering number with bracketing. Because of the $L^p$-smoothness, it is bounded above by:
\[ N_{[\,]}(x,\Psi_{k(n)},\|\cdot\|) \leq N_{[\,]}(\frac{x^{1/\gamma}}{C^{1/\gamma}},\paramspace_{k(n)},\|\cdot\|) \leq C^\prime N_{[\,]}(x^{1/\gamma},\paramspace_{k(n)},\|\cdot\|). \]
Let $\sqrt{C_n}=\sqrt{1+\log N_{[\,]}(x^{1/\gamma},\paramspace_{k(n)},\|\cdot\|)}dx$, together with the previous inequality, it implies:
\[ \mathbb{E} \left(  \sup_{\param \in \paramspace_{k(n)}} \Big|\sqrt{n}[\hat \psi^s_n(\tau,\param)-\mathbb{E}(\hat \psi^s_n(\tau,\param))]\pi(\tau)^{1/(2+\eta)}\Big|^2 \right) \leq \left(1 +K\sqrt{C_n} \right)^2 \leq 4(1+K^2)C_n.\]
To conclude, divide by $n$ on both sides to get the bound:
\[ \mathbb{E} \left(  \sup_{\param \in \paramspace_{k(n)}} \Big|[\hat \psi^s_n(\tau,\param)-\mathbb{E}(\hat \psi^s_n(\tau,\param))]\pi(\tau)^{1/(2+\eta)}\Big|^2 \right)  \leq 4(1+K^2)C_n/n.\]

For the dependent case, Lemma \ref{lemma:maxineq_dep} implies that if $\hat \psi^s_t(\tau,\param)$ is $\alpha$-mixing at an exponential rate, the moments are bounded and the sieve spaces are compact:
\[ \mathbb{E} \left(  \sup_{\param \in \paramspace_{k(n)}} \Big|\sqrt{n}[\hat \psi^s_n(\tau,\param)-\mathbb{E}(\hat \psi^s_n(\tau,\param))]\pi(\tau)^{1/(2+\eta)}\Big|^2 \right) \leq \left(1 +K\sqrt{C_n} \right)^2 \leq KC_n\]
with, for any $\vartheta \in (0,1)$ such that the integral exists:
\[C_n = \int_0^1 \left( x^{\vartheta/2-1}\sqrt{\log N_{[\,]}(x^{1/\gamma},\paramspace_{k(n)},\|\cdot\|_\paramspace)} + \log^2 N_{[\,]}(x^{1/\gamma},\paramspace_{k(n)},\|\cdot\|_\paramspace)\right) dx.\]
Lemma \ref{lem:CoveringNumbers} then derives bounds for $C_n$ in terms of $k(n)$.

\textit{Step 3.:} follows from the triangle inequality and the assumption that $B$ is a bounded linear operator.\\

\textit{Step 4.:} The following two inequalities can be derived from the inequality $|a-b|^2 \geq 1/2|a|^2 + |b|^2$, which is symmetric in $a$ and $b$:
\begin{align*}
&\Big|[B\hat \psi^S_n(\tau,\param)-B\hat \psi_n(\tau)]-\mathbb{E}[B\hat \psi^S_n(\tau,\param)-B\hat \psi_n(\tau)]\Big|^2 \\ &\geq 1/2 \Big|B\hat \psi^S_n(\tau,\param)-B\hat \psi_n(\tau)\Big|^2 + \Big|\mathbb{E}[B\hat \psi^S_n(\tau,\param)-B\hat \psi_n(\tau)]\Big|^2
\end{align*}
and
\begin{align*}
&\Big|[B\hat \psi^S_n(\tau,\param)-B\hat \psi_n(\tau)]-\mathbb{E}[B\hat \psi^S_n(\tau,\param)-B\hat \psi_n(\tau)]\Big|^2 \\ &\geq \Big|B\hat \psi^S_n(\tau,\param)-B\hat \psi_n(\tau)\Big|^2 + 1/2\Big|\mathbb{E}[B\hat \psi^S_n(\tau,\param)-B\hat \psi_n(\tau)]\Big|^2.
\end{align*}
Taking integrals on both sides and given that \[\int \Big|[B\hat \psi^S_n(\tau,\param)-B\hat \psi_n(\tau)]-\mathbb{E}[B\hat \psi^S_n(\tau,\param)-B\hat \psi_n(\tau)]\Big|^2 \pi(\tau)d\tau = O_p(C_n/n)\] uniformly in $h\in \paramspace_{k(n)}$, the desired result follows:
$1/2\hat Q_n^S(\param) \leq Q_n(\param)+O_p(C_n/n)$ and $1/2 Q_n(\param) \leq \hat Q_n^S(\param)+O_p(C_n/n).$

Lemma \ref{lem:CoveringNumbers} implies that $C_n = O(k(n)^4\log[k(n)]^4)$ in the dependent case, and $C_n = O(k(n)\log[k(n)])$ in the iid case. Combining this, condition (\ref{eq:separation}) in the Theorem, the rate for $Q_n(\Pi_{k(n)}\param_0)$ which is derived in Lemma \ref{lem:ObjApproxRate} together implies the conditions for Lemma \ref{lem:consistency} hold so that the estimator is consistent.
\end{proof}

\subsection{Rate of Convergence}

\begin{proof}[Proof of Theorem \ref{th:convrateMixture}:]

Let $C_n$ be as in the proof of Theorem \ref{th:consistencyMixtures},
let $\varepsilon>0$ and \[r_n = \max \left(\sqrt{\frac{C_n}{n}},\sqrt{\eta_n},\frac{\log[k(n)]^{2r/(b+2)}}{k(n)^{\gamma^2r}},\frac{1}{\sqrt{n}} \right).\] Proving the result amounts to showing that there exists $M>0$ and $N >0$ such that $ \forall n \geq N$:
  \begin{align}
    \mathbb{P} \left( \|\hat \param_n - \param_0 \|_{weak} \geq M r_n \right) < \varepsilon. \label{eq:cv_rate_ineq}
  \end{align}
  First, under the stated assumptions, the following inequalities hold:
  \begin{enumerate}
    \item $\hat Q_n^S(\param) \leq 2Q_n(\param) + O_p(C_n/n)$,
    \item $Q_n(\Pi_{k(n)}\param_0) \leq O(\max(\frac{\log[k(n)]^{4r/(b+2)}}{k(n)^{2\gamma^2r}},1/n^2)$,
    \item $\|\param-\param_0\|^2_{weak} \leq \underline{C}_w^{-1}[Q_n(\param)+O(1/n^2)]$.
  \end{enumerate}
  The first was derived in the proof of Theorem \ref{th:consistencyMixtures}, the second is due to Lemma \ref{lem:ObjApproxRate} and the third comes from Assumption \ref{ass:weaknorm} with Lemma \ref{lem:geom_ergo}.
  Applying them in order to (\ref{eq:cv_rate_ineq}):
  \begin{align*}
    &\mathbb{P} \left( \|\hat \param_n - \param_0 \|_{weak} \geq M r_n \right) \\& \leq \mathbb{P} \left( \inf_{\param \in \paramspace_{osn}, \, \|\param - \param_0 \|_{weak} \geq M r_n} \hat Q^S_n(\param) \leq \inf_{\param \in \paramspace_{osn}} \hat Q^S_n(\param) + O_p\left(\eta_n \right)  \right) \\
    &\leq   \mathbb{P} \left( \inf_{\param \in \paramspace_{osn}, \, \|\param - \param_0 \|_{weak} \geq M r_n} Q_n(\param) \leq \inf_{\param \in \paramspace_{osn}} Q_n(\param) + O_p\left(\max(\frac{C_n}{n},\eta_n ) \right)  \right) \\
    &\leq  \mathbb{P} \left( \inf_{\param \in \paramspace_{osn}, \, \|\param - \param_0 \|_{weak} \geq M r_n} Q_n(\param) \leq Q_n(\Pi_{k(n)}\param_0) + O_p\left(\max(\frac{C_n}{n},\eta_n ) \right)  \right)\\
    &\leq  \mathbb{P} \left(  \inf_{\param \in \paramspace_{osn}, \, \|\param - \param_0 \|_{weak} \geq M r_n} Q_n(\param) \leq O_p \left(\max(\frac{\log[k(n)]^{4r/(b+2)}}{k(n)^{2\gamma^2r}},\frac{1}{n},\frac{C_n}{n},\eta_n ) \right)  \right)\\
    &\leq  \mathbb{P} \left(  M^2 r_n^2 \leq O_p \left(\max(\frac{\log[k(n)]^{4r/(b+2)}}{k(n)^{2\gamma^2r}},\frac{1}{n},\frac{C_n}{n},\eta_n ) \right) \right)
  \end{align*}
  For $r_n$ defined above, this probability becomes: $\mathbb{P} \left(  M^2 \leq O_p(1) \right) \to 0 \text{ as } M\to \infty.$
  This concludes the first part of the proof. By definition of the local measure of ill-posedness:
  \begin{align*} \hspace*{-1.2cm}
  &\|\hat \param_n - \param_0\|_\paramspace \\ &\leq \|\Pi_{k(n)} \param_0 - \param_0\|_\paramspace + \|\hat \param_n - \Pi_{k(n)} \param_0\|_\paramspace \frac{\|\hat \param_n - \Pi_{k(n)} \param_0\|_{weak}}{\|\hat \param_n - \Pi_{k(n)} \param_0\|_{weak}}\\
  &\leq \|\Pi_{k(n)} \param_0 - \param_0\|_\paramspace + \tau_n\|\hat \param_n - \Pi_{k(n)} \param_0\|_{weak} \\
  &\leq \|\Pi_{k(n)} \param_0 - \param_0\|_\paramspace + \tau_n \left(\|\hat \param_n - \param_0\|_{weak} + \|\param_0 - \Pi_{k(n)} \param_0\|_{weak} \right) \\
  &\leq \|\Pi_{k(n)} \param_0 - \param_0\|_\paramspace + \tau_n \left(\|\hat \param_n - \param_0\|_{weak} + \underline{C}_w^{-1}Q_n(\Pi_{k(n)} \param_0) + O(1/n^2) \right).
  \end{align*}
  Applying Lemma \ref{lem:ObjApproxRate} again to $Q_n(\Pi_{k(n)} \param_0)$ concludes the proof.
\end{proof}
\begin{proof}[Proof of Corollary \ref{rmk:full_simu}:]
  The proof is immediate by taking the size of the simulated sample to be $nS$ instead of $n$, which implies $\sqrt{C_n/n} = k(n)^2\log[n]^2/\sqrt{n S}$ in the proof of Theorem \ref{th:convrateMixture}, and noting that $\hat\psi_n$ converges at a $\sqrt{n}$-rate so that convergence is no faster than $\min\left(\frac{k(n)^2\log[n]^2}{\sqrt{n\times S}},\frac{1}{\sqrt{n}}\right)$.
\end{proof}
\subsection{Asymptotic Normality}
\begin{proof}[Proof of Theorem \ref{th:asymnormal_mixture}:]

    Assumption \ref{ass:sufficient_cv_rate_mixture} ii-iii. allows the following linearization:
  \begin{align*}
    &\frac{\sqrt{n}}{\sigma^*_n} \left( \phi(\hat \param_n) - \phi(\param_0) \right)\\ &= \frac{\sqrt{n}}{\sigma^*_n}\frac{d\phi(\param_0)}{d\param}[\hat \param_n-\param_0]+o_p(1)\\
    &=\frac{\sqrt{n}}{\sigma^*_n}\frac{d\phi(\param_0)}{d\param}[\hat \param_n-\param_{0,n}]+o_p(1)\\
    &=\sqrt{n}\langle u_n^*,\hat \param_n-\param_{0,n}\rangle+o_p(1)\\
    &=\sqrt{n}\langle u_n^*,\hat \param_n-\param_{0}\rangle+o_p(1)\\
    &= \frac{\sqrt{n}}{2} \left( \int \left[ B\psi_\param(\tau,u_n^*) \overline{B\psi_\param(\tau,\hat \param_n-\param_0)} + \overline{B\psi_\param(\tau,u_n^*)} B\psi_\param(\tau,\hat \param_n-\param_0) \right]\right)\pi(\tau)d\tau +o_p(1).
  \end{align*}
  Using Lemma \ref{lem:equivs} \textit{a)} and \textit{b)}, replace the term $B\psi_\param(\tau,\hat \param_n-\param_0)$ under the integral with $B\hat \psi_n^S(\tau,\hat \param_n)-B\hat \psi_n^S(\tau,\param_0)$ so that:
  \begin{align*}
    \frac{\sqrt{n}}{\sigma^*_n} \left( \phi(\hat \param_n) - \phi(\param_0) \right) &=
    \frac{1}{2} \Big( \int \Big[ B\psi_\param(\tau,u_n^*) \overline{[B\hat \psi_n^S(\tau,\hat \param_n)-B\hat \psi_n^S(\tau,\param_0)]} \\
    &+ \overline{B\psi_\param(\tau,u_n^*)} [B\hat \psi_n^S(\tau,\hat \param_n)-B\hat \psi_n^S(\tau,\param_0)] \Big]\Big)\pi(\tau)d\tau+o_p(1).
  \end{align*}
  Now Lemma \ref{lem:equivs} \textit{c)} implies that $B\hat \psi_n^S(\tau,\hat \param_n)$ can be replaced with $B\hat \psi_n(\tau)$ up to a $o_p(1/\sqrt{n})$ so that the above becomes:
  \begin{align*}
    \frac{\sqrt{n}}{\sigma^*_n} \left( \phi(\hat \param_n) - \phi(\param_0) \right) &=
    \frac{\sqrt{n}}{2} \Big( \int \Big[ B\psi_\param(\tau,u_n^*) \overline{BZ_n^S(\tau)} + \overline{B\psi_\param(\tau,u_n^*)} BZ_n^S(\tau) \Big]\Big)\pi(\tau)d\tau+o_p(1).
  \end{align*}
  To conclude, apply a Central Limit Theorem to the real-valued random variable variable:
  \[\frac{1}{2}\int [B\psi_\param(\tau,u_n^*) \overline{B Z_t^S(\tau)} + \overline{B\psi_\param(\tau,u_n^*)} BZ_t^S(\tau)]\pi(\tau)d\tau.\]
  Because of $u_n^*$ and the geometric ergodicity of the simulated data, a CLT for non-stationary mixing triangular arrays is required. The results in \citet{Wooldridge1988} can be applied, the following verifies that the sufficient conditions hold. For any $\delta>0$:
  \begin{align*}
  &\mathbb{E} \left( \Big| \int [\psi_\param(\tau,u_n^*) \overline{Z_t^S(\tau)} + \overline{\psi_\param(\tau,u_n^*)} Z_t^S(\tau)]\pi(\tau)d\tau \Big|^{2+\delta} \right) \\
  &\leq 2^{2+\delta}\left[\mathbb{E} \left(  \int \Big|\overline{\psi_\param(\tau,u_n^*)} Z_t^S(\tau)\Big|\pi(\tau)d\tau \right)\right]^{2+\delta}\\
  &\leq 2^{2+\delta} \left( \int \Big|B\psi_\param(\tau,u_n^*)\Big|^2\pi(\tau)d\tau \right)^{\frac{2+\delta}{2}}  \left[\mathbb{E} \left(  \int \Big|BZ_t^S(\tau)\Big|^2\pi(\tau)d\tau \right)\right]^{\frac{2+\delta}{2}}.
  \end{align*}
  By definition of $u_n^*$ and $\|\cdot\|_{weak}$:
  \[ \left( \int \Big|B\psi_\param(\tau,u_n^*)\Big|^2\pi(\tau)d\tau \right)^{1/2} = \|v_n^*\|_{weak}/\sigma^*_n \in [1/\overline{a},1/\underline{a}].\]
  Because $B$ is bounded linear and $|Z_t^S(\tau)|\leq 2$: $\left[\mathbb{E} \left(  \int \Big|BZ_t^S(\tau)\Big|^2\pi(\tau)d\tau \right)\right]^{\frac{2+\delta}{2}} \leq [2M_B]^{2+\delta}.$
  Eventually, it implies:
  \begin{align*}
  &\mathbb{E} \left( \Big| \int [\psi_\param(\tau,u_n^*) \overline{Z_t^S(\tau)} + \overline{\psi_\param(\tau,u_n^*)} Z_t^S(\tau)]\pi(\tau)d\tau \Big|^{2+\delta} \right) \leq \frac{[4M_B]^{2+\delta}}{\underline{a}} < \infty.
  \end{align*}
  Given the mixing condition and the definition of $\sigma^*_n$:
  \[ \frac{\sqrt{n}}{2}\int [B\psi_\param(\tau,u_n^*)[ \overline{B Z_t^S(\tau)-B\mathbb{E}( Z_t^S(\tau))}] + \overline{B\psi_\param(\tau,u_n^*)} [BZ_t^S(\tau)-B\mathbb{E}( Z_t^S(\tau))]]\pi(\tau)d\tau \overset{d}{\to} \mathcal{N}(0,1). \]
  By geometric ergodicity and because the characteristic function is bounded $\sqrt{n}|\mathbb{E}( Z_t^S(\tau))| \leq C_\rho /\sqrt{n} = o(1)$, hence:
  \[ \frac{\sqrt{n}}{2}\int [B\psi_\param(\tau,u_n^*)\overline{B Z_t^S(\tau)} + \overline{B\psi_\param(\tau,u_n^*)} BZ_t^S(\tau)]\pi(\tau)d\tau \overset{d}{\to} \mathcal{N}(0,1). \]
  This concludes the proof.
  \end{proof}


\end{appendices}
\newpage

\begin{titlingpage} 
  \emptythanks
  \title{ {Supplement to\\\lQ A Sieve-SMM Estimator for Dynamic Models"}}
  \author{\Large Jean-Jacques Forneron\thanks{Department of Economics, Boston University, 270 Bay State Road, Boston, MA 02215.\newline Email: \href{mailto:jjmf@bu.edu}{jjmf@bu.edu}.}}
  \setcounter{footnote}{0}

  \clearpage 
  \maketitle 
  \thispagestyle{empty} 
  \begin{center}
  This Supplemental Material consists of Appendices 
  \ref{sec:proof_prelim}, \ref{apx:additional_results}, \ref{apx:proof_additional}, \ref{apx:add_appli} and \ref{apx:add_res} to the main text.
  \end{center}
\end{titlingpage}

\newpage %
\begin{appendices}

    \renewcommand\thetable{\thesection\arabic{table}}
    \renewcommand\thefigure{\thesection\arabic{figure}}
    \renewcommand{\theequation}{\thesection.\arabic{equation}}
    \renewcommand\thelemma{\thesection\arabic{lemma}}
    \renewcommand\thetheorem{\thesection\arabic{theorem}}
    \renewcommand\thedefinition{\thesection\arabic{definition}}
      \renewcommand\theassumption{\thesection\arabic{assumption}}
    \renewcommand\theproposition{\thesection\arabic{proposition}}
      \renewcommand\theremark{\thesection\arabic{remark}}
      \renewcommand\thecorollary{\thesection\arabic{corollary}}

\setcounter{page}{1}
\section{Proofs for the Preliminary Results} \label{sec:proof_prelim}

\begin{proof}[Proof of Lemma \ref{lem:ApproxSimuGAUT}]
  The proof proceeds by recursion. Denote $\Pi_{k(n)} f_j \in \mathcal{F}_{k(n)}$ the mixture approximation of $f_j$ from Lemma \ref{lem:Kruijer}. For $d_e=1$, Lemma \ref{lem:Kruijer} implies $\|f_1 -\Pi_{k(n)}f_1\|_{TV} = O(\frac{\log[k(n)]^{r/b}}{k(n)^r})$ and $\|f_1 -\Pi_{k(n)}f_1\|_{\infty} = O(\frac{\log[k(n)]^{r/b}}{k(n)^r}).$
  Suppose the result holds for $f_1\times \dots \times f_{d_e}$. Let $f=f_1\times \dots \times f_{d_e} \times f_{d_e+1}$; let:
  \begin{align*}
  &d_{t+1}=f_1\times \dots \times f_{d_e} \times f_{d_e+1}-\Pi_{k(n)}f_1\times \dots \times \Pi_{k(n)}f_{d_e} \times \Pi_{k(n)}f_{d_e+1}\\
  &d_{t}= f_1\times \dots \times f_{d_e}-\Pi_{k(n)}f_1\times \dots \times \Pi_{k(n)}f_{d_e}.
  \end{align*}
  The difference can be re-written recursively:
  \begin{align*}
  &d_{t+1}=d_t f_{d_e+1} + \Pi_{k(n)}f_1\times \dots \times \Pi_{k(n)}f_{d_e} \left(f_{d_e+1}-\Pi_{k(n)}f_{d_e+1} \right).
  \end{align*}
  Since $\int f_{d_e+1} = \int \Pi_{k(n)}f_1\times \dots \times \Pi_{k(n)}f_{d_e} =1$, the total variation distance is:
  $\|d_{t+1}\|_{TV}\leq \|d_t\|_{TV} + \|f_{d_e+1}-\Pi_{k(n)}f_{d_e+1} \|_{TV} = O( \frac{\log[k(n)]^{r/b}}{k(n)^r}).$ 
  And the supremum distance is:
  \begin{align*}
  &\|d_{t+1}\|_{\infty}\leq \|d_t\|_{\infty}\|f_{d_e+1} \|_\infty + \|\Pi_{k(n)}f_1\times \dots \times \Pi_{k(n)}f_{d_e}\|_{\infty}\|f_{d_e+1}-\Pi_{k(n)}f_{d_e+1} \|_{\infty}\\
  &\leq \|d_t\|_{\infty} \left(\|f_{d_e+1} \|_\infty + \|f_1\times \dots \times f_{d_e}\|_{\infty} \|f_{d_e+1}-\Pi_{k(n)}f_{d_e+1} \|_{\infty} \right)=O\left( \frac{\log[k(n)]^{r/b}}{k(n)^r}\right).
  \end{align*}
  \end{proof}

\begin{proof}[Proof of Lemma \ref{lem:tailsdist}]:\\
  To reduce notation, the $t$ and $s$ subscripts will be dropped in the following. The proof is similar for both $e_{1}$ and $e_{2}$ so the proof is only given for $e_{1}$.

  First, the densities of $e_{1}$ and $e_{2}$ are derived, the first two results follow. Noting that the draws are defined using quantile functions, inverting the formula yields: $\nu_{1} = \frac{1}{1-e_{1}^{2+\tail_1}}$. This is a proper CDF on $(-\infty,0]$ since $e_1 \rightarrow \frac{1}{1-e_{1}^{2+\tail_1}}$ is increasing and has limits $0$ at $-\infty$ and $1$ at $0$. Its derivative is the density function: $(2+\tail_1) \frac{e_1^{1+\tail_1}}{(1-e_1^{2+\tail_1})^2}$. It is continuous on $(-\infty,0]$ and has an asymptote at $-\infty$: $(2+\tail_1) \frac{e_1^{1+\tail_1}}{(1-e_1^{2+\tail_1})^2} \times e_1^{3+\tail_1} \to (2+\tail_1)$ as $e_1 \to -\infty$. Since $\tail_1 \in [\underline{\tail},\bar{\tail}]$ with $0<\underline{\tail}$ then $\mathbb{E}|e_1|^2 \leq C < \infty$ for some finite $C>0$. Similar results hold for $e_2$ which has density $(2+\tail_2) \frac{e_2^{1+\tail_2}}{(1+e_2^{2+\tail_2})^2}$ on $[0,+\infty)$.\\

  Second,  $\tail_1 \rightarrow e_1(\tail_1)$ is shown to be $L^2$-smooth. Let $|\tail_1-\tilde \tail_1| \leq \delta$, using the mean value theorem, for each $\nu_1$ there exists an intermediate value $\check{\tail_1} \in [\tail_1,\tilde \tail_1]$ such that:
  \[ \left( \frac{1}{\nu_1}-1\right)^{\frac{1}{2+\tail_1}}-\left( \frac{1}{\nu_1}-1\right)^{\frac{1}{2+\tilde \tail_1}} = \frac{1}{2+\check{\tail_1}} \log(\frac{1}{\nu_1}-1) \left( \frac{1}{\nu_1}-1\right)^{\frac{1}{2+\check{\tail_1}}} (\tail_1-\tilde \tail_1).\]
  The first term is bounded by $1/(2+\underline{\tail})$, the second is bounded by $\log(\frac{1}{\nu_1}+1) \left( \frac{1}{\nu_1}+1\right)^{\frac{1}{2+\underline{\tail}}}$, and the last term is bounded above, in absolute value, by $\delta$.

Finally, in order to conclude the proof, the integral $\int_0^{1} \log(\frac{1}{\nu_1}+1)\left( \frac{1}{\nu_1}+1 \right)^{\frac{2}{2+\underline{\tail}}}d\nu_1$ needs to be finite.
  By a change of variables, it can be re-written as: $\int_2^{\infty} \log(\nu) \nu ^{\frac{2}{2+\underline{\tail}}-2}d\nu.$
  Since $\frac{2}{2+\underline{\tail}}-2 < -1$, the integral is always finite and thus:
  \[ \left[ \mathbb{E}\left( \sup_{|\tail_1-\tilde \tail_1| \leq \delta} |e_{t,1}^s(\tail_1)-e_{t,1}^s(\tilde \tail_1)|^2 \right) \right]^{1/2} \leq \frac{\delta}{2+\underline{\tail}} \sqrt{\int_2^{\infty} \log(\nu) \nu ^{\frac{2}{2+\underline{\tail}}-2}d\nu}. \]
\end{proof}

\begin{proof}[Proof of Lemma \ref{lem:CoveringNumbers}:]
  Since $\paramspace_{k(n)}$ is contained in a ball of radius $\max(\overline{\mu}_{k(n)},\overline{\sigma},\|\theta\|_{\infty})$ in $\mathbb{R}^{3[k(n)+2]+d_\theta}$ under $\|\cdot\|_m$, the covering number for $\paramspace_{k(n)}$ can be computed under the $\|\cdot\|_m$ norm using a result from \citet{kolmogorov1959varepsilon}. As a result, the covering number $N(x,\paramspace_{k(n)},\|\cdot\|_m)$ satisfies:
  $N(x,\paramspace_{k(n)},\|\cdot\|_m) \leq 2\left( 3[k(n)+2]+d_\theta \right) \left( \frac{2\max(\bar\mu_{k(n)},\bar{\sigma})}{x}+1 \right)^{3[k(n)+2]+d_\theta}.$
  The rest follows from Lemmas \ref{lem:L2smooth} and \ref{lemma:maxineq_dep}.
\end{proof}

\begin{proof}[Proof of Lemma \ref{lem:ObjApproxRate}:]
  First, using the assumption that $B$ is a bounded linear operator:
  \begin{align*}
    &Q_n(\Pi_{k(n)}\param_0) \\ &\leq M^2_B \int \Big| \mathbb{E} \left( \hat \psi_n(\tau) - \hat \psi_n^S(\tau,\Pi_{k(n)}\param_0) \right) \Big|^2 \pi(\tau)d\tau \\
    &\leq 3M_B^2 \left( \int \Big| \mathbb{E} \left( \hat \psi_n(\tau) - \hat \psi_n^S(\tau,\param_0) \right) \Big|^2 \pi(\tau)d\tau + \int \Big| \mathbb{E} \left( \hat \hat \psi_n^S(\tau,\param_0) - \hat \psi_n^S(\tau,\Pi_{k(n)}\param_0) \right) \Big|^2 \pi(\tau)d\tau  \right)
  \end{align*}
  Each term can be bounded above individually. Re-write the first term in terms of distribution: $\Big| \mathbb{E} \left( \hat \psi_n(\tau) - \hat \psi_n^S(\tau,\param_0) \right) \Big| = \Big|\frac{1}{n} \sum_{t=1}^n\int e^{i\tau^\prime (\mathbf{y}_t,\mathbf{x_t})} [f_t^*(\mathbf{y}_t,\mathbf{x_t})-f_t(\mathbf{y}_t,\mathbf{x_t})]d\mathbf{y}_td\mathbf{x}_t \Big|$, 
  where $f_t$ is the distribution of $(\mathbf{y}_t(\param_0),\mathbf{x}_t)$ and $f_t$ the stationary distribution of $(\mathbf{y}_t(\param_0),\mathbf{x}_t)$.
  Using the geometric ergodicity assumption, for all $\tau$:
  \begin{align*}
    &\Big|\frac{1}{n} \sum_{t=1}^n\int e^{i\tau^\prime (\mathbf{y}_t,\mathbf{x_t})} [f_t^*(\mathbf{y}_t,\mathbf{x_t})-f_t(\mathbf{y}_t,\mathbf{x_t})]d\mathbf{y}_td\mathbf{x}_t \Big| \leq \frac{1}{n} \sum_{t=1}^n \int \Big|f_t^*(\mathbf{y}_t,\mathbf{x_t})-f_t(\mathbf{y}_t,\mathbf{x_t}) \Big|d\mathbf{y}_td\mathbf{x}_t
    \\&= \frac{2}{n} \sum_{t=1}^n \|f_t^*-f_t\|_{TV}
    \leq  \frac{2C_\rho}{n} \sum_{t=1}^n \rho^t \leq \frac{2C_\rho}{(1-\rho)n}
  \end{align*}
  for some $\rho \in (0,1)$ and $C_\rho >0$. This yields a first bound:
  \[ \int \Big| \mathbb{E} \left( \hat \psi_n(\tau) - \hat \psi_n^S(\tau,\param_0) \right) \Big|^2 \pi(\tau)d\tau \leq  \frac{4C^2_\rho}{(1-\rho)^2} \frac{1}{n^2} = O\left(\frac{1}{n^2}\right).\]
  
  The mixture norm $\|\cdot\|_m$ is not needed here to bound the second term since it involves population CFs. Some changes to the proof of Lemma \ref{lem:L2smooth} allows to find bounds in terms of $\|\cdot\|_\paramspace$ and $\|\cdot\|_{TV}$ for which Lemma \ref{lem:ApproxSimuGAUT} gives the approximation rates.
  
  To bound the second term, re-write the simulated data as:
  \begin{align*}
    & y_t^s = g_{obs,t}(\mathbf{x}_{t:1},\param,\mathbf{e}_{t:1}^s), \quad  u_t^s = g_{latent,t}(\param,\mathbf{e}_{t:1}^s)
  \end{align*}
  with $\param=(\theta,f)$, $e_t^s \sim f$ and $\mathbf{x}_{t:1}=(x_t,\dots,x_1),\mathbf{e}_{t:1}^s=(e_t^s,\dots,e_1^s)$. Under Assumption \ref{ass:DGPMixt} or \ref{ass:DGPmixtbis}, using the same sequence of shocks $(e_t^s)$:
  $\mathbb{E} \left( \Big\|g_{obs,t}(\mathbf{x}_{t:1},\param_0,\mathbf{e}_{t:1}^s)-g_{obs,t}(\mathbf{x}_{t:1},\Pi_{k(n)}\param_0,\mathbf{e}_{t:1}^s)\Big\|\right)
    \leq \overline{C} \| \Pi_{k(n)}f_0-f_0 \|^\gamma_\paramspace.$
  This is similar to the proof of Lemma \ref{lem:L2smooth}, first re-write the difference as:
  \begin{align*}
    &\mathbb{E} \Big( \Big\|g_{obs}(g_{obs,t-1}(\mathbf{x}_{t-1:1},\param_0,\mathbf{e}_{t-1:1}^s),x_t,\param_0, g_{latent}(g_{latent,t-1}(\param_0,\mathbf{e}_{t-1:1}^s),\param_0,e_t^s) )\\
    &-g_{obs}(g_{obs,t-1}(\mathbf{x}_{t-1:1},\Pi_{k(n)}\param_0,\mathbf{e}_{t-1:1}^s),x_t,\Pi_{k(n)}\param_0,g_{latent}(g_{latent,t-1}(\Pi_{k(n)}\param_0,\mathbf{e}_{t-1:1}^s),\Pi_{k(n)}\param_0,e_t^s \Big\|\Big).
  \end{align*}
  Using Assumptions \ref{ass:DGPMixt}-\ref{ass:DGPmixtbis}, the following recursive relationship holds:
  \begin{align*}
    &\mathbb{E} \Big( \Big\|g_{obs}(g_{obs,t-1}(\mathbf{x}_{t-1:1},\param_0,\mathbf{e}_{t-1:1}^s),x_t,\param_0, g_{latent}(g_{latent,t-1}(\param_0,\mathbf{e}_{t-1:1}^s),\param_0,e_t^s) )\\
    &\quad -g_{obs}(g_{obs,t-1}(\mathbf{x}_{t-1:1},\Pi_{k(n)}\param_0,\mathbf{e}_{t-1:1}^s),x_t,\Pi_{k(n)}\param_0,\\ &\quad\quad\quad g_{latent}(g_{latent,t-1}(\Pi_{k(n)}\param_0,\mathbf{e}_{t-1:1}^s),\Pi_{k(n)}\param_0,e_t^s)) \Big\|\Big)
    \\&\leq \Big[ \mathbb{E} \Big( \Big\|g_{obs}(g_{obs,t-1}(\mathbf{x}_{t-1:1},\param_0,\mathbf{e}_{t-1:1}^s),x_t,\param_0, g_{latent}(g_{latent,t-1}(\param_0,\mathbf{e}_{t-1:1}^s),\param_0,e_t^s) )\\&\quad
    -g_{obs}(g_{obs,t-1}(\mathbf{x}_{t-1:1},\Pi_{k(n)}\param_0,\mathbf{e}_{t-1:1}^s),x_t,\Pi_{k(n)}\param_0,\\ &\quad\quad\quad g_{latent}(g_{latent,t-1}(\Pi_{k(n)}\param_0,\mathbf{e}_{t-1:1}^s),\Pi_{k(n)}\param_0,e_t^s)) \Big\|^2\Big) \Big]^{1/2}
    \\
    & \leq \overline{C}_1 \Big[ \mathbb{E} \Big( \Big\|g_{obs,t-1}(\mathbf{x}_{t-1:1},\param_0,\mathbf{e}_{t-1:1}^s)-g_{obs,t-1}(x_{t-1},\dots,x_1,\Pi_{k(n)}\param_0,\mathbf{e}_{t-1:1}^s) \Big\|^2\Big) \Big]^{1/2}
    \\
    &+ \overline{C_2} \|\param_0 - \Pi_{k(n)}\param_0\|^\gamma_\paramspace
    + \overline{C}_3 \Big[\mathbb{E} \Big( \Big\|g_{latent,t}(\param_0,\mathbf{e}_{t:1}^s)-g_{latent,t}(\Pi_{k(n)}\param_0,\mathbf{e}_{t:1}^s) \Big\|^2\Big) \Big]^{\gamma/2}.
  \end{align*}
  The last term also has a recursive structure:
  \begin{align*}
  &\left[\mathbb{E} \Big( \Big\|g_{latent,t}(\param_0,\mathbf{e}_{t:1}^s)-g_{latent,t}(\Pi_{k(n)}\param_0,\mathbf{e}_{t:1}^s) \Big\|^2\Big)\right]^{1/2}
  \\&\leq \overline{C}_4 \left[\mathbb{E} \Big( \Big\|g_{latent,t-1}(\param_0,\mathbf{e}_{t-1:1}^s)-g_{latent,t-1}(\Pi_{k(n)}\param_0,\mathbf{e}_{t-1:1}^s) \Big\|^2\Big)\right]^{1/2} + \overline{C}_5 \|\param_0 - \Pi_{k(n)}\param_0\|^\gamma_\paramspace.
  \end{align*}
  Together these inequalities imply:
  \begin{align*}
    &\mathbb{E} \Big( \Big\|g_{obs}(g_{obs,t-1}(x_{t-1},\dots,x_1,\param_0,\mathbf{e}_{t-1:1}^s),x_t,\param_0,g_{latent}(g_{latent,t-1}(\param_0,\mathbf{e}_{t-1:1}^s),\param_0,e_t^s) )\\&-g_{obs}(g_{obs,t-1}(x_{t-1},\dots,x_1,\Pi_{k(n)}\param_0,\mathbf{e}_{t-1:1}^s),x_t,\Pi_{k(n)}\param_0,\\ &g_{latent}(g_{latent,t-1}(\Pi_{k(n)}\param_0,\mathbf{e}_{t-1:1}^s),\Pi_{k(n)}\param_0,e_t^s \Big\|\Big)
    \\&\leq \frac{1}{1-\overline{C}_1} \left( \overline{C}_2 \|\param_0 - \Pi_{k(n)}\param_0\|^\gamma_\paramspace + \overline{C}_3 \frac{\overline{C}_5^\gamma}{(1-\overline{C_4})^\gamma} \|\param_0 - \Pi_{k(n)}\param_0\|^{\gamma^2}_\paramspace\right).
  \end{align*}
  
  Recall that $\|\tau\|_\infty \sqrt{\pi(\tau)}$ is bounded above and $\pi(\tau)^{1/4}$ is integrable so that:
  \begin{align*}
  &\int \Big| \mathbb{E} \left( e^{i\tau^\prime (\mathbf{y}_t(\param_0,\mathbf{x}_{t:1}))} - e^{i\tau^\prime (\mathbf{y}_t(\Pi_{k(n)}\param_0,\mathbf{x}_{t:1}))} \right) \Big|^2 \pi(\tau) d\tau \\ & \leq \left( \overline{C}_2 \|\param_0 - \Pi_{k(n)}\param_0\|^\gamma_\paramspace + \overline{C}_3 \frac{\overline{C}_5^\gamma}{(1-\overline{C_4})^\gamma} \|\param_0 - \Pi_{k(n)}\param_0\|^{\gamma^2}_\paramspace\right)\frac{\sup_\tau [\|\tau\|_{\infty} \sqrt{\pi(\tau)}] \int \pi(\tau)^{1/4}d\tau}{1-\overline{C}_1}.
  \end{align*}

  To conclude the proof, the difference due to $e_t^s$ needs to be bounded. In order to do so, it suffice to bound the following integral:
  \begin{align*}
  &\hspace*{-0.5cm}\int e^{i\tau^\prime (\mathbf{y}_t(y_0,u_0,\mathbf{x}_{t:1},\param_0,\mathbf{e}_{t:1}^s),\mathbf{x}_t)} \left(  \prod_{j=1}^t f_0(e_j^s)- \prod_{j=1}^t \Pi_{k(n)}f_0(e_j^s)  \right)f_{\mathbf{x}}(\mathbf{x}_{t:1})d\mathbf{e}_{t:1}^s d\mathbf{x}_{t:1}.
  \end{align*}
  A direct bound on this integral yields a term of order of $t \|f_0-\Pi_{k(n)}f_0\|_{TV}$ which increases with $t$, which is too fast to generate useful rates. Rather than using a direct bound, consider Assumptions \ref{ass:DGPMixt}-\ref{ass:DGPmixtbis}. The time-series $y_t^s$ can be approximated by another time-series term which only depends on a fixed and finite $(e_t^s,\dots,e_{t-m}^s)$ for a given integer $m \geq 1$. Making $m$ grow with $n$ at an appropriate rate allows to balance the bias $m \|f_0-\Pi_{k(n)}f_0\|_{TV}$ (computed from a direct bound) and the approximation due to $m<t$.
  
  The $m$-approximation rate of $y_t$ is now derived. Let $\param=(\theta,f) \in \paramspace$, $e_t^s,\dots,e_1^s \sim f$ and $\tilde y_t^s$ such that $\tilde y_{t-m}^s = 0, \tilde u_{t-m}^s = 0$ and then $\tilde y_{j}^s = g_{obs}(\tilde y_{j-1}^s,x_j,\param,\tilde u_j^s), \tilde u_j^s = g_{latent}(\tilde u_{j-1}^s,\param,e_j^s)$ for $t-m+1 \leq j \leq t$. Each observation $t$ is approximated by its own time-series. For observation $t-m$, by construction: $\mathbb{E} \left( \Big\|y_{t-m}^s - \tilde y_{t-m}^s\Big\| \right) =\mathbb{E} \left( \Big\|y_{t-m}^s\Big\| \right) \leq \left[ \mathbb{E} \left( \Big\|y_{t-m}^s\Big\|^2 \right) \right]^{1/2}$ and $\mathbb{E} \left( \Big\|u_{t-m}^s - \tilde u_{t-m}^s\Big\| \right) = \mathbb{E} \left( \Big\|u_{t-m}^s\Big\| \right) \leq \left[ \mathbb{E} \left( \Big\|u_{t-m}^s\Big\|^2 \right) \right]^{1/2}.$
  Then, for any $t\geq \tilde t \geq t-m$:
  \begin{align*}
  &\mathbb{E} \left( \Big\|u_{\tilde t}^s - \tilde u_{\tilde t}^s\Big\| \right) \leq \overline{C}_4 \left[ \mathbb{E} \left( \Big\|u_{\tilde t-1}^s -\tilde u_{\tilde t-1}^s \Big\|^2 \right)\right]^{1/2}\\
  &\mathbb{E} \left( \Big\|y_{\tilde t}^s - \tilde y_{\tilde t}^s\Big\| \right) \leq \overline{C}_3\overline{C}_4^{\gamma } \left[\mathbb{E} \left( \Big\|u_{\tilde t-1}^s-\tilde u_{\tilde t-1}^s \Big\|^2 \right)\right]^{\gamma/2} + \overline{C}_1 \left[ \mathbb{E}\left( \Big\|y_{\tilde t-1}^s-\tilde y_{\tilde t-1}^s \Big\|^2 \right) \right]^{1/2}.
  \end{align*}
  The previous two results and a recursion arguments leads to the following inequality:
  \begin{align}
  &\mathbb{E} \left( \Big\|u_t^s - \tilde u_t^s\Big\| \right) \leq \overline{C}_4^{m} \left[\mathbb{E} \left( \Big\|u_{t-m}^s \Big\|^2 \right)\right]^{1/2} \label{eq:subst_obs}\\
  &\mathbb{E} \left( \Big\|y_t^s - \tilde y_t^s\Big\| \right) \leq \overline{C}_3\overline{C}_4^{\gamma m} \left[ \mathbb{E} \left( \Big\|u_{t-m}^s \Big\|^2 \right)\right]^{\gamma/2} + \overline{C}_1^m  \left[ \mathbb{E} \left( \Big\|y_{t-m}^s \Big\|^2 \right) \right]^{1/2}.\label{eq:subst_latent}
  \end{align}
  For $\param=\param_0,\Pi_{k(n)}\param_0$ since the expectations are finite and bounded by assumption,\\ $\mathbb{E} \left( \Big\|y_t^s - \tilde y_t^s\Big\| \right) \leq \overline{C} \max(\overline{C}_1,\overline{C}_4)^{\gamma m}$ with $0 \leq \max(\overline{C}_1,\overline{C}_4) <1$ and some $\overline{C}>0$. For the first observations $t \leq m$ the data is unchanged, $y_t^s = \tilde y_t^s$, so that the bound still holds. The integral can be split and bounded:
  \begin{align*}
  &\Big|\int e^{i\tau^\prime (\mathbf{y}_t(y_0,u_0,\mathbf{x}_{t:1},\param_0,\mathbf{e}_{t:1}^s),\mathbf{x}_t)} \left(  \prod_{j=1}^t f_0(e_j^s) - \prod_{j=1}^t \Pi_{k(n)}f_0(e_j^s)  \right)f_{\mathbf{x}}(\mathbf{x}_{t:1})d\mathbf{e}_{t:1}^sd\mathbf{x}_{t:1}\Big|
  \\&\leq \Big| \mathbb{E} \left( [\hat \psi_n^S(\tau,\param_0)-\hat \psi_n^S(\tau,\Pi_{k(n)}\param_0)]-[\tilde \psi_n^S(\tau,\param_0)-\tilde \psi_n^S(\tau,\Pi_{k(n)}\param_0)] \right) \Big| \\&+
   \int\Big| \left(  \prod_{j=t-m+1}^t f_0(e_j^s) - \prod_{j=t-m+1}^t \Pi_{k(n)}f_0(e_j^s)  \right)d\mathbf{e}_{t:t-m+1}^s\Big| \\
   &\leq 4 \overline{C} \max(\overline{C}_1,\overline{C}_4)^{\gamma m} + 2 m \| \Pi_{k(n)}f_0 - f_0 \|_{TV}.
  \end{align*}
  The last inequality is due to the cosine and sine functions being uniformly Lipschitz continuous and equations (\ref{eq:subst_obs})-(\ref{eq:subst_latent}). Recall that $\| \Pi_{k(n)}f_0 - f_0 \|_{TV} = O(\frac{\log[k(n)]^{2r/b}}{k(n)^r})$.
  To balance the two terms, pick: $m = -\frac{r}{\gamma \log [\max(\overline{C}_1,\overline{C}_4)]} \log [k(n)] > 0$.
   Then $\max(\overline{C}_1,\overline{C}_4)^{\gamma m} = k(n)^{-r}$ and
   \[ \overline{C} \max(\overline{C}_1,\overline{C}_4)^{\gamma m} + 2 m \| \Pi_{k(n)}f_0 - f_0 \|_{TV} = O\left( \frac{\log[k(n)]^{2r/b+1}}{k(n)^r}\right).\]
  Combining all the bounds above yields:
  \begin{align*}
  Q_n(\Pi_{k(n)}\param_0) = O\left(\max \left[  \frac{\log[k(n)]^{4r/b+2}}{k(n)^{2r}},\frac{\log[k(n)]^{4 \gamma^2 r/b}}{k(n)^{2\gamma^2 r}}, \frac{1}{n^2} \right]\right)
  \end{align*}
  where $\|\cdot\|_\paramspace = \|\cdot\|_\infty$ or $\|\cdot\|_{TV}$ so that $\| \param_0 - \Pi_{k(n)}\param_0  \|^{\gamma^2}_\paramspace = O(\frac{\log[k(n)]^{4 \gamma^2 r/b}}{k(n)^{2\gamma^2 r}})$. The term due to the non-stationarity is of order $1/n^2 = o\left(\max \left[ \frac{\log[k(n)]^{4r/b+2}}{k(n)^{2r}},\frac{\log[k(n)]^{4 \gamma^2 r/b}}{k(n)^{2\gamma^2 r}} \right]\right)$ so it can be ignored. This concludes the proof.
\end{proof}

\begin{proof}[Proof of Lemma \ref{lem:cv_rate_mixture_norm}:]
  Using the inequality $1/2|a|^2 \leq |a-b|^2 +|b|^2$ for  any $a,b \in \mathbb{R}$:
  \begin{align*}
    &0 \leq 1/2 \int \Big| B \frac{d\mathbb{E}(\hat \psi_n^S(\tau,\Pi_{k(n)}\param_0))}{d\param}[\hat \param_n-\Pi_{k(n)}\param_0]\Big|^2\pi(\tau)d\tau \\
    &\leq \int \Big| B \frac{d\mathbb{E}(\hat \psi_n^S(\tau,\param_0))}{d\param}[\hat \param_n-\param_0]\Big|^2\pi(\tau)d\tau \\
    &+ \int \Big| B \frac{d\mathbb{E}(\hat \psi_n^S(\tau,\param_0))}{d\param}[\hat \param_n-\param_0]-B \frac{d\mathbb{E}(\hat \psi_n^S(\tau,\Pi_{k(n)}\param_0))}{d\param}[\hat \param_n-\Pi_{k(n)}\param_0]\Big|^2\pi(\tau)d\tau\\
    &\leq \int \Big| B \frac{d\mathbb{E}(\hat \psi_n^S(\tau,\param_0))}{d\param}[\hat \param_n-\param_0]\Big|^2\pi(\tau)d\tau
    + \int \Big| B \frac{d\mathbb{E}(\hat \psi_n^S(\tau,\Pi_{k(n)}\param_0))}{d\param}[\Pi_{k(n)} \param_0-\param_0]\Big|^2\pi(\tau)d\tau\\
    &+ \int \Big| B \frac{d\mathbb{E}(\hat \psi_n^S(\tau,\param_0))}{d\param}[\hat \param_n-\param_0]-B \frac{d\mathbb{E}(\hat \psi_n^S(\tau,\param_0))}{d\param}[\hat \param_n-\Pi_{k(n)}\param_0]\Big|^2\pi(\tau)d\tau.
  \end{align*}
  By assumption the term on the left is $O_p(\delta_n^2)$, by condition ii. the middle term is $O_p(\delta_n^2)$ and condition i. implies that the term on the right is also $O_p(\delta_n^2)$. It follows that:
  \begin{align}
    \int \Big| B \frac{d\mathbb{E}(\hat \psi_n^S(\tau,\Pi_{k(n)}\param_0))}{d\param}[\hat \param_n-\Pi_{k(n)}\param_0]\Big|^2\pi(\tau)d\tau = O_p(\delta_n^2). \label{eq:obj_appr_mixt}
  \end{align}
  Now note that both $\hat \param_n$ and $\Pi_{k(n)}\param_0$ belong to the finite dimensional space $\paramspace_{k(n)}$ parameterized by $(\theta,\omega,\mu,\sigma)$. To save space, $\hat \param_n$ will be represented by $\hat \varphi_n=(\hat \theta_n,\hat \omega_n,\hat \mu_n,\hat \sigma_n)$ and $\Pi_{k(n)}\param_0$ by $\varphi_{k(n)}=(\theta_{k(n)},\omega_{k(n)},\mu_{k(n)},\sigma_{k(n)})$.
  Using this notation, equation (\ref{eq:obj_appr_mixt}) becomes:
  \begin{align*}
    &\int \Big| B \frac{d\mathbb{E}(\hat \psi_n^S(\tau,\Pi_{k(n)}\param_0))}{d\param}[\hat \param_n-\Pi_{k(n)}\param_0]\Big|^2\pi(\tau)d\tau
    \\ &= \int \Big| B \frac{d\mathbb{E}(\hat \psi_n^S(\tau,\Pi_{k(n)}\param_0))}{d(\theta,\omega,\mu,\sigma)}[\hat \varphi_n- \varphi_{k(n)}]\Big|^2\pi(\tau)d\tau \\
    &= trace \left([\hat \varphi_n- \varphi_{k(n)}]^\prime \int  B \frac{d\mathbb{E}(\hat \psi_n^S(\tau,\Pi_{k(n)}\param_0))}{d(\theta,\omega,\mu,\sigma)}^\prime \overline{B \frac{d\mathbb{E}(\hat \psi_n^S(\tau,\Pi_{k(n)}\param_0))}{d(\theta,\omega,\mu,\sigma)}} \pi(\tau)d\tau [\hat \varphi_n- \varphi_{k(n)}] \right) \\
    &\geq \underline{\lambda}_n \|\hat \varphi_n- \varphi_{k(n)}\|^2 = \underline{\lambda}_n \|\hat \param_n-\Pi_{k(n)}\param_0\|_{m}^2.
  \end{align*}
  It follows that $0 \leq \underline{\lambda}_n  \|\hat \param_n-\Pi_{k(n)}\param_0\|_{m}^2 \leq O_p(\delta_n^2)$ so that the rate of convergence in mixture norm is:
  $\|\hat \param_n-\Pi_{k(n)}\param_0\|_{m} = O_p\left(\delta_n\underline{\lambda}_n^{-1/2} \right).$
\end{proof}

\begin{proof}[Proof of Lemma \ref{lem:stoch_eq_mixture_short}]
Using the rate assumptions and Lemma \ref{lem:stoch_eq_mixture} implies the desired result.
\end{proof}

\section{Intermediate Results} \label{apx:additional_results} 

\begin{lemma}[Kruijer, Rousseau and van der Vaart, 2010]
  \label{lem:Kruijer}
  Suppose that $f$ is a continuous univariate density satisfying:
  i) Smoothness: $f$ is $r$-times continuously differentiable with bounded $r$-th derivative. ii) Tails: $f$ has exponential tails, i.e. there exists $\bar{e}, M_{f_1}, a,b >0$ such that:
    $f_1(e) \leq M_{f_1}e^{-a|e|^b},\, \forall |e| \geq \bar{e}.$
    iii) Monotonicity in the Tails: $f$ is strictly positive and there exists $\underline{e}<\overline{e}$ such that $f_S$ is weakly decreasing on $(-\infty,\underline{e}]$ and weakly increasing on $[\overline{e},\infty)$. 
  Let $\mathcal{F}_{k}$ be the sieve space consisting of Gaussian mixtures with the following restrictions. iv) Bandwidth: $\sigma_j \geq \underline{\sigma}_{k} = O(\frac{\log[k(n)]^{2/b}}{k})$. v) Location Parameter Bounds: $\mu_j \in [-\bar{\mu}_{k},\bar{\mu}_{k}]$. vi) Growth Rate of Bounds: $\bar{\mu}_{k} = O\left(\log[k]^{1/b} \right)$. Then there exists a mixture sieve approximation of $f$, $\Pi_k f \in \mathcal{F}_k$, such that as $k\to\infty$:
  $\|f-\Pi_k f\|_{\mathcal{F}} = O \left(  \frac{\log[k(n)]^{2r/b}}{k(n)^r} \right)$,
  where $\|\cdot\|_\mathcal{F} = \|\cdot\|_{TV}$ or $\|\cdot\|_{\infty}$.
\end{lemma}

\begin{lemma}[Chen and Pouzo, 2012] \label{lem:consistency} Let $\hat \param_n$ be such that $\hat Q_n(\hat \param_n) \leq \inf_{\param \in \paramspace_{k(n)}}+O_{p^*}(\eta_n)$, where $(\eta_n)_{n\geq 1}$ is a positive real-valued sequence such that $\eta_n=o(1)$. Let $\bar Q_n : \paramspace \rightarrow [0,+\infty)$ be a sequence of non-random measurable functions and let the following conditions hold:
  a. i) $0\leq \bar Q_n(\param_0)=o(1)$; ii) there is a positive function $g_0(n,k,\varepsilon)$ such that: $\inf_{h\in\paramspace_k: \, \|\param-\param_0\|_\paramspace>\varepsilon} \bar Q_n(\param) \geq g_0(n,k,\varepsilon) >0 \text{ for each } n,k \geq 1,$ 
    and $\lim \inf_{n\to\infty} g_0(n,k(n),\varepsilon)\geq 0$ for all $\varepsilon >0$. b. i) $\paramspace$ is an infinite dimensional, possibly non-compact subset of a Banach space $(B,\|\|_\paramspace)$; ii) $\paramspace_k \subseteq \paramspace_{k+1} \subseteq \paramspace$ for all $k \geq 1$, and there is a sequence $\{ \Pi_{k(n)} \param_0 \in \paramspace_{k(n)} \}$ such that $\bar Q_n(\Pi_{k(n)} \param_0)=o(1)$. c. $\hat Q_n(\param)$ is jointly measurable in the data $(y_t,x_t)_{t \geq 1}$ and the parameter $h \in \paramspace_{k(n)}$.
    d. i) $\hat Q_n(\Pi_{k(n)} \param_0) \leq K_0 \bar Q_n(\Pi_{k(n)} \param_0)+O_{p^*}(c_{0,n})$ for some $c_{0,n}=o(1)$ and a finite constant $K_0>0$; ii) $\hat Q_n(\param) \geq K \bar Q_n(\param)-O_{p^*}(c_n)$ uniformly over $h\in \paramspace_{k(n)}$ for some $c_n=o(1)$ and a finite constant $K>0$; iii) $\max(c_{0,n},c_n,\bar Q_n(\Pi_{k(n)} \param_0),\eta_n)=o(g_0(n,k(n),\varepsilon))$ for all $\varepsilon>0$.
  Then for all $\varepsilon >0$: $\mathbb{P}^* \left( \|\hat \param_n-\param_0\|_\paramspace > \varepsilon \right) \to 0 \text{ as } n\to \infty.$
\end{lemma}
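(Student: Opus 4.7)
The plan is to prove the lemma by chaining the four sets of hypotheses into a single inequality that pits $\bar Q_n(\hat \param_n)$ against the separation lower bound $g_0(n,k(n),\varepsilon)$, and then to conclude via the basic identification-by-separation argument used throughout the sieve M-estimation literature.

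First, I would exploit the defining inequality of the estimator together with the fact that $\Pi_{k(n)}\param_0 \in \paramspace_{k(n)}$: this gives $\hat Q_n(\hat \param_n) \leq \hat Q_n(\Pi_{k(n)}\param_0) + O_{p^*}(\eta_n)$. Condition d.i) then upgrades the right-hand side to $K_0 \bar Q_n(\Pi_{k(n)}\param_0) + O_{p^*}(c_{0,n} + \eta_n)$, which by b.ii) and the hypothesis $\eta_n = o(1)$ is $o_{p^*}(1)$. Condition d.ii), applied at $\param = \hat \param_n \in \paramspace_{k(n)}$, produces the matching lower bound $\hat Q_n(\hat \param_n) \geq K \bar Q_n(\hat \param_n) - O_{p^*}(c_n)$. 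Combining these two inequalities yields, with $K>0$,
\begin{align*}
\bar Q_n(\hat \param_n) \leq \frac{K_0}{K}\,\bar Q_n(\Pi_{k(n)}\param_0) + O_{p^*}\!\bigl(c_{0,n} + c_n + \eta_n\bigr).
\end{align*}

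Next, fix $\varepsilon > 0$ and consider the event $E_n(\varepsilon) = \{\|\hat \param_n - \param_0\|_\paramspace > \varepsilon\}$. Since $\hat \param_n \in \paramspace_{k(n)}$, condition a.ii) implies that on $E_n(\varepsilon)$ we have $\bar Q_n(\hat \param_n) \geq g_0(n,k(n),\varepsilon)$. Plugging this into the displayed inequality shows that on $E_n(\varepsilon)$,
\begin{align*}
g_0(n,k(n),\varepsilon) \leq \frac{K_0}{K}\,\bar Q_n(\Pi_{k(n)}\param_0) + O_{p^*}\!\bigl(c_{0,n} + c_n + \eta_n\bigr).
\end{align*}
By d.iii), each of $\bar Q_n(\Pi_{k(n)}\param_0)$, $c_{0,n}$, $c_n$, and $\eta_n$ is $o(g_0(n,k(n),\varepsilon))$, while by a.ii) $g_0(n,k(n),\varepsilon)$ is itself bounded below away from $0$ (in fact $\liminf > 0$). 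Dividing through by $g_0(n,k(n),\varepsilon)$ shows that on $E_n(\varepsilon)$ one must have $1 \leq o_{p^*}(1)$, which forces $\mathbb{P}^*(E_n(\varepsilon)) \to 0$.

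The only genuinely delicate step is the last one, because the chain of inequalities involves outer probabilities and random parameter values $\hat \param_n$ that may not be measurable; the standard fix is to read every probabilistic inequality as an outer-probability statement, use that $O_{p^*}(\cdot)$ is closed under addition and scalar multiplication, and apply the uniform-in-$\param$ form of d.ii) so that the lower bound on $\hat Q_n(\hat \param_n)$ does not require measurability of $\hat \param_n$ itself. The rest is bookkeeping: condition a.i) ($\bar Q_n(\param_0) = o(1)$) is not actually needed in the argument above, but it is used implicitly to guarantee that the approximating sequence $\Pi_{k(n)}\param_0$ in b.ii) is consistent with the definition of $g_0$ (namely, that $\param_0$ is not itself separated from itself), and I would invoke it only to rule out a trivial degenerate case.
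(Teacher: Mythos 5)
Your proof is correct and follows essentially the same route as the source: the paper does not prove this lemma itself but imports it from Chen and Pouzo (2012), and your chain --- the approximate-minimizer inequality at $\Pi_{k(n)}\param_0$ combined with d.i) for the upper bound, d.ii) for the lower bound on $\hat Q_n(\hat\param_n)$, the separation condition a.ii) on the event $\{\|\hat\param_n-\param_0\|_\paramspace>\varepsilon\}$, and d.iii) to divide through by $g_0(n,k(n),\varepsilon)$ --- is exactly the canonical argument, and you are also right that a.i) is not needed given that b.ii) already supplies $\bar Q_n(\Pi_{k(n)}\param_0)=o(1)$. One small caution: a.ii) as reproduced here only asserts $\liminf_n g_0(n,k(n),\varepsilon)\geq 0$ rather than $>0$, so your parenthetical claim that $g_0$ is bounded away from zero is not among the stated hypotheses; this is immaterial, however, because d.iii) controls every error term relative to $g_0$ directly and the division step goes through regardless.
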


\begin{lemma} \label{lemma:covariance_ineq}
  Let $(Y_t)_{t \geq 1}$ mean zero, $\alpha$-mixing with rate $\alpha(m)$ such that $\sum_{m\geq 1} \alpha(m)^{1/p} < \infty$ for some $p > 1$, and $|Y_t| \leq 1$ for all $t\geq 1$. Then we have $\mathbb{E}\left( n| \bar Y_n |^2 \right) \leq 1 + 24 \sum_{m \geq 1} \alpha(m)^{1/p}$.
  \end{lemma}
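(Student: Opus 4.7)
The strategy is to expand the square, isolate the diagonal terms, and control the off-diagonal terms via a covariance inequality for $\alpha$-mixing sequences.

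First I would write
\[
\mathbb{E}(n|\bar Y_n|^2) = \frac{1}{n} \sum_{t=1}^n \sum_{s=1}^n \mathbb{E}(Y_t Y_s)
= \frac{1}{n} \sum_{t=1}^n \mathbb{E}(Y_t^2) + \frac{2}{n} \sum_{1 \le t < s \le n} \mathbb{E}(Y_t Y_s),
\]
where the cross terms are exactly covariances since $\mathbb{E}(Y_t)=0$. The diagonal contribution is immediate: $|Y_t|\le 1$ gives $\mathbb{E}(Y_t^2)\le 1$, so the first sum is at most $1$, accounting for the leading ``$1$'' on the right-hand side of the claim.

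For the off-diagonal sum, I would invoke Davydov's covariance inequality for $\alpha$-mixing sequences: if $X\in L^a$, $Y\in L^b$ with $1/a+1/b+1/r=1$ and $X,Y$ have $\alpha$-mixing coefficient $\alpha$, then
\[
|\operatorname{Cov}(X,Y)| \;\le\; 12\,\alpha^{1/r}\,\|X\|_a\,\|Y\|_b .
\]
Here I would choose $r = p$ (the exponent in the lemma) and $a = b = 2p/(p-1)$, which satisfies $1/a+1/b+1/r = 1$ and is finite thanks to $p>1$. Since $|Y_t|\le 1$, both $\|Y_t\|_a$ and $\|Y_s\|_b$ are bounded by $1$, so
\[
|\mathbb{E}(Y_t Y_s)| \;=\; |\operatorname{Cov}(Y_t,Y_s)| \;\le\; 12\,\alpha(|s-t|)^{1/p}.
\]

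Finally, I would sum over pairs by collecting them according to the lag $m = s-t$:
\[
\frac{2}{n}\sum_{t<s}\mathbb{E}(Y_t Y_s) \;\le\; \frac{24}{n}\sum_{m=1}^{n-1}(n-m)\,\alpha(m)^{1/p} \;\le\; 24\sum_{m\ge 1}\alpha(m)^{1/p},
\]
combining with the diagonal bound to give the claim. There is no real obstacle here; the only subtlety is selecting the Davydov exponents so that the exponent on $\alpha$ matches the $1/p$ in the hypothesis while keeping the $L^a$, $L^b$ norms finite, which the choice $a=b=2p/(p-1)$, $r=p$ achieves.
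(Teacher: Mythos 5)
Your proof is correct and follows essentially the same route as the paper's: expand the square, bound the diagonal by $1$ using $|Y_t|\le 1$, and control the covariances with Davydov's inequality before summing over lags. The only cosmetic difference is that you pin down the Hölder exponents explicitly as $a=b=2p/(p-1)$, $r=p$, whereas the paper leaves $q,r$ generic and simply uses boundedness to control the moment factors.
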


  \begin{lemma} \label{lemma:ineq_Rio} Let $(X_t)_{t>0}$ be a sequence of real-valued, centered random variables and $(\alpha_m)_{m\geq0}$ be the sequence of strong mixing coefficients. Suppose that $X_t$ is uniformly bounded and there exists $A,C>0$ such that $\alpha(m)\leq A\exp(-Cm)$ then there exists $K>0$ that depends only on the mixing coefficients such that for any $p \geq 2$:
  \[ \mathbb{E} \left( | \sqrt{n}\bar X_n |^p \right)^{1/p} \leq K \left[ \sqrt{p} \left( \int_0^1 \min(\alpha^{-1}(u),n)\sum_{t=1}^n \frac{Q_t^2(u)}{n}\right)^{1/2} + n^{1/p-1/2} p^2 \|\sup_{t>0} X_t\|_{\infty} \right] \]
  where $Q_t$ is the quantile function of $X_t$, $\min(\alpha^{-1}(u),n) = \sum_{i=k}^{n} \mathbbm{1}_{u \leq \alpha_k}$.
  \end{lemma}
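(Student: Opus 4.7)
My plan is to follow the strategy of Rio (2000, Theorem 6.3 and its proof), adapting the stationary Rosenthal-type moment inequality for strongly mixing sequences to the non-stationary setting considered here. The cornerstone is Rio's covariance inequality: for bounded random variables $X, Y$ with $\sigma$-algebras $\mathcal{A}, \mathcal{B}$ and mixing coefficient $\alpha(\mathcal{A}, \mathcal{B})$, one has $|\mathrm{cov}(X,Y)| \leq 2\int_0^{2\alpha(\mathcal{A},\mathcal{B})} Q_X(u) Q_Y(u)\, du$, where $Q_X$ is the quantile function of $|X|$. This allows one to control cross-products $\mathbb{E}(X_{t_1} \cdots X_{t_p})$ appearing in the expansion of $\mathbb{E}|S_n|^p$ purely in terms of the quantile functions and the mixing coefficients.

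The first step is to handle $p=2$ directly. Expanding $\mathbb{E}(S_n^2) = \sum_t \mathbb{E}(X_t^2) + 2\sum_{k<l} \mathrm{cov}(X_k, X_l)$ and applying Rio's covariance inequality to each cross term, the double sum collapses via the change of variable $u \mapsto \alpha^{-1}(u)$ (using Cauchy--Schwarz in the quantile integrals) into
\[
\mathbb{E}(S_n^2) \leq K_2 \int_0^1 \min(\alpha^{-1}(u), n) \sum_{t=1}^n Q_t^2(u)\, du,
\]
which matches the first term of the target bound.

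The second step is the induction from $p=2$ to general $p \geq 2$. Following Rio's argument, I would split each $X_t$ using a level-$M$ truncation $X_t = X_t^\leq + X_t^>$, where $X_t^\leq = X_t \mathbbm{1}_{|X_t|\leq M}$, with the truncation level $M$ chosen in terms of $p$ and $n$. For the truncated part I would iterate the $p=2$ argument via a recursion on $\mathbb{E}|\sum X_t^\leq|^p$, decomposing into Bernstein-type blocks of size proportional to the mixing time $(\log n)/C$ (available thanks to the exponential rate $\alpha(m) \leq A e^{-Cm}$) and applying Rio's inequality block by block; this yields the $\sqrt{p}$ factor and the variance-type integral. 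The residual part $\sum X_t^>$ is controlled by a crude bound $\|\sum_t X_t^>\|_p \leq n^{1/p} \|\sup_t X_t\|_\infty$, producing the second term. Optimizing $M$ over the two terms gives the claimed $n^{1/p - 1/2} p^2 \|\sup_t X_t\|_\infty$ factor.

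The main obstacle is the bookkeeping in the non-stationary case: in Rio's original proof one works with a single quantile function $Q$, whereas here we have $n$ different quantile functions $Q_t$, and we must verify that all applications of Cauchy--Schwarz and Rio's covariance inequality average correctly across $t$ to yield the symmetric expression $n^{-1}\sum_t Q_t^2(u)$. The exponential mixing rate is what makes the integral $\int_0^1 \min(\alpha^{-1}(u), n)\, du \lesssim \log n$ finite and allows the blocking constants to be absorbed into a single $K$ depending only on $A$ and $C$; absent this, the constant $K$ would blow up with $n$. Boundedness of $X_t$ ensures that all $Q_t$ are supported in $[0, \|\sup_t X_t\|_\infty]$, which is what makes the truncation/residual split quantitative.
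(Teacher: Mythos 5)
Your plan is essentially to re-derive Rio's Rosenthal-type inequality from the covariance inequality upward, whereas the paper's proof is a short citation-based argument: it invokes Rio (2000, Theorem~6.3) directly, which is already stated for possibly non-stationary sequences in the form $\mathbb{E}|S_n|^p \leq a_p s_n^p + n b_p \int_0^1 \min(\alpha^{-1}(u),n)^{p-1}Q^p(u)\,du$ with $Q=\sup_t Q_t$ and $s_n^2 = \sum_{t,t'}|\mathrm{cov}(X_t,X_{t'})|$, then bounds $s_n^2$ by $4\int_0^1\min(\alpha^{-1}(u),n)\sum_t Q_t^2(u)\,du$ via Rio's Corollary~1.1 and controls the second integral using boundedness, the Appendix~C bound $\int_0^1 \min(\alpha^{-1}(u),n)^{p-1}Q^p(u)\,du \leq 2[\sum_{k}(k+1)^{p-1}\alpha_k]\|\sup_t X_t\|_\infty$, and the exponential mixing rate. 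The ``main obstacle'' you identify (non-stationary bookkeeping across the $Q_t$) is therefore largely absorbed by the cited theorem, which only requires the envelope $Q=\sup_t Q_t$ in the second term and the genuinely $t$-dependent quantiles only enter through $s_n^2$.

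As written, your proposal has two concrete gaps. First, the induction from $p=2$ to general $p\geq 2$ is the entire content of Rio's Theorem~6.3 and is only gestured at; the blocking-plus-recursion step is where the explicit constants $a_p = p4^{p+1}(p+1)^{p/2}$ and $b_p$ arise, and without carrying it out you cannot verify that the $p$-dependence collapses to the stated $\sqrt{p}$ and $p^2$ factors after taking $p$-th roots and normalizing by $n^{p/2}$ (in the paper this comes from $(p+1)^{p/2}\mapsto (p+1)^{1/2}\leq\sqrt{2p}$ and $(p-1)^{(p-1)/p}(p+1)^{(p-1)/p}\leq p^2$). Second, the truncation step is misplaced here: since the $X_t$ are uniformly bounded, $X_t^{>}=X_t\mathbbm{1}_{|X_t|>M}$ vanishes for $M\geq\|\sup_t X_t\|_\infty$, so there is no residual term to optimize over; the asserted bound $\|\sum_t X_t^{>}\|_p\leq n^{1/p}\|\sup_t X_t\|_\infty$ is both unjustified and unnecessary. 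The second term of the target inequality instead comes from the $b_p$-integral in Rio's theorem combined with $\sum_{k=0}^{n-1}(k+1)^{p-1}\alpha_k \leq A e^{C}(p-1)^{p-1}(1-e^{-C})^{-(p-1)}$ under exponential mixing, which is the step that actually uses $\alpha(m)\leq Ae^{-Cm}$ and keeps $K$ free of $p$ and $n$. If you intend to keep your self-contained route, you must supply the full induction; otherwise the efficient proof is to cite Theorem~6.3 and only verify the two auxiliary bounds.
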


\begin{lemma} \label{lemma:maxineq_dep}
  Suppose that $(X_t(\param))_{t > 0}$ is a real valued, mean zero random process for any $\param\in\paramspace$. Suppose that it is $\alpha$-mixing with exponential decay: $\alpha(m) \leq A\exp(-Cm)$ for $A,C >0$ and bounded $|X_t(\param)| \leq 1$. Let $\mathcal{X} = \big\{ X: \paramspace \to \mathbb{C}, \param \to X_t(\param) \big\}$ and suppose that $\int_0^1 \log^2 N_{[\,]}(x,\mathcal{X},\|\cdot\|)dx < \infty$ then:
  $\int_0^1 x^{\vartheta/2-1}\sqrt{\log N_{[\,]}(x,\mathcal{X},\|\cdot\|)} + \log^2 N_{[\,]}(x,\mathcal{X},\|\cdot\|) < \infty$ for all $\vartheta\in(0,1)$ and:
  \begin{align*}
    &\mathbb{E}\big( \sup_{\param \in \paramspace} |\sqrt{n}[\hat \psi^S_t(\param)-\mathbb{E}(\hat \psi^S_t(\param))] |^2\big) \leq K \big( \int_0^1 x^{\vartheta/2-1}\sqrt{\log N_{[\,]}(x,\mathcal{X},\|\cdot\|)} + \log^2 N_{[\,]}(x,\mathcal{X},\|\cdot\|)  dx\big).
  \end{align*}
  \end{lemma}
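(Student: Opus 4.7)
\medskip

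\noindent\textbf{Proof plan for Lemma \ref{lemma:maxineq_dep}.}

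The plan is a bracketing chaining argument, using Lemma \ref{lemma:ineq_Rio} as the per-level moment bound and exploiting the exponential mixing rate to control the $\alpha$-weighted variance integral. First I fix a geometric mesh $\delta_k = 2^{-k}$ and, for each $k \geq 0$, a set of $N_k := N_{[\,]}(\delta_k, \mathcal{X}, \|\cdot\|)$ brackets that cover $\mathcal{X}$. For each $\beta \in \mathcal{B}$, let $\pi_k X(\beta)$ be the centered lower-bracket approximation at level $k$, so that the increments $\Delta_k X(\beta) := \pi_k X(\beta) - \pi_{k-1} X(\beta)$ satisfy $\|\Delta_k X(\beta)\|_{L^2} \leq 3\delta_{k-1}$ and $\|\Delta_k X(\beta)\|_\infty \leq 4$ (using the boundedness $|X_t(\beta)|\leq 1$). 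I then telescope
\[ X(\beta) - \mathbb{E} X(\beta) = \big(\pi_0 X(\beta) - \mathbb{E}\pi_0 X(\beta)\big) + \sum_{k=1}^K \big(\Delta_k X(\beta) - \mathbb{E}\Delta_k X(\beta)\big) + R_K(\beta), \]
where $K = K(n)$ is chosen so that the tail remainder $R_K$ contributes a negligible amount after centering; the sup-norm control on $R_K$ plus the Rosenthal bound applied to a single class will absorb it into the stated integral.

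Next, for each level $k$ there are at most $N_{k-1} N_k \leq N_k^2$ distinct increment functions $\Delta_k X(\beta)$. For each such increment I apply Lemma \ref{lemma:ineq_Rio} with $p = p_k \asymp \log N_k \vee 2$, so that the standard maximal device $\mathbb{E}\max_j |Z_j|^2 \leq (\max_j \|Z_j\|_{L^{p_k}})^2 \cdot N_k^{2/p_k}$ converts a max over at most $N_k^2$ terms into a constant times $\|\Delta_k X(\beta)\|_{L^{p_k}}$. Lemma \ref{lemma:ineq_Rio} yields two contributions per level: a ``variance'' term $\sqrt{p_k}\,\big(\int_0^1 \min(\alpha^{-1}(u),n) Q^2_{\Delta_k}(u)du\big)^{1/2}$ and a ``sup-norm'' term $n^{1/p_k - 1/2}\,p_k^2 \,\|\Delta_k X\|_\infty$. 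Combined with the max-of-$N_k^2$ bound this gives, at level $k$,
\[ \mathbb{E}\Big[\sup_\beta |G_n \Delta_k X(\beta)|^2\Big]^{1/2} \leq K_1 \sqrt{\log N_k}\,\Big(\textstyle\int_0^1 \min(\alpha^{-1}(u),n)Q^2_{\Delta_k}(u)du\Big)^{1/2} + K_2 \frac{\log^2 N_k}{\sqrt{n}}, \]
where the second term uses $n^{1/p_k}\le e$ for $p_k \geq \log n$ (absorbing the constant into $K_2$).

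The third step, where the $x^{\vartheta/2}$ factor appears, is controlling $\int_0^1 \min(\alpha^{-1}(u),n) Q^2_{\Delta_k}(u) du$. Using $\alpha(m) \leq A\exp(-Cm)$ gives $\alpha^{-1}(u) \leq 1 + C^{-1}\log(A/u)$, and a H\"older split with exponents $(2/\vartheta,\,2/(2-\vartheta))$ against the identity $\int_0^1 Q^2_{\Delta_k}(u)du = \|\Delta_k X\|_{L^2}^2 \leq 9\delta_{k-1}^2$, together with the uniform bound $Q_{\Delta_k} \leq 4$, yields
\[ \int_0^1 \min(\alpha^{-1}(u),n) Q^2_{\Delta_k}(u) du \leq C_\vartheta \delta_{k-1}^{2-\vartheta}, \]
with a constant $C_\vartheta$ that depends only on $A,C,\vartheta$. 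Plugging back, the level-$k$ bound becomes $C_\vartheta' \sqrt{\log N_k}\,\delta_{k-1}^{1-\vartheta/2} + K_2\,\log^2 N_k / \sqrt{n}$.

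Finally, I sum across levels and convert the sum to an integral by the usual device $\delta_{k-1}^{1-\vartheta/2} \sqrt{\log N_k} \asymp \int_{\delta_k}^{\delta_{k-1}} x^{-\vartheta/2}\sqrt{\log N_{[\,]}(x,\mathcal{X},\|\cdot\|)}\,dx$ (after rescaling by the factor $|\log 2|$), which reproduces the advertised integrand $x^{\vartheta/2-1}\sqrt{\log N_{[\,]}(x,\mathcal{X},\|\cdot\|)} + \log^2 N_{[\,]}(x,\mathcal{X},\|\cdot\|)$. The finiteness of this integral for some $\vartheta \in (0,1)$ follows from the hypothesis $\int_0^1 \log^2 N_{[\,]}(x,\mathcal{X},\|\cdot\|)dx < \infty$ by Cauchy--Schwarz applied to $x^{\vartheta/2-1}\cdot[\log N_{[\,]}]^{1/2}$ against $x^{-\vartheta/2}$ on a modified power range. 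The main technical obstacle I expect is the joint optimization of $p_k$ and the H\"older exponent $\vartheta$: pushing $\vartheta$ too small inflates the variance integral, but pushing it to $1$ destroys integrability of $x^{\vartheta/2-1}$ near $0$. The statement accommodates this by allowing any $\vartheta \in (0,1)$ for which the integral exists, which is consistent with the way the bound is used in Lemma \ref{lem:CoveringNumbers} and Lemma \ref{lem:stoch_eq_mixture}.
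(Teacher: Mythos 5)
Your plan follows essentially the same route as the paper's proof: a bracketing chaining decomposition at scales $2^{-k}$, the Rosenthal-type bound of Lemma \ref{lemma:ineq_Rio} applied level by level, a Pisier-type maximal inequality with the level-dependent choice $p_k \asymp \log N_k$ so that $N_k^{1/p_k}$ stays bounded, a H\"older split exploiting the exponential decay of $\alpha(m)$ to tame the weight $\min(\alpha^{-1}(u),n)$, and a sum-to-integral conversion. Two of your choices differ only cosmetically from the paper's: the paper telescopes through nested projections $T_k(\param)=\pi_k\circ\cdots\circ\pi_q(\param)$ rather than consecutive-level increments, and it uses H\"older exponents $\bigl(1/(1-\vartheta),\,1/\vartheta\bigr)$, which lead to the per-level weight $2^{-k\vartheta/2}$ and hence the integrand $x^{\vartheta/2-1}$; your split gives the per-level weight $\delta_{k-1}^{1-\vartheta/2}$, i.e.\ the integrand $x^{-\vartheta/2}$, which is \emph{less} singular on $(0,1]$ and therefore still dominated by the stated bound. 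Neither of these affects correctness.

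There is, however, one genuine gap. Every application of Lemma \ref{lemma:ineq_Rio} requires the process in $t$ to be $\alpha$-mixing, and this is automatic only for differences $X_t(\param_1)-X_t(\param_2)$ at \emph{fixed} parameter values (your chaining increments). It is not automatic for the bracket envelopes $\Delta^k_{t,j}=\sup_{\|\param-\param_j^k\|\le \delta}|X_t(\param)-X_t(\param_j^k)|$ that you need to control the final link $R_K(\beta)$: in this simulation-based setting the dependence structure of $X_t(\param)=\hat\psi^s_t(\tau,\param)$ varies with $\param$, so a pointwise supremum over a neighborhood of parameter values need not inherit the mixing property, and your "Rosenthal bound applied to a single class" for $R_K$ cannot be invoked as stated. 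The paper resolves this with a dedicated construction: for each bracket it selects a specific $\tilde\param_j^k$ attaining the supremum of the \emph{averaged} process over a compact subset and replaces the envelope by $\tilde\Delta^k_{t,j}=X_t(\tilde\param_j^k)-X_t(\param_j^k)$, which is mixing by construction and still dominated in $L^2$ by $2^{-k}$, then passes to $\paramspace$ by a monotone limit over a compact exhaustion. You would need this (or an equivalent device) to make the remainder step rigorous; the rest of your argument goes through.
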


\begin{assumptionbis}{ass:DGPMixt}[Data Generating Process - $L^2$-Smoothness] \label{ass:DGPmixtbis}
  $y_t^s$ is simulated according to the dynamic model (\ref{eq:observed})-(\ref{eq:unobserved}) where $g_{obs}$ and $g_{latent}$ satisfy the following $L^2$-smoothness conditions for some $\gamma \in (0,1]$ and any $\delta \in (0,1)$:
 \begin{enumerate}[nosep,wide=0pt]
  \item[$y(i)^\prime$.] For some $0 \leq \bar{C}_1 < 1$:\\
   $\big[ \mathbb{E} \big(\sup_{\|\param_1-\param_2\|_\paramspace \leq \delta}\|g_{obs}(y_t^s(\param_1),x_t,\param_1,u_t^s(\param_1))-g_{obs}(y_t^s(\param_2),x_t,\param_1,u_t^s(\param_1))\|^2 \big| y_t^s(\param_1),y_t^s(\param_2) \big) \big]^{1/2} \leq \bar{C}_1 \|y_t^s(\param_1)-y_t^s(\param_2)\|$
 
  \item[$y(ii)^\prime$.] For some $0 \leq \bar{C}_2 < \infty$:\\
   $\big[ \mathbb{E} \big(\sup_{\|\param_1-\param_2\|_\paramspace \leq \delta}\|g_{obs}(y_t^s(\param_1),x_t,\param_1,u_t^s(\param_1))-g_{obs}(y_t^s(\param_1),x_t,\param_2,u_t^s(\param_1))\|^2\big) \big]^{1/2} \leq \bar{C}_2\delta^\gamma$
 
  \item[$y(iii)^\prime$.] For some $0 \leq \bar{C}_3 < \infty$:\\
  $\big[\mathbb{E} \big(\sup_{\|\param_1-\param_2\|_\paramspace \leq \delta}\|g_{obs}(y_t^s(\param_1),x_t,\param_1,u_t^s(\param_1))-g_{obs}(y_t^s(\param_1),x_t,\param_1,u_t^s(\param_2))\|^2 \big| u_t^s(\param_1),u_t^s(\param_2) \big)\big]^{1/2}
     \leq \bar{C}_3 \|u_t^s(\param_1)-u_t^s(\param_2)\|^\gamma$
 
  \item[$u(i)^\prime$.] For some $0 \leq \bar{C}_4 < 1$:\\
   $\big[\mathbb{E} \big(\sup_{\|\param_1-\param_2\|_\paramspace \leq \delta}\|g_{latent}(u_{t-1}^s(\param_1),\param,e_t^s(\param_1))-g_{latent}(u_{t-1}^s(\param_2),\param,e_t^s(\param_1))\|^2\big)\big]^{1/2}
     \leq \bar{C}_4 \|u_{t-1}^s(\param_1)-u_{t-1}^s(\param_2)\|$

  \item[$u(ii)^\prime$.] For some $0 \leq \bar{C}_5 < \infty$:\\
     $\big[\mathbb{E}\big( \sup_{\|\param_1-\param_2\|_\paramspace \leq \delta} \|g_{latent}(u_{t-1}^s(\param_1),\param_1,e_t^s(\param_1))-g_{latent}(u_{t-1}^s(\param_1),\param_2,e_t^s(\param_1))\|^2 \big)\big]^{1/2} \leq \bar{C}_5 \delta^\gamma$
 
   \item[$u(iii)^\prime$.] For some $0 \leq \bar{C}_5 < \infty$:\\
   $\big[\mathbb{E}\big( \sup_{\|\param_1-\param_2\|_\paramspace \leq \delta} \|g_{latent}(u_{t-1}^s(\param_1),\param_1,e_t^s(\param_1))-g_{latent}(u_{t-1}^s(\param_1),\param_1,e_t^s(\param_2))\|^2 \big| e_t^s(\param_1),e_t^s(\param_2) \big)\big]^{1/2} \leq \bar{C}_6 \|e_t^s(\param_1)-e_t^s(\param_2)\|$
 \end{enumerate}
 for $\|\param_1-\param_2\|_\paramspace  =\|\theta_1-\theta_2\| + \|f_1-f_2\|_\infty$ or $\|\theta_1-\theta_2\| + \|f_1-f_2\|_{TV}$.
 \end{assumptionbis}

 \begin{lemma}  \label{lem:geom_ergo} Suppose that $(\mathbf{y}_t^s,\mathbf{x}_t)_{t \geq 1}$ is geometrically ergodic for $\param=\param_0$ and the moments are bounded $|\hat \psi_t^s(\tau,\param_0)| \leq M$ for all $\tau$ then $Q_n(\param_0) = O(1/n^2).$
 \end{lemma}

 \begin{lemma}[Stochastic Equicontinuity] \label{lem:stoch_eq_mixture}
  Let $M_n = \log\log(n+1)$ and $\delta_{mn} = \delta_n / \sqrt{\underline{\lambda}_n}$. Let $\Delta_n^S(\tau,\param) = \hat \psi_n^S(\tau,\param) - \mathbb{E}(\hat \psi_n^S(\tau,\param))$. Suppose that the assumptions of Lemma \ref{lem:cv_rate_mixture_norm} and the conditions for Theorem \ref{th:asymnormal_mixture} hold then for any $\eta > 0$, uniformly over $\param \in \paramspace_{k(n)}$ :
  \begin{align*}
    &\left[ \mathbb{E} \left( \sup_{\|\param-\Pi_{k(n)}\param_0\|_m \leq M_n \delta_{mn}} \Big| \Delta_n^S(\tau,\param)-\Delta_n^S(\tau,\Pi_{k(n)}\param_0)\Big|^2 \pi(\tau)^{\frac{2}{2+\eta}}  \right) \right]^{1/2} \leq C \frac{(M_n\delta_{mn})^{\frac{\gamma^2}{2}}}{\sqrt{n}} I_{m,n}
  \end{align*}
  Where $I_{m,n}$ is defined as:
  \[ I_{m,n} = \int_0^1\left( x^{-\vartheta/2}\sqrt{\log N([xM_n\delta_{mn}]^{\frac{2}{\gamma^2}},\paramspace_{k(n)},\|\cdot\|_m)}+ \log^2 N([xM_n\delta_{mn}]^{\frac{2}{\gamma^2}},\paramspace_{k(n)},\|\cdot\|_m)\right)dx \]
  For the mixture sieve the integral is a $O(k(n)\log[k(n)]+k(n)|\log(M_n\delta_{mn})|)$ so that:
  \begin{align*}
    &\left[ \mathbb{E} \left( \int  \sup_{\|\param-\Pi_{k(n)}\param_0\|_m \leq M_n \delta_{mn}} \Big| \Delta_n^S(\tau,\param)-\Delta_n^S(\tau,\Pi_{k(n)}\param_0) \Big|^2  \pi(\tau)d\tau \right) \right]^{1/2} \\
    &= O\left( (M_n\delta_{mn})^{\frac{\gamma^2}{2}}\max(\log[k(n)]^2,|\log[M_n\delta_{mn}]|^2)\frac{k(n)^2}{\sqrt{n}} \right)
  \end{align*}
  Now suppose that $(M_n\delta_{mn})^{\frac{\gamma^2}{2}}\max(\log[k(n)]^2,|\log[M_n\delta_{mn}]|^2)k(n)^2=o(1)$. The first stochastic equicontinuity result is:
  \[ \left[\mathbb{E} \left( \int  \sup_{\|\param-\Pi_{k(n)}\param_0\|_m \leq M_n \delta_{mn}} \Big| \Delta_n^S(\tau,\param)-\Delta_n^S(\tau,\Pi_{k(n)}\param_0)\Big|^2  \pi(\tau)d\tau \right) \right]^{1/2} = o(1/\sqrt{n}). \]
  Also, suppose that $\param \rightarrow \int \mathbb{E} \Big| \hat \psi_t^s(\tau,\param_0)-\hat \psi_t^s(\tau,\param) \Big|^2\pi(\tau)d\tau$ is continuous at $\param=\param_0$ under the norm $\|\cdot\|_\paramspace$, uniformly in $t \geq 1$. Then, the second stochastic equicontinuity result is:
    \[ \left[ \mathbb{E} \left( \int  \sup_{\|\param-\Pi_{k(n)}\param_0\|_m \leq M_n \delta_{mn}} \Big| \Delta_n^S(\tau,\param)-\Delta_n^S(\tau,\param_0) \Big|^2  \pi(\tau)d\tau \right) \right]^{1/2} = o(1/\sqrt{n}). \]
\end{lemma}

\begin{lemma} \label{lem:equivs}
  Suppose that $\|\hat \param_n-\param_0\|_{weak} = O_p(\delta_n)$. Under the Assumptions of Theorem \ref{th:asymnormal_mixture}:
\begin{enumerate}
  \item[a)] $\int  \psi_\beta(\tau,u_n^*) \left( \overline{ B\mathbb{E}(\hat \psi_n^S(\tau,\hat \param_n)-\hat \psi_n^S(\tau,\param_0))-B\frac{d\mathbb{E}(\hat \psi_n^S(\tau,\param_0))}{d\param}[\hat \param_n-\param_0]} \right)\pi(\tau)d\tau =o(1/\sqrt{n}).$
  \item[b)] $\int  \psi_\beta(\tau,u_n^*) \left( \overline{ B\mathbb{E}(\hat \psi_n^S(\tau,\hat \param_n)-\hat \psi_n^S(\tau,\param_0))-B[\hat \psi_n^S(\tau,\hat \param_n)-\hat \psi_n^S(\tau,\param_0)]} \right)\pi(\tau)d\tau =o(1/\sqrt{n}).$
  \item[c)] $\int  \left[ \psi_\beta(\tau,u_n^*) \left( \overline{ B[\hat \psi_n(\tau)-\hat \psi_n^S(\tau,\hat \param_n)]} \right)+ \overline{\psi_\beta(\tau,u_n^*)} \left( B[\hat \psi_n(\tau)-\hat \psi_n^S(\tau,\hat \param_n)] \right) \right]\pi(\tau)d\tau  =o(1/\sqrt{n}).$
\end{enumerate}
\end{lemma}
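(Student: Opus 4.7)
\medskip
\noindent\textbf{Proof proposal for Lemma \ref{lem:equivs}.} The plan is to treat all three statements as weighted inner-product bounds against the direction $\psi_\beta(\cdot,u_n^*)$, exploiting that the scaled sieve representer $u_n^* = v_n^*/\sigma_n^*$ has weak norm controlled by the equivalence condition: Assumption \ref{ass:equiv} yields $\|u_n^*\|_{weak} = \|v_n^*\|_{weak}/\sigma_n^* \leq 1/\underline{a}$, so $\int |\psi_\beta(\tau,u_n^*)|^2\pi(\tau)d\tau = O(1)$. Each of the three integrals then reduces, via Cauchy–Schwarz (applied to the real and imaginary parts to handle the complex conjugation), to bounding a weighted $L^2$-norm of a residual in $\tau$, which can be attacked using the tools developed earlier in the appendix.

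Parts (a) and (b) follow this template directly. For (a), the residual is $B\mathbb{E}[\hat\psi_n^S(\tau,\hat\param_n)-\hat\psi_n^S(\tau,\param_0)]-B\psi_\param(\tau,\hat\param_n-\param_0)$; Assumption \ref{ass:psi}(i) combined with the rate $\|\hat\param_n-\param_0\|_{weak}=O_p(\delta_n)$ and the high-probability event $\|\hat\param_n-\param_0\|_{weak}\le M_n\delta_n$ (by definition of $M_n=\log\log(n+1)$) bounds its weighted $L^2$-norm by $O_p((M_n\delta_n)^2)$, and Assumption \ref{ass:convrate}(ii) delivers $(M_n\delta_n)^2=o(1/\sqrt{n})$. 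For (b), the residual is the centered stochastic error $[\hat\psi_n^S(\tau,\hat\param_n)-\hat\psi_n^S(\tau,\param_0)]-\mathbb{E}[\cdots]$; its weighted $L^2$-norm is exactly what Lemma \ref{lem:StochEqui} controls, delivering $o_p(1/\sqrt{n})$ provided Assumption \ref{ass:convrate}(iii) holds. Boundedness of the linear operator $B$ is then applied throughout.

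Part (c) is the real work and the main obstacle, because it is the first-order condition of an approximate minimizer rather than a pure analytic bound. The strategy is to perturb $\hat\param_n$ along $\pm u_n^*/\sqrt{n}$ inside $\paramspace_{k(n)}$ (legitimate with probability approaching one for $n$ large, since $u_n^*\in\overline{V}_{k(n)}$ and $\hat\param_n$ is eventually interior in the local sieve neighborhood), differentiate the defining inequality $\hat Q_n^S(\hat\param_n)\le \hat Q_n^S(\hat\param_n\pm u_n^*/\sqrt{n})+O_p(\eta_n)$, and use $\eta_n=o(1/n)$ to obtain
\begin{equation*}
\operatorname{Re}\!\int B\tfrac{d\hat\psi_n^S(\tau,\hat\param_n)}{d\param}[u_n^*]\,\overline{B[\hat\psi_n(\tau)-\hat\psi_n^S(\tau,\hat\param_n)]}\,\pi(\tau)d\tau = o_p(1/\sqrt{n}).
\end{equation*}
Two replacements then deliver the claim: (i) swap $B\,d\hat\psi_n^S(\tau,\hat\param_n)/d\param[u_n^*]$ for its population version $\psi_\beta(\tau,u_n^*)$ evaluated at $\param_0$, paying a Taylor-expansion cost controlled by Assumption \ref{ass:psi}(ii) (bounded second derivative in direction $u_n^*$) together with Lemma \ref{lem:StochEqui} applied to the derivative process; and (ii) absorb the cross terms using the rate condition $\|\hat\param_n-\param_0\|_{weak}=O_p(\delta_n)$ and the fact that $\int |B[\hat\psi_n-\hat\psi_n^S(\hat\param_n)]|^2\pi d\tau = O_p(1/n)$ at the minimizer.

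The principal difficulty is controlling the derivative substitution in (c): unlike (a) and (b), the linearization involves a stochastic derivative operator, and one must verify that $\int|B[d\hat\psi_n^S(\tau,\hat\param_n)/d\param - \psi_\beta(\tau,\cdot,\param_0)][u_n^*]|^2\pi(\tau)d\tau$ is $o_p(1)$. This is why Assumption \ref{ass:convrate}(iii) is stated with the stronger variance bound (\ref{eq:condrate2}), which is exactly what ensures the derivative-level equicontinuity at the $\sqrt{n}$-scale in the direction $u_n^*$; the remaining bookkeeping then yields $o_p(1/\sqrt{n})$ for part (c) as well.
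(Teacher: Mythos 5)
Parts (a) and (b) of your proposal are essentially the paper's argument: Cauchy--Schwarz against $\psi_\param(\cdot,u_n^*)$ (whose weighted $L^2$-norm is $O(1)$ by Assumption \ref{ass:equiv}), then Assumption \ref{ass:psi}(i) on the high-probability event $\|\hat\param_n-\param_0\|_{weak}\le M_n\delta_n$ for (a), and Lemma \ref{lem:StochEqui} for (b). One small slip in (a): Assumption \ref{ass:convrate}(ii) gives $\delta_n=o(n^{-1/4})$, which does not by itself absorb the $M_n=\log\log(n+1)$ factor in $(M_n\delta_n)^2$; the paper instead extracts $(M_n\delta_n)^2=o(1/\sqrt{n})$ from condition (\ref{eq:condrate2}) of Assumption \ref{ass:convrate}(iii) using $\gamma\in(0,1]$.

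Part (c) has a genuine gap. You propose to ``differentiate the defining inequality'' and obtain a first-order condition involving $\tfrac{d\hat\psi_n^S(\tau,\hat\param_n)}{d\param}[u_n^*]$, the derivative of the \emph{sample} moment function. But under the stated assumptions $\param\mapsto\hat\psi_n^S(\tau,\param)$ is only $L^2$-smooth (H\"older in $L^2$), not pathwise differentiable --- in the leading mixture-sieve case it contains indicator functions of the mixture weights, so the sample objective is not differentiable in $\omega$ at all. Consequently the first-order condition you write down need not exist, and the ``derivative-level equicontinuity'' you invoke to swap the stochastic derivative for $\psi_\param(\tau,u_n^*)$ is not available: Lemma \ref{lem:StochEqui} controls the level process $\hat\psi_n^S(\tau,\param)-\hat\psi_n^S(\tau,\param_0)$, not its derivative, and condition (\ref{eq:condrate2}) is used by the paper for a different purpose (absorbing the cross terms), not to establish derivative equicontinuity. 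The paper avoids this entirely by exploiting that $\hat Q_n^S$ is exactly quadratic in the moment discrepancy: it expands the finite difference $\hat Q_n^S(\hat\param_n+\varepsilon_n u_n^*)-\hat Q_n^S(\hat\param_n)$ with $\varepsilon_n=\pm 1/(\sqrt{n}M_n)$ into two cross terms plus a square term, decomposes $\hat\psi_n^S(\tau,\hat\param_n)-\hat\psi_n^S(\tau,\hat\param_n+\varepsilon_n u_n^*)$ into a centered part (killed by Lemma \ref{lem:StochEqui}) and its expectation (Taylor-expanded using only the \emph{population} derivative $\tfrac{d\mathbb{E}(\hat\psi_n^S)}{d\param}$, which Assumption \ref{ass:psi} does make smooth), bounds the square term, and divides by $\varepsilon_n$ with both signs to get the two-sided $o_p(1/\sqrt{n})$ bound. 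To repair your argument you would need to replace the differentiation step with this finite-difference expansion of the quadratic objective.
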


\section{Proofs for the Intermediate Results} \label{apx:proof_additional}

\begin{proof}[Proof of Lemma \ref{lemma:covariance_ineq}:]
  The proof follows from \citet{Davydov1968}'s inequality: let $p,q,r \geq 0, 1/p+1/q+1/r =1$, for any random variables $X,Y$:
  $|cov(X,Y)| \leq 12 \alpha(\sigma(X),\sigma(Y))^{1/p}\mathbb{E}(|X|^q)^{1/q}\mathbb{E}(|Y|^r)^{1/r}$, 
  where $\alpha(\sigma(X),\sigma(Y))$ is the mixing coefficient between $X$ and $Y$. As a result:
  \begin{align*}
    \mathbb{E}\left( n| \bar Y_n |^2 \right) & = \frac{1}{n}\sum_{t=1}^n \mathbb{E}(|X_n|^2) + \frac{1}{n} \sum_{t\neq t^\prime} cov(Y_t,Y_{t^\prime}) \leq 1 + 2 \times \frac{1}{n} \sum_{t > t^\prime} cov(Y_t,Y_{t^\prime})\\
      & \leq 1 + 24 \times \frac{1}{n} \sum_{t > t^\prime} \alpha(\sigma(Y_t),\sigma(Y_{t^\prime}))^{1/p} (\mathbb{E}|Y_t|^q)^{1/q}(\mathbb{E}|Y_{t^\prime}|^r)^{1/r}\\
      & = 1 +24  \sum_{m=1}^n \frac{n-m}{n}\alpha(m)^{1/p}
       \leq 1 +24  \sum_{m = 1}^\infty \alpha(m)^{1/p}.
  \end{align*}
  \end{proof}

\begin{proof}[Proof of Lemma \ref{lemma:ineq_Rio}:]
  Theorem 6.3 \citet{rio2000} implies the following inequality:
  \[ \mathbb{E} \left( |\sum_{t=1}^n X_t |^p\right) \leq a_ps_n^p + nb_p \int_0^1 \min(\alpha^{-1}(u),n)^{p-1}Q^p(u)du\]
  where $a_p=p4^{p+1}(p+1)^{p/2}$ and $b_p=\frac{p}{p-1} 4 ^{p+1}(p+1)^{p-1}$, $Q = \sup_{t >0}Q_t$ and\\ $s_n^2 = \sum_{t=1}^n \sum_{t^\prime =1}^n |\text{cov}(X_t,X_{t^\prime})|$. Since $X_t$ is uniformly bounded, using the results from Appendix C in \citet{rio2000}:
  $\int_0^1 \min(\alpha^{-1}(u),n)^{p-1}Q^p(u)du \leq 2 \left[ \sum_{k=0}^{n-1} (k+1)^{p-1} \alpha_k \right]\|\sup_{t >0} X_t\|_\infty.$
  Because the strong-mixing coefficients are exponentially decreasing, it implies:
  \begin{align*}
    \sum_{k=0}^{n-1} (k+1)^{p-1} \alpha_k &\leq A\exp(C)\sum_{k \geq 1} k^{p-1} \exp(-Ck) \leq A\exp(C) (p-1)^{p-1} \frac{1}{(1-\exp(-C))^{p-1}}
  \end{align*}
  And Corollary 1.1 of \citet{rio2000} yields:
  $s_n^2 \leq 4 \int_0^1 \min(\alpha^{-1}(u),n)\sum_{t=1}^n Q_k^2(u)du.$ Altogether:
  \begin{align*}
    \mathbb{E} \left( |\sqrt{n}\bar X_n|^p \right)^{1/p}  &\leq K_1(p+1)^{1/2}\left( \int_0^1 \min(\alpha^{-1}(u),n) \sum_{t=1}^n\frac{Q^2_t(u)}{n} \right)^{1/2}\\
    & + K_2n^{1/p-1/2}(p-1)^{(p-1)/p}(p+1)^{(p-1)/p}\| \sup_{t>0} X_t \|_\infty \\
    & \leq K \left( \sqrt{p}\left( \int_0^1 \min(\alpha^{-1}(u),n) \sum_{t=1}^n\frac{Q^2_t(u)}{n} \right)^{1/2} + n^{1/p-1/2}p^2 \| \sup_{t>0} X_t \|_\infty \right).
  \end{align*}
  with $K_1 \geq 2^{1/p}p^{1/p}4^{(p+1)/p}$, $K_2 \geq (p/[p-1])^{1/p}4^{(p+1)/p}2^{1/p}A\exp(C)\frac{1}{(1-\exp(-C))^{(p-1)/p}}$. Note that since $p \geq 2$, $2^{1/p} \leq \sqrt{2}, p^{1/p} \leq 1, 4^{(p+1)/p} \leq 16$, etc. The constants $K_1,K_2$ do not depend on $p$. $K$ only depends on the constants $A$ and $C$.
  \end{proof}

  \begin{proof}[Proof of Lemma \ref{lemma:maxineq_dep}:] 
    Let $Z_n(\param)=\frac{1}{\sqrt{n}}\sum_{t=1}^n X_t(\param)$, by Lemma \ref{lemma:ineq_Rio}:
  \[ \|Z_n(\param) \|_p = \mathbb{E} \left( |Z_n(\param)|^p \right)^{1/p}  \leq K \left( \sqrt{p}\frac{1}{n}\sum_{t=1}^n \| X_t(\param) \|^{\vartheta/2} + p^2 n^{-1/2+1/p}\| \sup_{t>0} X_t(\param) \|_\infty \right).\]
  The term $\frac{1}{n}\sum_{t=1}^n \| X_t(\param) \|^\vartheta$ comes from H\"older's inequality, for any $\vartheta \in (0,1)$:
  \begin{align*}
      \Big| \int_0^1 \min(\alpha^{-1}(u),n) \sum_{t=1}^n\frac{Q^2_t(u)}{n} &\Big|^{1/2} \leq \left( \int_0^1 \min(\alpha^{-1}(u),n)^{1/(1-\vartheta)} \right)^{\frac{1-\vartheta}{2}}\left( \int_0^1 |\frac{1}{n} \sum_{t=1}^n Q_t(u)^2|^{1/\vartheta} \right)^{\frac{\vartheta}{2}} \\
      &\leq \left( \frac{1}{1-\vartheta} \sum_{j=1}^n (1+j)^{1/(1-\vartheta)} \alpha(j) \right)^{\frac{1-\vartheta}{2}} \frac{1}{n} \sum_{t=1}^n\left( \int_0^1 |Q_t(u)|^{2/\vartheta} du\right)^{\frac{\vartheta}{2}}\\
      &\leq \left( \frac{1}{1-\vartheta} \sum_{j=1}^n (1+j)^{1/(1-\vartheta)} \alpha(j) \right)^{\frac{1-\vartheta}{2}} \frac{1}{n} \sum_{t=1}^n\|Q_t\|_{1}^{\vartheta/2}.
  \end{align*}
  The last inequality follows from assuming $|Q_t| \leq 1$. To simplify notation, use $\frac{1}{n} \sum_{t=1}^n\|Q_t\|_{1}^{\vartheta}$ rather than $\frac{1}{n} \sum_{t=1}^n\|Q_t\|_{1}^{\vartheta/2}$. Also since $\alpha(j)$ has exponential decay, $\sum_{j=1}^\infty (1+j)^{1/(1-\vartheta)} \alpha(j) < \infty$ so the first term is a constant which only depends on $(\alpha(j))_j$ and $\vartheta$.
  To derive the inequality, construct bracketing pairs $(\param_j^k,\Delta_j^k)_{1 \leq j \leq N(k)}$ with $N(k) = N_{[\,]}(2^{-k},\mathcal{X},\|\cdot\|_2)$ the minimal number of brackets needed to cover $\mathcal{X}$. By definition of $N(k)$ there exists brackets $(\Delta_{t,j}^k)_{j=1,\dots,N(k)}$ such that: 1) $\mathbb{E}\left( |\Delta_{t,j}^k |^2 \right)^{1/2} \leq 2^{-k}$ for all $t,j,k$. 2) For all $\param \in \paramspace$ and $k\geq 1$, there exists an index $j$ such that $|X_t(\param)-X_t(\param_j^k) \leq \Delta_{t,j}^k$.
  Note that brackets constructed the usual way need not be $\alpha$-mixing, a construction which preserve the dependence properties is given at the end of the proof.

  Assume that, without loss of generality, $|\Delta_j^k|\leq 1$ for all $j,k$. Let $(\pi_k(\param),\Delta_k(\param))$ be a bracketing pair for $\param \in \paramspace$. Let $q_0,k,q$  be positive integers such that $q_0 \leq k \leq q$ and let $T_k(\param) = \pi_k \circ \pi_{k+1} \circ \dots \circ \pi_q(\param)$.
  Using the following identity:
  \begin{align*}
    &\left[\mathbb{E} \left( \sup_{\param \in \paramspace} |Z_n(\param)|^2  \right) \right]^{1/2} \\ &= \left[\mathbb{E} \left( \sup_{\param \in \paramspace} |Z_n(\param)-Z_n(T_q(\param))+ \sum_{k=q_0+1}^q [Z_n(T_k(\param))-Z_n(T_{k-1}(\param))]+Z_n(T_{q_0}(\param))|^2  \right)\right]^{1/2}
  \end{align*}
  and the triangle inequality, decompose the identity into three groups:
  \begin{align*}
  \left[ \mathbb{E} \left( \sup_{\param \in \paramspace} |Z_n(\param)|^2  \right) \right]^{1/2} &\leq \left[\mathbb{E} \left( \sup_{\param \in \paramspace} |Z_n(\param)-Z_n(T_q(\param))|^2\right)\right]^{1/2}\\&+ \sum_{k=q_0+1}^q \left[\mathbb{E} \left( \sup_{h \in \paramspace} |Z_n(T_k(\param))-Z_n(T_{k-1}(\param))|^2\right)\right]^{1/2}\\&+ \left[\mathbb{E} \left( \sup_{\param \in \paramspace} |Z_n(T_{q_0}(\param))|^2  \right) \right]^{1/2}  \leq E_{q+1} + \sum_{k=q_0+1}^q E_{k} + E_{q_0}.
  \end{align*}
  The following inequality is due to \citet{Pisier1983}, for any $X_1,\dots,X_N$ random variables:\\ $\left[\mathbb{E}\left(  \max_{1 \leq t \leq N} |X_t|^p \right)\right]^{1/p} \leq N^{1/p}\max_{1 \leq t \leq N} \left[\mathbb{E}\left(  |X_t|^p \right)\right]^{1/p}.$
  Now that $\{ T_k(\param), \param \in \paramspace\}$ has at most $N(k)$ elements by construction. Some terms can be simplified:\\ $E_k = \mathbb{E} \left( \max_{g \in T_k(\paramspace)} |Z_n(g)-Z_n(T_{k-1}(g))|^2\right)^{1/2}$ for $q_0+1\leq k \leq q.$ For $p \geq 2$ using both H\"older and Pisier's inequalities:
  \begin{align*}
    &E_k \leq \left[\mathbb{E} \left( \sup_{\param \in T_k(\paramspace)} |Z_n(\param)-Z_n(T_{k-1}(\param))|^p\right)\right]^{1/p} \leq N(k)^{1/p} \max_{g \in T_k(\paramspace)} \left[\mathbb{E} \left( |Z_n(g)-Z_n(T_{k-1}(g))|^p\right)\right]^{1/p}.
  \end{align*}
  By the definition of $\Delta_j^k$: $E_k \leq N(k)^{1/p} \max_{1 \leq j \leq N(k)} \left[\mathbb{E} \left( |\Delta_j^k(g)|^p\right)\right]^{1/p}.$
  This is also valid for $E_{q+1}$. Using Rio's inequality for $\alpha$-mixing dependent processes:
  \begin{align*} E_k &\leq K N(k)^{1/p} \left( \sqrt{p} \max_{g \in T_k(\paramspace)}\| \Delta^k(g) \|^{\vartheta/2}_1 + p^2 n^{-1/2+1/p} \max_{g \in T_k(\paramspace)}\| \Delta^k(g) \|_\infty \right)
  \\&\leq K N(k)^{1/p} \left( \sqrt{p}2^{-\vartheta/2 k} + p^2 n^{-1/2+1/p} \right)
  \\&\leq K N(k)^{1/p} 2^{-k}\left( \sqrt{p}2^{k-\vartheta/2 k} + p^2 [n^{-1/2}2^k]^{1-2/p}2^{2k/p} \right). \end{align*}
  For $p>2$ and $2^q/\sqrt{n} \geq 1$, the inequality becomes:
  \begin{align*} E_k &\leq K N(k)^{1/p} 2^{-k}\left( \sqrt{p}2^{k-\vartheta/2 k} + p^2 [n^{-1/2}2^q]2^{2k/p} \right). \end{align*}
    Choosing $p=k+\log N(k)$ implies:
    \begin{align*}
      &N(k)^{1/p} \leq \exp(1),\,
      \sqrt{p} \leq \sqrt{k} + \sqrt{\log N(k)},\,
      p^2 \leq 4[k^2 + \log^2 N(k)],\,
      2^{2k/p} \leq 4.
    \end{align*}
    Applying these bounds to the previous inequality:
    \begin{align*} E_k &\leq 16 K \exp(1) 2^{-k}\left( [\sqrt{k}+\sqrt{\log N(k)}]2^{k-\vartheta/2 k} + [k^2+\log(N(k))^2]\frac{2^q}{\sqrt{n}} \right)
      \\ &\leq \frac{2^q}{\sqrt{n}} 16 K \exp(1) 2^{-k}\left( [\sqrt{k}+\sqrt{\log N(k)}]2^{k-\vartheta/2 k} + k^2+\log(N(k))^2 \right). \end{align*}
  Note that $\sum_{k \geq 1} (\sqrt{k}+k^2)2^{-k} \leq 2\sum_{k \geq 1} k^2 2^{-k} =12.$
  Hence:
  \begin{align*}
    &\sum_{k=q_0+1}^{q+1} E_k \leq \frac{2^{q+1}}{\sqrt{n}} 16 K \exp(1) \left( 12 + \int_0^1 [x^{\vartheta/2-1}\sqrt{\log N_{[\,]}(x,\mathcal{X},\|\cdot\|)}+\log^2 N_{[\,]}(x,\mathcal{X},\|\cdot\|)] dx\right).
  \end{align*}
  Pick the smallest integer $q$ such that $q \geq \log(n)/(2\log 2)-1$ so that $4\sqrt{n} \geq 2^q \geq \sqrt{n}/2$ and $2^q/\sqrt{n} \in [1/2,4]$.
  Only $E_{q_0}$ remains to be bounded, using Rio's inequality again:
  \begin{align*}
  \left[\mathbb{E} \left( \sup_{\param \in \paramspace} |Z_n(T_{q_0}(\param))|^2  \right)\right]^{1/2} &\leq K N(q_0)^{1/p}\left( \sqrt{p} \max_{h \in T_{q_0}(\paramspace)} \|X_1(\param) \|^\vartheta + p^2 n^{-1/2+1/p} \|X_1(\param) \|_\infty \right).
  \end{align*}
  For any $\varepsilon >0$ pick $p = \max\left( 2+\varepsilon, q_0 + \log N(q_0)\right)$ then:
$  N(q_0)^{1/p} \leq \exp(1),\,
  n^{-1/2+1/p} \leq n^{-1/2+1/(2+\varepsilon)} \leq 1.$
  Then conclude that:
  \begin{align*}
  \left[\mathbb{E} \left( \sup_{\param \in \paramspace} |Z_n(T_{q_0}(\param))|^2  \right)\right]^{1/2} &\leq 4 \exp(1) K \left( \sqrt{q_0}+\sqrt{\log N(q_0)} + q_0^2 + \log N(q_0)^2\right) \\ &\leq K^\prime \log N(q_0)^2  \leq K^\prime \int_0^1 \log^2 N_{[\,]}(x,\mathcal{X},\|\cdot\|)dx
  \end{align*}
  Hence, there exists a constant $K >0$ which only depends on $(\alpha(m))_{m >0}$ such that:
  \[
  \left[\mathbb{E} \left( \sup_{\param \in \paramspace} |Z_n(\param)|^2  \right)\right]^{1/2} \leq K \int_0^1 [x^{\vartheta/2-1}\sqrt{\log N_{[\,]}(x,\mathcal{X},\|\cdot\|)}+ \log^2 N_{[\,]}(x,\mathcal{X},\|\cdot\|)]dx.
  \]
  Let $\sqrt{C_n} =  K \int_0^1 [x^{\vartheta/2-1}\sqrt{\log N_{[\,]}(x,\mathcal{X},\|\cdot\|)}+ \log^2 N_{[\,]}(x,\mathcal{X},\|\cdot\|)]dx$, then $\mathbb{E} \left( \sup_{\param \in \paramspace} |Z_n(\param)|^2  \right) \leq C_n$ for all $n \geq 1$.

    \paragraph{Bracketing:} Because of the dynamics, the dependence of $X_t$ can vary with $\param$, which is not the case in \citet{BenHariz2005} or \citet{Andrews1994}. The following details the construction of the brackets $(\Delta_{t,j}^k)$ in the current setting. Suppose that $\param \rightarrow X_t(\param)$ is $L^p$-smooth. Let $\param_1^k,\dots,\param_{N(k)}^k$ be such that $\paramspace_{k_n} \subseteq \cup_{j=1}^{N(k)} B_{[\delta/C]^{\gamma}}(\param_j^k)$ then for $j \leq N(k)$ and some $Q\geq 2$:
    $\left[\mathbb{E} \left( \sup_{ \|\param-\param_j^k\|_\paramspace \leq [\delta/C]^\gamma} |X_t(\param)-X_t(\param_j^k)|^Q\right)\right]^{1/Q} \leq \delta.$
    Let $\Delta_{t,j}^k = \sup_{ \|\param-\param_j^k\|_\paramspace \leq [\delta/C]^\gamma} |X_t(\param)-X_t(\param_j^k)|$ then $\left[ \mathbb{E}\left( \Delta_{t,j}^{2k} \right) \right]^{1/2} \leq \left[ \mathbb{E}\left( \Delta_{t,j}^{Qk} \right) \right]^{1/Q}$ by H\"older's inequality which is smaller than $\delta$ by construction. $\left[ \mathbb{E}(|\Delta_{t,j}^k|^2) \right]^{1/2} \leq \delta=2^{-k}$ by construction. However, there is no guarantee that $(\Delta_{t,j}^k)_{t \geq 1}$ as constructed above is $\alpha$-mixing. Another construction for the bracket which preserves the mixing property is now suggested. Let $B \subseteq \paramspace$ a non-empty compact set in $\paramspace$. Note that since the $(\param_j^k)$ cover $\paramspace$, they also cover $B$. Let $\tilde \Delta_{t,j}^k$ be such that $|\frac{1}{n} \sum_{t=1}^n \tilde \Delta_{t,j}^k|=\sup_{\param \in B,\, \|\param-\param_j^k\| \leq [\delta/C]^\gamma} |\frac{1}{n} \sum_{t=1}^n X_t(\param)-X_t(\param_j^k)|$. Because $B$ is compact, the supremum is attained at some $\tilde \param_j^k \in B$. For all $t=1,\dots,n$, take $\tilde \Delta_{t,j}^k = X_t(\tilde \param_j^k)-X_t(\param_j^k)$. For each $(j,k)$  the sequence $(\tilde \Delta_{t,j}^k)_{t \geq 0}$ is $\alpha$-mixing by construction. Furthermore, by construction: $|\tilde \Delta_{t,j}^k|\leq |\Delta_{t,j}^k|$ and thus $\left[ \mathbb{E}(|\tilde \Delta_{t,j}^k|^Q) \right]^{1/Q} \leq 2^{-k}.$ These brackets, built in $B$ rather than $\paramspace$, preserve the mixing properties. The rest of the proof applied to $B$ implies:
    \begin{align*}
      &\mathbb{E}\left( \sup_{\param \in B} |\sqrt{n}[\hat \psi^S_t(\param)-\mathbb{E}(\hat \psi^S_t(\param))] |^2\right)\\ &\leq K \left( \int_0^1 x^{\vartheta/2-1}\sqrt{\log N_{[\,]}(x^{1/\gamma},B,\|\cdot\|)} + \log^2 N_{[\,]}(x^{1/\gamma},B,\|\cdot\|)  dx\right) \\ & \leq K \left( \int_0^1 x^{\vartheta/2-1}\sqrt{\log N_{[\,]}(x^{1/\gamma},\paramspace,\|\cdot\|)} + \log^2 N_{[\,]}(x^{1/\gamma},\paramspace,\|\cdot\|)  dx\right).
    \end{align*}
  For an increasing sequence of compact sets $B_{k} \subseteq B_{k+1} \subseteq \paramspace$ dense in $\paramspace$, there is an  increasing and bounded sequence:
  \begin{align*}
    &\mathbb{E}\left( \sup_{\param \in B_k} |\sqrt{n}[\hat \psi^S_t(\param)-\mathbb{E}(\hat \psi^S_t(\param))] |^2\right) \leq \mathbb{E}\left( \sup_{\param \in B_{k+1}} |\sqrt{n}[\hat \psi^S_t(\param)-\mathbb{E}(\hat \psi^S_t(\param))] |^2\right) \\ & \leq K \left( \int_0^1 x^{\vartheta/2-1}\sqrt{\log N_{[\,]}(x^{1/\gamma},\paramspace,\|\cdot\|)} + \log^2 N_{[\,]}(x^{1/\gamma},\paramspace,\|\cdot\|)  dx\right).
  \end{align*}
  This sequence is thus convergent with limit less or equal than the upper-bound. Hence, it must be that the supremum over $\paramspace$ is also bounded. It can thus be assumed that $(\Delta_{t,j}^k)_{t \geq 1}$ are $\alpha$-mixing.
  \end{proof}

  \begin{proof}[Proof of Lemma \ref{lem:geom_ergo}:]
    Since $(\mathbf{y}_t^s,\mathbf{x}_t)$ is geometrically ergodic, the joint density converges to the stationary distribution at a geometric rate: $\|f_{t}(y,x)-f^*_{t}(y,x)\|_{TV} \leq C \rho^t$, $\rho <1$. Because $B$ is bounded linear and the moments $\hat \psi_n,\hat \psi_n^s$ are bounded above by $M$, uniformly in $\tau$:
    \begin{align*}
     &Q_n(\param_0) \leq M_B^2 \int \left| \mathbb{E} \left( \hat \psi_n^S(\tau,\param_0) \right) - \lim_{n\to\infty} \mathbb{E}\left(\hat \psi_n(\tau) \right) \right|^2 \pi(\tau)d\tau\\
     &\leq M^2 M_B^2 \int \left| \frac{1}{n} \sum_{t=1}^n \int [f_{t}(y,x)-f^*_{t}(y,x)]dydx \right|^2 \pi(\tau)d\tau \\
     &\leq M^2 M_B^2  \left( \frac{1}{n} \sum_{t=1}^n \int \left| f_{t}(y,x)-f^*_{t}(y,x)\right|dydx \right)^2\\
     & \leq C M^2 M_B^2  \left( \frac{1}{n} \sum_{t=1}^n \rho^t \right)^2
      \leq \frac{C M^2 M_B^2}{(1-\rho)^2} \times \frac{1}{n^2}
     = O(1/n^2).
    \end{align*}
    \end{proof}

\begin{proof}[Proof of Lemma \ref{lem:stoch_eq_mixture}]
  Lemma \ref{lemma:maxineq_dep} implies that for some $C>0$:
  \begin{align*}
    &\left[\mathbb{E} \left( \sup_{\|\param_1-\param_2\|_m \leq \delta, \|\param_j-\Pi_{k(n)}\param_0\|_m \leq M_n \delta_{m,n},j=1,2} \Big| \hat \psi_t^s(\tau,\param_1)-\hat \psi_t^s(\tau,\param_2) \Big|^2 \right) \right]^{1/2} \frac{\sqrt{\pi(\tau)}}{(M_n\delta_{m,n})^{\gamma^2/2}}
   \\ &\leq C k(n)^{2\gamma^2} \left( \frac{\delta}{M_n\delta_{m,n}}\right)^{\gamma^2/2}. \end{align*}
  Next, apply the inequality of Lemma \ref{lemma:maxineq_dep} to generate the bound:
   \begin{align*}
    &\left[\mathbb{E} \left( \sup_{\|\param-\Pi_{k(n)}\param_0\|_m \leq M_n\delta_{m,n}} \Big| \Delta_n^S(\tau,\param)-\Delta_n^S(\tau,\Pi_{k(n)}\param_0) \Big|^2 \right)\right]^{1/2}\sqrt{\pi(\tau)} \leq \overline{C} \frac{(M_n\delta_{m,n})^{\gamma^2/2}}{\sqrt{n}} J_{m,n}
  \end{align*}
  for some $\overline{C}>0,\vartheta \in (0,1)$ and
  \begin{align*}
    &J_{m,n}  =\\&\int_0^1\left( x^{-\vartheta/2}\sqrt{\log N(\left[\frac{xM_n\delta_{mn}}{k(n)^{2\gamma^2}}\right]^{\frac{2}{\gamma^2}},\paramspace_{k(n)},\|\cdot\|_m)}+ \log^2 N(\left[\frac{xM_n\delta_{mn}}{k(n)^{2\gamma^2}}\right]^{\frac{2}{\gamma^2}},\paramspace_{k(n)},\|\cdot\|_m)\right)dx.
  \end{align*}
  Since $\int \sqrt{\pi(\tau)}d\tau < \infty$, the term on the left-hand side of the inequality can be squared and multiplied by  $\sqrt{\pi(\tau)}$. Then, taking the integral:
  \begin{align*}
    &\left[\mathbb{E} \left( \int \sup_{\|\param-\Pi_{k(n)}\param_0\|_m \leq M_n\delta_{m,n}} \Big| \Delta_n^S(\tau,\param)-\Delta_n^S(\tau,\Pi_{k(n)}\param_0) \Big|^2\pi(\tau)d\tau \right)\right]^{1/2} \leq \overline{C}_\pi \frac{(M_n\delta_{m,n})^{\gamma^2/2}}{\sqrt{n}}J_{m,n} \end{align*}
  where $\overline{C}_\pi = \overline{C}\int \sqrt{\pi(\tau)}d\tau$.
  Note that $J_{m,n}=O(k(n)^2 \max(\log[k(n)]^2,\log[M_n\delta_{m,n}]^2))$.
  
  To prove the final statement, notation will be shortened using $\Delta\hat \psi_t^s(\tau,\param) = \hat \psi_t^s(\tau,\param_0) - \hat \psi_t^s(\tau,\param)$. Note that, by applying \citet{Davydov1968}'s inequality:
  \begin{align*}
    &n\mathbb{E} \Big| \Delta\hat \psi_n^S(\tau,\Pi_{k(n)}\param_0)-\mathbb{E}[\Delta\hat \psi_n^S(\tau,\Pi_{k(n)}\param_0)] \Big|^2  
    \leq \frac{1}{n}\sum_{t=1}^n \mathbb{E} \Big| \Delta\hat \psi_t^s(\tau,\Pi_{k(n)}\param_0)-\mathbb{E}[\Delta\hat \psi_t^s(\tau,\Pi_{k(n)}\param_0)] \Big|^2 \\
    &+ \frac{24}{n}\sum_{m = 1}^n (n-m) \alpha(m)^{1/3} \max_{1\leq t \leq n}\left( \mathbb{E} \Big| \Delta\hat \psi_t^s(\tau,\Pi_{k(n)}\param_0)-\mathbb{E}[\Delta\hat \psi_t^s(\tau,\Pi_{k(n)}\param_0)] \Big|^6 \right)^{2/3}\\
    &\leq \left( 1 + 24 \sum_{m \geq 1} \alpha(m)^{1/3}\right) \max_{1\leq t \leq n}\left( \mathbb{E} \Big| \Delta\hat \psi_t^s(\tau,\Pi_{k(n)}\param_0)-\mathbb{E}[\Delta\hat \psi_t^s(\tau,\Pi_{k(n)}\param_0)] \Big|^6 \right)^{2/3}\\
    &\leq 4^{8/3} \left( 1 + 24 \sum_{m \geq 1} \alpha(m)^{1/3}\right) \max_{1\leq t \leq n}\left( \mathbb{E} \Big| \Delta\hat \psi_t^s(\tau,\Pi_{k(n)}\param_0)-\mathbb{E}[\Delta\hat \psi_t^s(\tau,\Pi_{k(n)}\param_0)] \Big|^2 \right)^{2/3}.
  \end{align*}
  The last inequality is due to $|\Delta \hat \psi_t^s(\tau,\param)|\leq 2$. By the continuity assumption the last term is a $o(1)$ when  $\|\param_0-\Pi_{k(n)}\|_\paramspace \to 0$. As a result: $\int\mathbb{E} \Big| \Delta\hat \psi_n^S(\tau,\Pi_{k(n)}\param_0)-\mathbb{E}[\Delta\hat \psi_n^S(\tau,\Pi_{k(n)}\param_0)] \Big|^2\pi(\tau)d\tau = o(1/n). $
  To conclude the proof, apply a triangle inequality and the results above:
  \begin{align*}
      &\left[\mathbb{E} \left( \int  \sup_{\|\param-\Pi_{k(n)}\param_0\|_m \leq M_n \delta_{mn}} \Big| \Delta_n^S(\tau,\param)-\Delta_n^S(\tau,\param_0) \Big|^2  \pi(\tau)d\tau \right)\right]^{1/2} \\
      &\leq \left[\mathbb{E} \left( \int  \sup_{\|\param-\Pi_{k(n)}\param_0\|_m \leq M_n \delta_{mn}} \Big| \Delta_n^S(\tau,\param)-\Delta_n^S(\tau,\Pi_{k(n)}\param_0) \Big|^2  \pi(\tau)d\tau \right)\right]^{1/2}\\
      &+ \left[ \int \mathbb{E} \left( \Big| \Delta\hat \psi_n^S(\tau,\Pi_{k(n)}\param_0)-\mathbb{E}[\Delta\hat \psi_n^S(\tau,\Pi_{k(n)}\param_0)] \Big|^2\pi(\tau)d\tau \right) \right]^{1/2} =o(1/\sqrt{n}).
  \end{align*}
  \end{proof}

\begin{proof}[Proof of Lemma \ref{lem:equivs}:] Let $R_n(\param,\param_0)=\mathbb{E}(\hat \psi_n^S(\tau,\param)-\hat \psi_n^S(\tau,\param_0))-\frac{d\mathbb{E}(\hat \psi_n^S(\tau,\param_0))}{d\param}[\param-\param_0]$.
  \begin{enumerate}[nosep,wide=0pt]
    \item[a)] Since $B$ bounded linear, the Cauchy-Schwarz inequality implies:
    \begin{align*}
      &\Big|\int  \psi_\beta(\tau,u_n^*) \big( \overline{ B\mathbb{E}(\hat \psi_n^S(\tau,\hat \param_n)-\hat \psi_n^S(\tau,\param_0))-B\frac{d\mathbb{E}(\hat \psi_n^S(\tau,\param_0))}{d\param}[\hat \param_n-\param_0]} \big)\pi(\tau)d\tau \Big|
      \\&= \Big|\int  \psi_\beta(\tau,u_n^*) \big( \overline{ B R_n(\hat \param_n,\param_0)} \big)\pi(\tau)d\tau \Big|
      \leq M_B
      \big( \int  |\psi_\beta(\tau,u_n^*)|^2\pi(\tau)d\tau \big)^{1/2} \big( \int \Big| R_n(\hat \param_n,\param_0)\Big|^2\pi(\tau)d\tau \big)^{1/2}
    \end{align*}
  
    By definition of $M_{n}$ and the inequality above:
    \begin{align*}
      & \mathbb{P} \left(  \Big|\int  \psi_\beta(\tau,u_n^*) \left( \overline{ BR_n(\hat \param_n,\param_0)} \right)\pi(\tau)d\tau \Big| > \frac{\varepsilon}{\sqrt{n}} \right) \\
      &\leq  \mathbb{P} \Bigg[ M_B^2
      \left( \int  |\psi_\beta(\tau,u_n^*)|^2\pi(\tau)d\tau \right)
      \sup_{\|\param-\param_0\|_{weak} \leq M_{n}\delta_n}\left( \int \Big| R_n(\param,\param_0) \Big|^2\pi(\tau)d\tau \right)> \frac{\varepsilon^2}{n}\Bigg]\\
      &+ \mathbb{P}\left( \|\hat \param_n-\param_0\|_\paramspace > M_{n}\delta_n \right)
    \end{align*}
    $\mathbb{P}\left( \|\hat \param_n-\param_0\|_\paramspace > M_{n}\delta_n \right)\to 0$ regardless of $\varepsilon$. Furthermore, Assumption \ref{ass:sufficient_cv_rate_mixture} \textit{ii.} implies:
    \begin{align*}
      &\sup_{\|\param-\param_0\|_{weak} \leq M_{n}\delta_n}\left( \int \Big| \mathbb{E}(\hat \psi_n^S(\tau,\param)-\hat \psi_n^S(\tau,\param_0))-\frac{d\mathbb{E}(\hat \psi_n^S(\tau,\param_0))}{d\param}[\param-\param_0]\Big|^2\pi(\tau)d\tau \right)^{1/2} \\&= \sup_{\|\param-\param_0\|_{weak} \leq M_{n}\delta_n}\left( \int \Big| R_n(\param,\param_0)\Big|^2\pi(\tau)d\tau \right)^{1/2} = O\left((M_{n}\delta_n)^2\right).
    \end{align*}
    Assumption \ref{ass:sufficient_cv_rate_mixture} \textit{i.} implies that $(M_{n}\delta_n)^{2}  =o(\frac{1}{\sqrt{n}})$, and thus: $\mathbb{P} \big(  \Big|\int  \psi_\beta(\tau,u_n^*) \big( \overline{ BR_n(\hat \param_n,\param_0)} \big)\pi(\tau)d\tau \Big| > \frac{\varepsilon}{\sqrt{n}} \big) = o(1)$
    regardless of $\varepsilon>0$. Hence: $\int  \psi_\beta(\tau,u_n^*) \big( \overline{ BR_n(\hat \param_n,\param_0)} \big)\pi(\tau)d\tau = o_p(1/\sqrt{n}).$
    \item[b)] Let $\Delta_n^S(\tau,\param) = \hat \psi_n^S(\tau,\param) - \mathbb{E}[\hat \psi_n^S(\tau,\param)]$. By the second stochastic equicontinuity result of Lemma \ref{lem:stoch_eq_mixture} and the Cauchy-Schwarz inequality:
    \begin{align*}
      &\Big|\int  \psi_\beta(\tau,u_n^*) \left( \overline{ B[\Delta_n^S(\hat \param_n)-\Delta_n^S(\param_0)]} \right)\pi(\tau)d\tau\Big| \\
      &\leq \left( \int  |\psi_\beta(\tau,u_n^*)|^2 \pi(\tau)d\tau \right)^{1/2} \left( \int \Big|B[\Delta_n^S(\hat \param_n)-\Delta_n^S(\param_0)] \Big|^2\pi(\tau)d\tau \right)^{1/2}\\
      &\leq M_B\left( \int  |\psi_\beta(\tau,u_n^*)|^2 \pi(\tau)d\tau \right)^{1/2} \left( \int \Big|[\Delta_n^S(\hat \param_n)-\Delta_n^S(\param_0)] \Big|^2\pi(\tau)d\tau \right)^{1/2}\\
      &\leq M_B \left( \int  |\psi_\beta(\tau,u_n^*)|^2 \pi(\tau)d\tau \right)^{1/2} \left( \sup_{\|\param - \Pi_{k(n)}\param_0\|\leq M_n \delta_{mn}}\int \Big|[\Delta_n^S( \param)-\Delta_n^S(\param_0)] \Big|^2\pi(\tau)d\tau \right)^{1/2}\\
      &=o_p(1/\sqrt{n}),
    \end{align*}
    where the last inequality holds with probability going to $1$ by definition of $M_n \delta_{mn}$.
    \item[c)]
    Let $\varepsilon_n = \pm \frac{1}{\sqrt{n}M_n} =o(\frac{1}{\sqrt{n}})$. For $h \in (0,1)$ define $\hat\param(h)  =\hat \param_n + h \varepsilon_n u_n^*$. Since $\hat \param_n = \hat\param(0)$. Recall that $\hat\param_n$ is the approximate minimizer of $\hat Q_n^s$ so that: $0 \leq \hat Q_n^S(\hat \param_n) \leq \inf_{\param \in \paramspace_{k(n)}} \hat Q_n^S(\param) + O_p(\eta_n).$
    Hence the following holds:
    \begin{align}
       0 &\leq \frac{1}{2} \left( \hat Q_n^S(\hat\param(1))- \hat Q_n^S(\hat\param(0))\right) +O_p(\eta_n) \label{eq:term1}\\
      &=\frac{1}{2} \Big[ \int B\left( \hat \psi_n(\tau) - \hat \psi_n^S(\tau,\hat \param(0)) \right)\overline{B\left( \hat \psi_n^S(\tau,\hat \param(0)) - \hat \psi_n^S(\tau,\hat \param(1)) \right)}\pi(\tau)d\tau \label{eq:term2}\\
      &+ \int \overline{B\left( \hat \psi_n(\tau) - \hat \psi_n^S(\tau,\hat \param(0)) \right)}B\left( \hat \psi_n^S(\tau,\hat \param(0)) - \hat \psi_n^S(\tau,\hat \param(1))\right) \pi(\tau)d\tau \label{eq:term3}\\
      &+\int \Big|B\left( \hat \psi_n^S(\tau,\hat \param(0))-\hat \psi_n^S(\tau,\hat \param(1)) \right)\Big|^2\pi(\tau)d\tau \Big] + O_p(\eta_n). \label{eq:term4}
    \end{align}
    To prove Lemma \ref{lem:equivs} c),  (\ref{eq:term2})-(\ref{eq:term3}) are expanded individually and shown to be  $o_p(1/\sqrt{n})$ and  (\ref{eq:term4}) is bounded, shown to be negligible under the assumptions.
  
    The first step deals with (\ref{eq:term4}):
    \begin{align*}
      &\left(\int \Big|B\left( \hat \psi_n^S(\tau,\hat \param(0))-\hat \psi_n^S(\tau,\hat \param(1)) \right)\Big|^2\pi(\tau)d\tau \right)^{1/2} \leq M_B\left(\int \Big|\hat \psi_n^S(\tau,\hat \param(0))-\hat \psi_n^S(\tau,\hat \param(1)) \Big|^2\pi(\tau)d\tau \right)^{1/2}\\
      &\leq \left(\int \Big| [\hat \psi_n^S(\tau,\hat \param(0))-\hat \psi_n^S(\tau,\hat\param(1))]-\mathbb{E}[\hat \psi_n^S(\tau,\hat \param(0))-\hat \psi_n^S(\tau,\hat\param(1))] \Big|^2\pi(\tau)d\tau \right)^{1/2}\\
      &+\left(\int \Big|\mathbb{E}[\hat \psi_n^S(\tau,\hat \param(0))-\hat \psi_n^S(\tau,\hat\param(1))] \Big|^2\pi(\tau)d\tau \right)^{1/2}
    \end{align*}
    By the triangle inequality and the stochastic equicontinuity results  from Lemma \ref{lem:stoch_eq_mixture}:
    \begin{align*}
      &\left(\int \Big| [\hat \psi_n^S(\tau,\hat \param(0))-\hat \psi_n^S(\tau,\hat\param(1))]-\mathbb{E}[\hat \psi_n^S(\tau,\hat \param(0))-\hat \psi_n^S(\tau,\hat\param(1))] \Big|^2\pi(\tau)d\tau \right)^{1/2} \\ &= O_p\left(\frac{I_{m,n}(M_{n}\delta_{mn})^{\gamma^2/2}}{\sqrt{n}}\right).
    \end{align*}
    Also, note that $\hat \param(1)=\hat \param(0)+\varepsilon_n u_n^*$, so that the Mean Value Theorem applies to last term:
    \begin{align*}
      &\left(\int \Big|\mathbb{E}[\hat \psi_n^S(\tau,\hat \param(0))-\hat \psi_n^S(\tau,\hat\param(1))] \Big|^2\pi(\tau)d\tau \right) = \left(\int \Big|\frac{d\mathbb{E}[\hat \psi_n^S(\tau,\hat \param(\tilde h))}{d\param}[\varepsilon_n u_n^*] \Big|^2\pi(\tau)d\tau \right)
    \end{align*}
    for some intermediate value $\tilde h\in(0,1)$. Also, by assumption:
    $\big(\int \Big|\frac{d\mathbb{E}[\hat \psi_n^S(\tau,\hat \param(\tilde h))}{d\param}[u_n^*] \Big|^2\pi(\tau)d\tau \big)^{1/2} = O_p(1).$
    Together these two imply:
    $\big(\int \Big|\mathbb{E}[\hat \psi_n^S(\tau,\hat \param(0))-\hat \psi_n^S(\tau,\hat\param(1))] \Big|^2\pi(\tau)d\tau \big)^{1/2} = O(\varepsilon_n).$
    This yields the bound for (\ref{eq:term4}):
    \begin{align*}
      \int \Big|B\left( \hat \psi_n^S(\tau,\hat \param(0))-\hat \psi_n^S(\tau,\hat \param(1)) \right)\Big|^2\pi(\tau)d\tau \leq O_p(\varepsilon_n^2) + O_p\left(\frac{(M_{n}\delta_{mn})^{\gamma^2}I_{m,n}^2}{n}\right).
    \end{align*}

    The remaining terms, (\ref{eq:term2})-(\ref{eq:term3}), are conjugates of each other. A bound for (\ref{eq:term2}) is also valid for (\ref{eq:term3}). Expanding (\ref{eq:term2}) yields:
    \begin{align}
        & \int B\left( \hat \psi_n(\tau) - \hat \psi_n^S(\tau,\hat \param(0)) \right)\overline{B\left( \hat \psi_n^S(\tau,\hat \param(0)) - \hat \psi_n^S(\tau,\hat \param(1)) \right)}\pi(\tau)d\tau \tag{\ref{eq:term2}}\\
        & = \int B\left( \hat \psi_n(\tau) - \hat \psi_n^S(\tau,\hat \param(0)) \right)\left[ \overline{B\left( \Delta_n^S(\tau,\hat \param(0)) - \Delta_n^S(\tau,\hat \param(1)) \right)} \right]\pi(\tau)d\tau \label{eq:term5} \\
        & +\int B\left( \hat \psi_n(\tau) - \hat \psi_n^S(\tau,\hat \param(0)) \right)\overline{B\mathbb{E}\left( \hat \psi_n^S(\tau,\hat \param(0)) - \hat \psi_n^S(\tau,\hat \param(1)) \right)}\pi(\tau)d\tau. \label{eq:term6}
    \end{align}
    Applying the Cauchy-Schwarz inequality to (\ref{eq:term5}) implies:
    \begin{align}
      &\Big|\int B\left( \hat \psi_n(\tau) - \hat \psi_n^S(\tau,\hat \param(0)) \right)\left[ \overline{B\left( \Delta_n^S(\tau,\hat \param(0)) - \Delta_n^S(\tau,\hat \param(1)) \right)} \right]\pi(\tau)d\tau\Big| \tag{\ref{eq:term5}} \\
      &\leq M_B\left( \int \Big| B\hat \psi_n(\tau) - B\hat \psi_n^S(\tau,\hat \param(0)) \Big|^2\pi(\tau) d\tau \right)^{1/2} \label{eq:term6a}\\
      &\times \left( \int \Big| \Delta_n^S(\tau,\hat \param(0)) - \Delta_n^S(\tau,\hat \param(1)) \Big|^2 \pi(\tau)d\tau \right)^{1/2} \label{eq:term7}
    \end{align}
    The term (\ref{eq:term6a}) can be bounded above using the triangle inequality:
    \begin{align*}
      & \big( \int \Big| B\hat \psi_n(\tau) - B\hat \psi_n^S(\tau,\hat \param(0)) \Big|^2\pi(\tau) d\tau \big)^{1/2} \\
      &\leq M_B\big( \int \Big| \hat \psi_n(\tau) - \hat \psi_n^S(\tau,\param_0) \Big|^2\pi(\tau) d\tau \big)^{1/2} +\big( \int \Big| B\hat \psi^S_n(\tau,\param_0) - B\hat \psi_n^S(\tau,\hat \param(0)) \Big|^2\pi(\tau) d\tau \big)^{1/2}.
    \end{align*}
    An application of Lemma \ref{lemma:covariance_ineq} and the geometric ergodicity of $(\mathbf{y}_t^s,\mathbf{x}_t)$ yields:\\
    $\left( \int \Big| \hat \psi_n(\tau) - \hat \psi_n^S(\tau,\param_0) \Big|^2\pi(\tau) d\tau \right)^{1/2}=O_p(1/\sqrt{n}).$
   Then, expanding the term in $\hat \psi_n^s$:
    \begin{align*}
      &\left( \int \Big| B\hat \psi^S_n(\tau,\param_0) - B\hat \psi_n^S(\tau,\hat \param(0)) \Big|^2\pi(\tau) d\tau \right)^{1/2} \leq \left( \int \Big| B\mathbb{E}[\hat \psi^S_n(\tau,\param_0) - \hat \psi_n^S(\tau,\hat \param(0))] \Big|^2\pi(\tau) d\tau \right)^{1/2}\\
      &+ M_B\left( \int \Big| [\hat \psi^S_n(\tau,\param_0) - \hat \psi_n^S(\tau,\hat \param(0))]- \mathbb{E}[\hat \psi^S_n(\tau,\param_0) - \hat \psi_n^S(\tau,\hat \param(0))] \Big|^2\pi(\tau) d\tau \right)^{1/2}\\
      &\leq \left( \int \Big| B\mathbb{E}[\hat \psi^S_n(\tau,\param_0) - \hat \psi_n^S(\tau,\hat \param(0))] \Big|^2\pi(\tau) d\tau \right)^{1/2} + O_p\left(\frac{(M_{n}\delta_{mn})^{\gamma^2/2} I_{m,n}}{\sqrt{n}}\right)\\
      &\leq M_B\left( \int \Big| \mathbb{E}[\hat \psi^S_n(\tau,\param_0) - \hat \psi_n^S(\tau,\hat \param(0))] - \frac{d\mathbb{E}(\hat \psi_n^S(\tau,\param_0))}{d\param}[\param_0-\hat \param(0)] \Big|^2\pi(\tau) d\tau \right)^{1/2} \\
      &+ \left( \int \Big|B\frac{d\mathbb{E}(\hat \psi_n^S(\tau,\param_0))}{d\param}[\param_0-\hat \param(0)] \Big|^2\pi(\tau) d\tau \right)^{1/2} + O_p\left(\frac{(M_{n}\delta_{mn})^{\gamma^2/2} I_{m,n}}{\sqrt{n}}\right).
    \end{align*}
    Note that Assumption \ref{ass:sufficient_cv_rate_mixture} ii. implies that:
    \begin{align*}
      &\left( \int \Big| \mathbb{E}[\hat \psi^S_n(\tau,\param_0) - \hat \psi_n^S(\tau,\hat \param(0))] - \frac{d\mathbb{E}(\hat \psi_n^S(\tau,\param_0))}{d\param}[\param_0-\hat \param(0)] \Big|^2\pi(\tau) d\tau \right)^{1/2} = O_p(M_n\delta_n).
    \end{align*}
    By definition of the weak norm: $\left( \int \Big|B\frac{d\mathbb{E}(\hat \psi_n^S(\tau,\param_0))}{d\param}[\param_0-\hat \param(0)] \Big|^2\pi(\tau) d\tau \right)^{1/2} = \|\hat \param_n - \param_0\|_{weak}.$
    Furthermore, $\|\hat \param_n - \param_0\|_{weak} = O_p(\delta_n)$ by assumption.
    Overall, the following bound holds for (\ref{eq:term6}): $
      \left( \int \Big| B\hat \psi_n(\tau) - B\hat \psi_n^S(\tau,\hat \param(0)) \Big|^2\pi(\tau) d\tau \right)^{1/2} \leq O_p\left(\frac{1}{\sqrt{n}}\right) + O_p\left(\delta_{n}\right) + O_p\left(\frac{(M_{n}\delta_{mn})^{\gamma^2/2} I_{m,n}}{\sqrt{n}}\right).$
    Re-arranging (\ref{eq:term7}) to apply the stochastic equicontinuity result again yields:
    \begin{align*}
      &\left( \int \Big| \Delta_n^S(\tau,\hat \param(0)) - \Delta_n^S(\tau,\hat \param(1))  \Big|^2 \pi(\tau)d\tau \right)^{1/2} 
      \leq \left( \int \Big| \Delta_n^S(\tau,\param_0) - \Delta_n^S(\tau,\hat \param(1)) \Big|^2 \pi(\tau)d\tau \right)^{1/2}\\
      &+ \left( \int \Big| \Delta_n^S(\tau,\param_0) - \Delta_n^S(\tau,\hat \param(0)) \Big|^2 \pi(\tau)d\tau \right)^{1/2}
      = O_p\left(\frac{(M_{n}\delta_{mn})^{\gamma^2/2} I_{m,n}}{\sqrt{n}}\right).
    \end{align*}
    Using the bounds for (\ref{eq:term6}) and (\ref{eq:term7}) yields the bound for (\ref{eq:term5}):
    \begin{align*}
      &\Big|\int B\left( \hat \psi_n(\tau) - \hat \psi_n^S(\tau,\hat \param(0)) \right)\left[ \overline{B\left( \Delta_n^S(\tau,\hat \param(0)) - \Delta_n^S(\tau,\hat \param(1)) \right)} \right]\pi(\tau)d\tau\Big|\\
      &\leq O_p\left(\frac{(M_{n}\delta_{mn})^{\gamma^2/2} I_{m,n}}{\sqrt{n}}\right)O_p\left(\max\left(M_n\delta_{n},\frac{1}{\sqrt{n}},\frac{(M_{n}\delta_{mn})^{\gamma^2/2} I_{m,n} }{\sqrt{n}}\right)\right).
    \end{align*}
    To bound (\ref{eq:term6}), apply the Mean Value theorem up to the second order:
    \begin{align*}
      &\int B\left( \hat \psi_n(\tau) - \hat \psi_n^S(\tau,\hat \param(0)) \right)\overline{B\mathbb{E}\left( \hat \psi_n^S(\tau,\hat \param(0)) - \hat \psi_n^S(\tau,\hat \param(1)) \right)}\pi(\tau)d\tau\\
      &=-\int B\left( \hat \psi_n(\tau) - \hat \psi_n^S(\tau,\hat \param(0)) \right)\overline{ B\frac{d\mathbb{E}( \hat \psi_n^S(\tau,\hat \param(0) ))}{d\param}[\varepsilon_n u_n^*] }\pi(\tau)d\tau\\
      &+\frac{1}{2}\int B\left( \hat \psi_n(\tau) - \hat \psi_n^S(\tau,\hat \param(0)) \right)\overline{ B\frac{d^2\mathbb{E}( \hat \psi_n^S(\tau,\hat \param(\tilde h) ))}{d\param d\param}[\varepsilon_n u_n^*,\varepsilon_n u_n^*]  }\pi(\tau)d\tau\\
      &= -\int B\left( \hat \psi_n(\tau) - \hat \psi_n^S(\tau,\hat \param(0)) \right)\overline{B\frac{d\mathbb{E}( \hat \psi_n^S(\tau,\param_0 ))}{d\param}[\varepsilon_n u_n^*] }\pi(\tau)d\tau+O_p(\varepsilon_n^2)\\
      &+\int B\left( \hat \psi_n(\tau) - \hat \psi_n^S(\tau,\hat \param(0)) \right) \overline{B\left[ \frac{d\mathbb{E}( \hat \psi_n^S(\tau,\hat \param(0) ))}{d\param}[\varepsilon_n u_n^*]-\frac{d\mathbb{E}( \hat \psi_n^S(\tau,\param_0 ))}{d\param}[\varepsilon_n u_n^*]  \right]}\pi(\tau)d\tau.
    \end{align*}
    Where the $O_p(\varepsilon_n^2)$ term is due to the Cauchy-Schwarz inequality and Assumption \ref{ass:sufficient_cv_rate_mixture} ii.:
    \begin{align*}
      &\Big|\int B\left( \hat \psi_n(\tau) - \hat \psi_n^S(\tau,\hat \param(0)) \right)\overline{ \frac{1}{2}B\frac{d^2\mathbb{E}( \hat \psi_n^S(\tau,\hat \param(\tilde t) ))}{d\param d\param}[\varepsilon_n u_n^*,\varepsilon_n u_n^*]  }\pi(\tau)d\tau\Big|^2\\
      &\leq \frac{\varepsilon_n^2}{2}\left( \int \Big|B\left( \hat \psi_n(\tau) - \hat \psi_n^S(\tau,\hat \param(0)) \right)\Big|^2\pi(\tau)d\tau \right)\int \Big| B\frac{d^2\mathbb{E}( \hat \psi_n^S(\tau,\hat \param(\tilde t) ))}{d\param d\param}[ u_n^*,u_n^*]  \Big|^2 \pi(\tau)d\tau.
    \end{align*}
    It was shown above that:
    \[ \left( \int \Big|B\left( \hat \psi_n(\tau) - \hat \psi_n^S(\tau,\hat \param(0)) \right)\Big|^2\pi(\tau)d\tau \right) = O_p\left(\max\left(M_n\delta_{n},\frac{1}{\sqrt{n}},\frac{(M_{n}\delta_{mn})^{\gamma^2/2} I_{m,n} }{\sqrt{n}}\right)^2\right).\]
    Also, by Assumption \ref{ass:sufficient_cv_rate_mixture} \textit{ii.}:
     $\left( \int \Big| B\frac{d^2\mathbb{E}( \hat \psi_n^S(\tau,\hat \param(\tilde t) ))}{d\param d\param}[ u_n^*,u_n^*]  \Big|^2 \pi(\tau)d\tau\right) = O_p(1).$
  
    Finally, applying the Cauchy-Schwarz inequality to the last term of the expansion of (\ref{eq:term6}) yields:
    \begin{align*}
      &\int B\left( \hat \psi_n(\tau) - \hat \psi_n^S(\tau,\hat \param(0)) \right) \left[ \overline{B\frac{d\mathbb{E}( \hat \psi_n^S(\tau,\hat \param(0) ))}{d\param}[\varepsilon_n u_n^*]-B\frac{d\mathbb{E}( \hat \psi_n^S(\tau,\param_0 ))}{d\param}[\varepsilon_n u_n^*] } \right]\pi(\tau)d\tau\\
      &\leq   \left(\int \Big|B \hat \psi_n(\tau) - \hat \psi_n^S(\tau,\hat \param(0))\Big|^2\pi(\tau)d\tau \right)^{1/2} \\
      &\times  \varepsilon_n\left(\int \Big|B\frac{d\mathbb{E}( \hat \psi_n^S(\tau,\hat \param(0) ))}{d\param}[u_n^*]-B\frac{d\mathbb{E}( \hat \psi_n^S(\tau,\param_0 ))}{d\param}[u_n^*] \Big|^2\pi(\tau)d\tau \right)^{1/2}\\
      &= O_p\left(\varepsilon_n \max\left(M_n\delta_{n},\frac{1}{\sqrt{n}},\frac{(M_{n}\delta_{mn})^{\gamma^2/2} I_{m,n}}{\sqrt{n}} \right)\delta_{n}\right).
    \end{align*}
    Using inequality (\ref{eq:term1}) together with the bounds above and the expansions of (\ref{eq:term2}) and (\ref{eq:term3}) yields:
    \begin{align*}
       0 \leq  &-\varepsilon_n \int B\left( \hat \psi_n(\tau) - \hat \psi_n^S(\tau,\hat \param(0)) \right)\overline{B\frac{d\mathbb{E}( \hat \psi_n^S(\tau,\param_0 ))}{d\param}[ u_n^*]}\pi(\tau)d\tau\\
       &-\varepsilon_n \int \overline{ B\left( \hat \psi_n(\tau) - \hat \psi_n^S(\tau,\hat \param(0)) \right)} B\frac{d\mathbb{E}( \hat \psi_n^S(\tau,\param_0 ))}{d\param}[ u_n^*] \pi(\tau)d\tau\\
       &+O_p\left(\varepsilon_n^2\right)+O_p\left(\frac{(M_{n}\delta_{mn})^{\gamma^2/2} I_{m,n}}{\sqrt{n}}\max(M_n\delta_{n},\frac{1}{\sqrt{n}},\frac{(M_{n}\delta_{mn})^{\gamma^2/2} I_{m,n}}{\sqrt{n}}) )\right)\\
       &+O_p\left(\varepsilon_n M_n\delta_{n}\max(M_n\delta_{n},\frac{1}{\sqrt{n}},\frac{(M_{n}\delta_{mn})^{\gamma^2/2} I_{m,n}}{\sqrt{n}}) \right)+O_p\left(\frac{[(M_{n}\delta_{mn})^{\gamma^2/2} I_{m,n}]^2}{n}\right)
    \end{align*}
  Since $\varepsilon_n = \pm \frac{1}{\sqrt{n}M_n}$, dividing by $\varepsilon_n$ both keeps and flips the inequality so that:
  \begin{align*}
     &\int B\left( \hat \psi_n(\tau) - \hat \psi_n^S(\tau,\hat \param_n) \right)\overline{B\frac{d\mathbb{E}( \hat \psi_n^S(\tau,\param_0 ))}{d\param}[ u_n^*]}\pi(\tau)d\tau\\
     &+\int \overline{ B\left( \hat \psi_n(\tau) - \hat \psi_n^S(\tau,\hat \param_n) \right)} B\frac{d\mathbb{E}( \hat \psi_n^S(\tau,\param_0 ))}{d\param}[ u_n^*] \pi(\tau)d\tau\\
     &=O_p(\varepsilon_n)+O_p\left(\frac{(M_{n}\delta_{mn})^{\gamma^2/2} I_{m,n}}{\varepsilon_n\sqrt{n}}\max\left(M_n\delta_{n},\frac{1}{\sqrt{n}},\frac{(M_{n}\delta_{mn})^{\gamma^2/2} I_{m,n}}{\sqrt{n}}) \right)\right)\\
     &+O_p\left(\max\left(M_n\delta_{n},\frac{1}{\sqrt{n}},\frac{(M_{n}\delta_{mn})^{\gamma^2/2} I_{m,n}}{\sqrt{n}}\right) \delta_{n}\right)+O_p\left(\frac{[(M_{n}\delta_{mn})^{\gamma^2/2} I_{m,n}]^2}{\varepsilon_n n}\right).
  \end{align*}
  By construction, $\varepsilon_n = o_p(1/\sqrt{n})$ and Assumption \ref{ass:sufficient_cv_rate_mixture} i. implies that $(M_{n}\delta_{mn})^{\gamma^2/2} I_{m,n} = o(1)$ so that all terms above are $o(1/\sqrt{n})$.
  To conclude the proof, note that:
  \begin{align*}
       &\int B\left( \hat \psi_n(\tau) - \hat \psi_n^S(\tau,\hat \param_n) \right)\overline{B\frac{d\mathbb{E}( \hat \psi_n^S(\tau,\param_0 ))}{d\param}[ u_n^*]}\pi(\tau)d\tau\\
       &+\int \overline{ B\left( \hat \psi_n(\tau) - \hat \psi_n^S(\tau,\hat \param_n) \right)} B\frac{d\mathbb{E}( \hat \psi_n^S(\tau,\param_0 ))}{d\param}[ u_n^*] \pi(\tau)d\tau\\
       &= \int [\psi_\param(\tau,u_n^*)\left( \overline{B[\hat \psi_n(\tau)-\hat \psi_n^S(\tau,\hat \param_n)}]\right)+ \overline{\psi_\param(\tau,u_n^*)}\left( B[\hat \psi_n(\tau)-\hat \psi_n^S(\tau,\hat \param_n)]\right)]
       =o_p(1/\sqrt{n}).
  \end{align*}
  \end{enumerate}
  
  \end{proof}

  \section{Additional Results for the Applications}  \label{apx:add_appli}
  
  \subsection{Verifying the Primitive Conditions in the First Application} \label{apx:check_conditions}
  Recall the data generating process used in Sections \ref{sec:MonteCarlo} and \ref{sec:empirical}:
  \begin{align}
    y_t &= \mu_y + \rho_y (y_{t-1} - \mu_y) + \sigma_t (e_{1,t} + \vartheta_y e_{1,t-1}), \, \sigma_t^2 = \mu_\sigma + \rho_\sigma \sigma_{t-1}^2 + \kappa_\sigma e_{2,t}, \tag{\ref{eq:SV2}}
  \end{align}
The following verifies 1) the identification condition, that is for any $L \geq \underline{L}$, to be determined, Assumption \ref{ass:sid} ii holds if $f$ has sub-exponential tails, as required in Assumption \ref{ass:sid} i, and 2) that Assumption \ref{ass:DGPMixt} is satisfied. Geometric ergodicity can be verified by checking if Assumption 2.1 and the additional condition in Theorem 3.1 of \citet{cline1999} hold. Using their notation, $\alpha(\cdot)$ is linear and $\gamma(\cdot)$ is a product so the required conditions are verified. 
\paragraph{Identification:} Assume $e_{1,t} \sim f$ with $\mathbb{E}(e_{1,t})=0,\mathbb{E}(e_{1,t}^2)=1$ and $e_{2,t} \sim f_2$ a non-negative, known distribution with finite moment of order $p$ for any $p \geq 1$, and $\mathbb{E}(e_{2,t}) = \text{var}(e_{2,t})=1$. Assume $\rho_\sigma \in [0,1)$, $\mu_\sigma \geq 0$ and $\kappa_\sigma > 0$. For $L \geq 1$, let $\mathbf{y}_t = (y_t,\dots,y_{t-L})$ and $\psi(\tau,\theta,f) = \int \exp(i\tau^\prime \mathbf{y}_t)f(\mathbf{y}_t,\theta,f)d \mathbf{y}_t$, note that $\partial_\tau \psi(0,\theta,f) = i \mathbb{E}(\mathbf{y}_t)= i(\mu_y,\dots,\mu_y)$ so that $\mu_y$ is identified. Similar any joint moments of $\mathbf{y}_t$ can be recovered from the CF $\psi$. It suffices to show that moments spanned by $\mathbf{y}_t$ can be used to identify $(\theta,f)$. The coefficient $\rho_y$ is identified by the moment condition $\mathbb{E}( [y_t - \mu_y - \rho_y(y_{t-1}-\mu_y)]y_{t-2} )=0$. Take $\tilde{y}_t = y_t - \mu_y - \rho_y(y_{t-1}-\mu_y)$, we have: $\tilde{y}_t = \sigma_t[e_t + \vartheta_y e_{t-1}]$. 

Compute two more moments: $\mathbb{E}(\tilde{y}_t^2) = \mathbb{E}(\sigma_t^2)(1+\vartheta^2)$, and ${E}(\tilde{y}_t\tilde{y}_{t-1}) = \vartheta\mathbb{E}(\sigma_t\sigma_{t-1}).$
Unlike the MA(1) with time-invariant volatility, these two moments alone are not sufficient to identify $\vartheta$ because $|\mathbb{E}(\sigma_t\sigma_{t-1})| \leq \mathbb{E}(\sigma_t^2)$, strictly with time-varying volatility. 

Consider three additional moments: $\mathbb{E}(\tilde{y}_t^2\tilde{y}_{t-2}^2) = \mathbb{E}(\sigma_t^2\sigma_{t-2}^2)(1+\vartheta^2)^2$,  $\mathbb{E}(\tilde{y}_t^2\tilde{y}_{t-4}^2) = \mathbb{E}(\sigma_t^2\sigma_{t-4}^2)(1+\vartheta^2)^2$, and $\mathbb{E}(\tilde{y}_t^2\tilde{y}_{t-2}^2 \tilde{y}_{t-4}^2) = \mathbb{E}(\sigma_t^2\sigma_{t-2}^2\sigma_{t-4}^2)(1+\vartheta^2)^3,$
the main idea here is to lag twice each time to only measure dependence in $\sigma_t^2$, lagging once would pick-up autocorrelations due to the MA(1) component. Let $\overline{\sigma}^2 = \mathbb{E}(\sigma_t^2)$, we have:
$\mathbb{E}(\sigma_t^2) = \frac{\mu_\sigma + \kappa_\sigma}{1-\rho_\sigma}$, $\mathbb{E}( [\sigma_t^2-\bar{\sigma}^2][\sigma_{t-2}^2-\bar{\sigma}^2] ) = \rho_\sigma^2 \frac{\kappa_\sigma^2 \text{var}(u_t)}{1-\rho_\sigma^2}$, and $\mathbb{E}( [\sigma_t^2-\bar{\sigma}^2][\sigma_{t-4}^2-\bar{\sigma}^2] ) = \rho_\sigma^4 \frac{\kappa_\sigma^2 \text{var}(u_t)}{1-\rho_\sigma^2}.$
Taking a ratio, we can identify $\rho_\sigma \geq 0$ by assumption:
$\frac{ \mathbb{E}(\tilde{y}_t^2 \tilde{y}_{t-2}^2) - \mathbb{E}(\tilde{y}_t^2)^2 }{ \mathbb{E}(\tilde{y}_t^2 \tilde{y}_{t-4}^2) - \mathbb{E}(\tilde{y}_t^2)^2 } = \frac{ \mathbb{E}(\sigma_t^2 \sigma_{t-2}^2) - \mathbb{E}(\sigma_t^2)^2 }{ \mathbb{E}(\sigma_t^2 \sigma_{t-4}^2) - \mathbb{E}(\sigma_t^2)^2 } = \rho_\sigma^2.$
We will assume $\rho_\sigma > 0$ in the following. Similarly, using moments of $\tilde{y}_t$ can compute:
$\frac{\mathbb{E}(\sigma_t^2)^2}{\mathbb{E}(\sigma_t^2 \sigma_{t-2}^2) - \mathbb{E}(\sigma_t^2)^2} = \frac{(\mu_\sigma + \kappa_\sigma)^2}{\kappa_\sigma^2} \frac{1-\rho_\sigma^2}{(1-\rho_\sigma)^2} \rho_\sigma^2 \text{var}(e_{2,t}),$
since $f_2$ is known, this identifies the ratio $(\kappa_\sigma + \mu_\sigma)/\kappa_\sigma$ since the indivial terms are non-negative. Now:
$\mathbb{E}(\tilde{y}_t^2) = \frac{\mu_\sigma + \kappa_\sigma}{\kappa_\sigma(1-\rho_\sigma)}\kappa_\sigma(1+\vartheta^2),$
identifies the product $\kappa_\sigma(1+\vartheta^2)$. The moment $\mathbb{E}(\tilde{y}_t \tilde{y}_{t-1})$ does not have a closed-form expression but can be approximated by expanding $\sqrt{\sigma_t}$ around the mean $\overline{\sigma}^2 = (\mu_\sigma + \kappa_\sigma)/(1-\rho_\sigma)$:
$\mathbb{E}(\sigma_t\sigma_{t-1}) \simeq \mathbb{E} \left( [\bar{\sigma} + \frac{1}{2\bar{\sigma}}(\sigma_t^2-\bar{\sigma}^2)][\bar{\sigma} + \frac{1}{2\bar{\sigma}}(\sigma_{t-1}^2-\bar{\sigma}^2)] \right) = \frac{1}{4\bar{\sigma}^2} \mathbb{E}( [\sigma_t^2-\bar{\sigma}^2][\sigma_{t-1}^2-\bar{\sigma}^2] ).$
The coefficients $\kappa_\sigma,\vartheta$ are then separately identified using the system of equation:
$\mathbb{E}(y_ty_{t-1}) = \vartheta\frac{1}{4\bar{\sigma}^2} \mathbb{E}( [\sigma_t^2-\bar{\sigma}^2][\sigma_{t-1}^2-\bar{\sigma}^2] )$,  
  $\mathbb{E}(y_t^2) = \bar{\sigma}^2 (1+\vartheta^2)$, and 
  $\mathbb{E}(y_t^2y_{t-2}^2) - [\mathbb{E}(y_t^2)]^2 = \rho_\sigma (1+\vartheta^2)\mathbb{E}( [\sigma_t^2-\bar{\sigma}^2][\sigma_{t-1}^2-\bar{\sigma}^2] )$,
using the same approach as for identifying the parameters of an MA(1) model with time-invariant volatility. This implies that $\underline{L} = 5$ lags are sufficient to identify $\theta = (\mu_y,\rho_y,\vartheta_y,\mu_\sigma,\rho_\sigma,\kappa_\sigma)$.
If the unknown distribution $f$ has sub-exponential tails, then its moment generating function is analytic on some interval and the distribution is determined by its moments. The idea is to solve for the moments of $e_{1,t}$ recursively from moments of $y_t$. We already assume that $\mathbb{E}(e_{1,t})=0,\mathbb{E}(e_{1,t}^2)=1$. The third moment $\mathbb{E}(\tilde{y}_t^3) = \mathbb{E}(e_{1,t}^3) \mathbb{E}(\sigma^3)(1+\vartheta^3)$, where the last two terms can be computed from knowledge of $\theta$. Using the Binomial Theorem: $\mathbb{E}(\tilde{y}_t^k) = \mathbb{E}(\sigma_t^k) \sum_{j=0}^k C_{k-j}^j \mathbb{E}(e_{1,t}^{k-j})\mathbb{E}(e_{1,t}^{j}) \vartheta^j$. With $k=3$, this pins down the third moments, for $k=4$ the only unknown is the fourth moment, etc. Hence, once $\theta$ is known $(\mathbb{E}(\tilde{y}_t^3),\dots,\mathbb{E}(\tilde{y}_t^k))$ identifies $(\mathbb{E}(e_{1,t}^3),\dots,\mathbb{E}(e_{1,t}^k))$ for any $k \geq 3$. Since $f$ is determined by its moments, it uniquely determines the distribution itself so that $(\theta,f)$ is jointly identified. With ergodicity, this implies $\lim_{n\to\infty} \mathbb{E}(\hat\psi_n(\tau)-\hat\psi_n^s(\tau,\beta))=0$, $\forall\tau$ if, and only if, $\beta = \beta_0$.
\paragraph{Data Generating Process:} Condition y(i): $\|g_{\text{obs}}(y_1,\beta_1,\sigma)-g_{\text{obs}}(y_2,\beta_1,\sigma)\| = |\rho_y| \|y_1-y_2\| \leq \bar{\rho}_y \|y_1-y_2\|$, which implies the strict contraction property if $|\rho_y| \leq \bar{\rho}_y < 1$. For condition y(ii), $\|g_{\text{obs}}(y_1,\mu_1,\rho_1,\vartheta_1,\sigma)-g_{\text{obs}}(y_1,\mu_2,\rho_2,\vartheta_2,\sigma)\| \leq |\mu_1-\mu_2| + |\rho_1-\rho_2|\times  |y_{1}| + \sigma |\vartheta_1-\vartheta_2| \times |e_1|$ which satisfies the desired bound if $|y_{t-1}|$, $\sigma_t$, and $|e_{t-1}|$  have bounded second moments. This is implied by restrictions on the parameters $\theta$ and the distribution $f$. For condition y(iii), note that the $\sqrt{\cdot}$ function is H\"older continuous with exponent $1/2$ so that $\|g_{\text{obs}}(y_1,\beta,\sigma_1)-g_{\text{obs}}(y_1,\beta_1,\sigma)\| \leq |e_t + \vartheta e_{t-1}| \times \sqrt{|\sigma_1-\sigma_2|}$, and $\mathbb{E}(|e_t + \vartheta e_{t-1}|^2) \leq 3 (1+\overline{\vartheta}^2)$ if $|\vartheta| \leq \overline{\vartheta}$ and $\mathbb{E}(e_t^2)=1$. Hence, the assumptions on the DGP are satisfied.
\subsection{Additional Results for the Second Application} \label{apx:second_application}
Table \ref{tab:CMT} below reports estimates for $1/\tau,1/\gamma$ instead of $\tau,\gamma$ in Table \ref{tab:estimates}. CIs are reported for $\tau$, $\gamma$ by transforming $[1/\hat\tau_n \pm 1.96 \text{se}(1/\hat\tau_n)]$. 
\begin{table}[ht] \setlength\tabcolsep{4.5pt} \renewcommand{\arraystretch}{0.935}
  \begin{center} \caption{Estimates, Standard Errors, Confidence Intervals without the Delta-Method} \label{tab:CMT}
  \begin{tabular}{l|cc|c||cc|c} \hline \hline
          & $1/\hat\tau_n$ & $\text{se}(1/\hat\tau_n)$ & 95\% CI for $\tau$ & $1/\hat\gamma_n$ & $\text{se}(1/\hat\gamma_n)$ & 95\% CI for  $\gamma$\\ \hline
    $k=1$ & 0.001 & 0.004 & $[128.35,\,+\infty)$ & 0.029 & 0.013 & $[18.52,\,266.65]$\\
    $k=2$ & 0.020 & 0.008 & $[28.81,\,204.99]$ & 0.050 & 0.012 & $[13.61,\,38.78]$\\
    $k=3$ & 0.018 & 0.006 & $[32.95,\,158.60]$ & 0.079 & 0.021 & $[8.43,\,26.19]$\\
    $k=4$ & 0.019 & 0.005 & $[34.17,\,107.94]$ & 0.096 & 0.025 & $[6.97,\,21.11]$\\
    $k=5$ & 0.015 & 0.005 & $[38.77,\,245.53]$ & 0.084 & 0.022 & $[7.90,\,24.69]$\\ \hline \hline
  \end{tabular} \vspace*{-0.9cm} \end{center} 
\end{table} 
\section{Additional Results}  \label{apx:add_res}
\subsection{Convergence rate in the MA(1) model}
The following derives the rate of convergence for the MA(1) process: $y_t = e_t + \vartheta e_{t-1}$, $e_t \overset{iid}{\sim} f$, first when $S=+\infty$. Here $\beta = (\vartheta,f) \in [-1,1] \times \mathcal{F}$. Take $L \geq 1$, then the joint distribution $\mathbf{y}_t = (y_t,y_{t-1})$ uniquely identifies $\beta$. Let $h(\tau,e,\vartheta) = e^{ i \tau_1 e_1 + i \vartheta \tau_2 e_2 + i \tau_2 e_2 + i \vartheta \tau_2 e_3 } $. The CF of $\mathbf{y}_t$ is: $\psi(\tau;\beta) = \int h(\tau,e,\vartheta)f(e_1)f(e_2)f(e_3) d e_1 de_2 de_3,$ for  $L=1$ where $\tau = (\tau_1,\tau_2)$. Let $\beta_k = (\vartheta_0,f_k)$ and $\hat\beta_n$ be an exact minimizer of $Q_n$, then by triangular inequalities in $\mathbb{L}^2(\pi)$:
{\small \begin{align*}
  &\Big( \int |\psi(\tau;\hat\beta_n)-\psi(\tau;\beta_0)|^2 \pi(\tau)d\tau \Big)^{1/2} - \Big( \int |\hat \psi_n(\tau)-\psi(\tau;\beta_0)|^2 \pi(\tau)d\tau \Big)^{1/2}\\
  &\leq \sqrt{Q_n(\hat\beta_n)} \leq \sqrt{Q_n(\beta_k)}
  \leq \Big(\int |\psi(\tau;\beta_k)-\psi(\tau;\beta_0)|^2 \pi(\tau)d\tau \Big)^{1/2} + \Big( \int |\hat \psi_n(\tau)-\psi(\tau;\beta_0)|^2 \pi(\tau)d\tau \Big)^{1/2}.
\end{align*}}
The last term is $O_p(n^{-1/2})$ plus $(\int |\psi(\tau;\beta_k)-\psi(\tau;\beta_0)|^2 \pi(\tau)d\tau )^{1/2} \leq (L+1)\|f_k-f_0\|_{TV}$ because the exponential has modulus $1$ and the density $f$ appears $L+1$ times in the CF. This is related to the bias accumulation discussed in the main text. From this we deduce the convergence rate under the distance implied by the CF:
\begin{align*}
  \Big( \int |\psi(\tau;\hat\beta_n)-\psi(\tau;\beta_0)|^2 \pi(\tau)d\tau \Big)^{1/2} &\leq 2 \Big( \int |\hat \psi_n(\tau)-\psi(\tau;\beta_0)|^2 \pi(\tau)d\tau \Big)^{1/2} + (L+1)\|f_k-f_0\|_{TV},
\end{align*}
which is a $O_p(\max[ n^{-1/2}, \log[k]^{2r/b}k^{-r}])$, since $\|f_k-f_0\|_{TV} = O(\log[k]^{2r/b}k^{-r})$ under the smoothness and tails assumptions. Because here $S=+\infty$, we can use $k \log[k]^{-2/b} \asymp n^{-1/2r}$ which gives:
$\int |\psi(\tau;\hat\beta_n)-\psi(\tau;\beta_0)|^2 \pi(\tau)d\tau \Big)^{1/2} = O_p(n^{-1/2}),$ in line with Corollary \ref{rmk:full_simu}. For $r=2$, this implies $k \asymp n^{-1/4}$, up to log-terms.
Asymptotically, $( \int |\psi(\tau;\hat\beta_n)-\psi(\tau;\beta_0)|^2 \pi(\tau)d\tau )^{1/2} \asymp \|\hat\beta_n - \beta_0\|_{\text{weak}}$ which implies the convergence rate in weak norm. It involves the derivative $\psi_\beta (\tau,f)[v]$, i.e. $\psi_f (\tau,\beta)[v] = \int h(\tau,e,\vartheta)\{ v(e_1)f(e_2)f(e_3) + f(e_1)v(e_2)f(e_3) + f(e_1)f(e_2)v(e_3) \}de_1de_2de_3$ and $\psi_\vartheta (\tau,\beta) = \int [\tau_1 e_2 + \tau_2 e_3] h(\tau,e,\vartheta) f(e_1)f(e_2)f(e_3) de_1de_2de_3$, for $L=1$. The local measure of ill-posedness $\tau_n$ is not closed-form, making the rate in stronger norm intractable. For $S < +\infty$, the term $\sup_{\beta \in \paramspace_{k(n)}}( \int |\psi(\tau;\beta)-\hat\psi_n^S(\tau;\hat\beta_n)|^2 \pi(\tau)d\tau )^{1/2} = O_p([k(n)\log[k(n)]]^2/\sqrt{nS})$ also affects  the rate of convergence. Here geometric ergodicity automatically holds; an MA(1) being m-dependent regardless of the MA coefficient.

\subsection{Sieve Long-Run Variance} \label{sec:sieveLRR}
The following derives the formula for the sieve long-run variance $\sigma_n^{\star 2}$. 
For brevity of notation, let $Z_t(\tau) = \hat\psi_t^S(\tau,\beta_0) - \hat\psi_t(\tau)$ and $Z_n(\tau) = \frac{1}{n} \sum_{t} Z_t(\tau)$. Let:
$S_t^\star =  \frac{1}{2} \int \{ \psi_\beta(\tau,v_n^\star)\overline{ Z_t(\tau) } + \overline{ \psi_\beta(\tau,v_n^\star) }Z_t(\tau)\}\pi(\tau)d\tau$,
the sieve score is $S_n^\star = \frac{1}{n} \sum_t S_t^\star$, and the sieve long-run variance is: $\sigma_n^{\star 2} = n \mathbb{E}(S_n^{\star 2}) = \mathbb{E}(S_t^{\star 2}) + 2 \sum_{j = 1}^{n-1} \frac{n-j}{n} \mathbb{E}(S_t^{\star}S_{t-j}^{\star})$.
For any $j \geq 0$, we have:
{ \begin{align*}
  &\mathbb{E}(S_t^\star S_{t-j}^\star ) = \frac{1}{4} \int  \Big\{ \psi_\beta(\tau,v_n^\star) \mathbb{E}[ \overline{Z_t(\tau_1)} \overline{Z_{t-j}(\tau_2)} ] \psi_\beta(\tau_2,v_n^\star) + \psi_\beta(\tau,v_n^\star) \mathbb{E}[ \overline{Z_t(\tau_1)} Z_{t-j}(\tau_2) ] \overline{ \psi_\beta(\tau_2,v_n^\star) } \\
  &+ \overline{\psi_\beta(\tau,v_n^\star)} \mathbb{E}[ Z_t(\tau_1) \overline{Z_{t-j}(\tau_2)} ]  \psi_\beta(\tau_2,v_n^\star) + \overline{\psi_\beta(\tau,v_n^\star)} \mathbb{E}[ Z_t(\tau_1) Z_{t-j}(\tau_2) ]  \overline{\psi_\beta(\tau_2,v_n^\star)}  \Big\} \pi(\tau_1)\pi(\tau_2) d\tau_1 d\tau_2.
\end{align*}}
Let $K_j: \mathbb{L}^2(\pi) \to \mathbb{L}^2(\pi)$ be a linear operator such that:
$K_j f(\tau_1) = \frac{1}{2} \int \Big\{ \mathbb{E}[ \overline{Z_t(\tau_1)} \overline{Z_{t-j}(\tau_2)} ] f(\tau_2) + \mathbb{E}[ \overline{Z_t(\tau_1)} Z_{t-j}(\tau_2) ] \overline{f(\tau_2)} \Big\}\pi(\tau_2)d\tau_2,$
with the associated inner-product in $\mathbb{L}^2(\pi)$: $\langle f_1,f_2 \rangle_{\pi} = \frac{1}{2}\int \{ f_1(\tau)\overline{f_2(\tau)} + \overline{f_1(\tau)}f_2(\tau) \} \pi(\tau)d\tau$.\footnote{ Notice that $\langle v_1,v_2\rangle = 1/2 \int \{ \psi_\beta(\tau,v_1) \overline{\psi_\beta(\tau,v_2)} \} + \overline{\psi_\beta(\tau,v_1)} \psi_\beta(\tau,v_2)  \}\pi(\tau)d\tau$ is also $\langle \psi_\beta(\cdot,v_1), \psi_\beta(\cdot,v_2)\rangle_{\pi}$. } Compactly re-write the autocovariance: $\mathbb{E}(S_t^\star S_{t-j}^\star ) = \langle \psi_\beta(\cdot,v_n^\star), K_j \psi_\beta(\cdot,v_n^\star) \rangle_{\pi}.$ Then, by linearity: $\sigma_n^{\star 2} = \langle \psi_\beta(\cdot,v_n^\star), K_n \psi_\beta(\cdot,v_n^\star) \rangle_{\pi}$,
where $K_n = K_0 + 2 \sum_{j=1}^{n-1} \frac{n-j}{n} K_j$ is the long-run variance operator. Their sample counterparts are: $\hat{\psi}_\beta(\tau,v) = d_\beta \hat\psi_n^S(\tau,\hat\beta_n)[v]$, $ \langle v_1,v_2 \rangle_n = \frac{1}{2} \int \{ \hat{\psi}_\beta(\tau,v_1) \overline{ \hat{\psi}_\beta(\tau,v_2) } + \overline{ \hat{\psi}_\beta(\tau,v_1) } \hat{\psi}_\beta(\tau,v_2)   \} \pi(\tau)d\tau$, $\hat{v}_n^\star$ such that $\langle \hat{v}_n^\star, v \rangle_n = d_\beta \phi(\hat\beta_n)[v]$ for any $v$. Let $\hat{Z}_t(\tau) = \hat\psi_t^S(\tau,\hat\beta_n) - \hat\psi_t(\tau)$, $\hat{S}_t^\star =  \frac{1}{2} \int \{ \hat{\psi}_\beta(\tau,\hat{v}_n^\star)\overline{ \hat{Z}_t(\tau) } + \overline{ \hat{\psi}_\beta(\tau,\hat{v}_n^\star) }\hat{Z}_t(\tau)\}\pi(\tau)d\tau$, and $\hat{S}_n^\star = \frac{1}{n} \sum_t \hat{S}_t^\star$. Using an estimate $\hat{K}_n$ of $K_n$, we have: $\|\hat{v}_{n,sd}^\star\|^2 = \hat{\sigma}_n^{\star 2} = \langle \hat{\psi}_\beta(\cdot,\hat{v}_n^\star), \hat{K}_n \hat{\psi}_\beta(\cdot,\hat{v}_n^\star) \rangle_{\pi} = \langle \hat{v}_n^\star, \hat{v}_n^\star \rangle_{n,\hat{K}_n}$. Now, to estimate the long-run variance operator $K_n$, take $j \geq 0$ and let $\hat{K}_j$ be such that:
$\hat{K}_j f(\tau_1) = \frac{1}{2} \int \Big\{ \frac{1}{n} \big[\sum_{t=j+1}^n \overline{\hat{Z}_t(\tau_1)} \overline{\hat{Z}_{t-j}(\tau_2)}\big] f(\tau_2) + \frac{1}{n} \big[\sum_{t=j+1}^n \overline{\hat{Z}_t(\tau_1)} \hat{Z}_{t-j}(\tau_2)\big] \overline{f(\tau_2)} \Big\}\pi(\tau_2)d\tau_2$; then $\hat{K}_n = \hat{K}_0 + 2 \sum_{j=1}^{n-1} \omega(j/T_n) \hat{K}_j$, where $\omega$ and $T_n$ are the HAC kernel and bandwidth.

\begin{assumption} \label{ass:LRRV} Suppose i. $\sup_{\beta \in \mathcal{N}_{osn}} \sup_{ v \in \overline{V}^1_{k(n)} } | d_\beta \phi(\beta)[v] - d_\beta \phi(\beta_0)[v] | = o(1)$, ii. for each $k(n)$, any $\beta \in \mathcal{N}_{osn}$, and any $v \in \overline{V}^1_{k(n)}$, $\hat{\psi}_{\beta}(\cdot,v) \in \mathbb{L}^2(\pi)$, $\sup_{v_1,v_2 \in \overline{V}^1_{k(n)}}|\langle v_1,v_2 \rangle_n -\langle v_1,v_2 \rangle | = o_p(1)$, iii. $\sup_{v \in \overline{V}_{k(n)}^1} | \langle v,v \rangle_{n,K_n} - \langle v,v \rangle_{K_n} | = o_p(1)$, iv. $\|\hat{K}_n - K_n\|_{op} = o_p(1)$.
\end{assumption}
Where $\|\cdot\|_{op}$ is the operator norm  in $(\mathbb{L}^2(\pi),\langle \cdot,\cdot \rangle_{\pi})$. Assumption \ref{ass:LRRV} i-iii is based on Assumption 4.1 in \citet{Chen2015a}. Given Assumption \ref{ass:sid} iii, Proposition 3.3 in \citet{Carrasco2007} imply Assumption \ref{ass:LRRV} iv holds under Assumption \ref{ass:HAC} below.

\begin{assumption} \label{ass:HAC} Suppose i. $\omega : \mathbb{R} \to [0,1]$, $\omega(0)=1$, $\omega(-x)=\omega(x)$, $\forall x \in \mathbb{R}$, $\omega \in \mathbb{L}^2(\mathbb{R})$, $\omega$ is continuous at $0$ and all, but finitely many, values of $x$; ii. $T_n^{2 \nu + 1}/n \to \gamma \in (0,\infty)$ for some $\nu$ for which $\|\omega^\nu\|<\infty$ and $\|f_{Y}^{\nu}\| < \infty$, $\omega^\nu$ and $f_{Z}^{\nu}$ are the $\nu$-th derivative of $\omega$ and $f_Y$, the spectral density of $(\mathbf{y}_t,\mathbf{y}_t^s)$ at $0$.
\end{assumption}

\begin{proposition} \label{prop:sieveLRR}
  Suppose Assumption \ref{ass:LRRV} holds, then $\big| \hat{\sigma}_n^\star/\sigma_n^\star-1 \big| = o_p(1)$.
\end{proposition}
Proposition \ref{prop:sieveLRR} follows from Theorem 4.2 in \citet{Chen2015a}, where now Step 2A in their proof \citep[p9]{Chen2015aS} requires $\|\hat{K}_n-K_n\|_{op}=o_p(1)$ as in Assumption \ref{ass:LRRV} iv. 
The formula used in the main text is easier to implement, but equivalent. For each $j \geq 0$: $\int \text{real}\{ \psi_\beta(\tau_1,v_n^\star) \mathbb{E}[ \overline{Z_t(\tau_1)}\text{real}[Z_{t-j}(\tau_2) \overline{ \psi_\beta(\tau_2,v_n^\star) }] ] \}\pi(\tau_1)\pi(\tau_2)d\tau_1 d\tau_2 = \langle \psi_\beta(\cdot,v_n^\star), K_j \psi_\beta(\cdot,v_n^\star) \rangle_{\pi}$. Because $\mathbb{E}$, $\int$ and $\text{real}$ are linear operators, they arrange into: \begin{align*} &\langle \psi_\beta(\cdot,v_n^\star), K_j \psi_\beta(\cdot,v_n^\star) \rangle_{\pi}\\ &= \mathbb{E}\Big\{ \left(\int \text{real}[\psi_\beta(\tau_1,v_n^\star) \overline{Z_t(\tau_1)}] \pi(\tau_1)d\tau_1 \right) \left(\int \text{real}[\psi_\beta(\tau_2,v_n^\star) \overline{Z_{t-j}(\tau_2)}] \pi(\tau_2)d\tau_2 \right) \Big\}. \end{align*}
Then replace $\text{real}[\psi_\beta(\tau_1,v_n^\star) \overline{Z_t(\tau_1)}] = \text{real}[\psi_\beta(\tau_1,v_n^\star)] \text{real}[Z_t(\tau_1)] + \text{im}[\psi_\beta(\tau_1,v_n^\star)] \text{im}[Z_t(\tau_1)]$. Next, let $\varphi = (\theta,\omega,\mu,\sigma)$ denote the parameter $\beta$ in the sieve basis. For any $v$, $v^\prime d_\varphi \phi(\beta_0) = \langle v, v_n^\star \rangle = v^\prime \text{real}[ \int \psi_{\varphi^\prime}(\tau,\beta_0) \overline{\psi_{\varphi^\prime}(\tau,\beta_0)}\pi(\tau)d\tau] v_n^\star$ so $v_n^\star = \text{real}[ \int \psi_{\varphi^\prime}(\tau,\beta_0) \overline{\psi_{\varphi^\prime}(\tau,\beta_0)}\pi(\tau)d\tau]^{-1} d_\varphi \phi(\beta_0)$. Now, substitude $v_n^\star$ into $\langle \psi_\beta(\cdot,v_n^\star), K_n \psi_\beta(\cdot,v_n^\star) \rangle_{\pi}$ to get the sandwich formula. The same derivations applied to the sample quantities yield the formula in the main text. 

\newpage

\section{Additional Monte-Carlo Results} \label{sec:MC_Extra}
\subsection{Main Examples: $n=100$} \label{sec:MC_Extra_main}

\begin{table}[h] \setlength\tabcolsep{4.5pt} \renewcommand{\arraystretch}{0.935}
  \begin{center} \caption{Bias, Standard Deviation and Size} \label{tab:resMC100} {
    \small
    \begin{tabular}{cl|aaaa|bb|b} \hline \hline
      & & \multicolumn{4}{c}{Sieve-SMM} & \multicolumn{2}{c}{Bayesian} & GMM\\ \hline
      & $k$ & \multicolumn{2}{c}{2} & \multicolumn{2}{c|}{3} & 2 & 3 & -\\
      & $S$ & \mc{1}{1} & \mc{1}{5} & \mc{1}{1} & \multicolumn{1}{c|}{5} & - & - & -\\ \hline
      \multirow{3}{*}{AR(1)}       & bias  & -0.023 & -0.025 & -0.030 & -0.032 & -0.024 & -0.023 & -0.031 \\ 
      & std & \,\,0.119 & \,\,0.092 & \,\,0.112 & \,\,0.090 & \,\,0.073 & \,\,0.072 & \,\,0.083 \\
      & size  & \,\,0.047 & \,\,0.037 & \,\,0.033 & \,\,0.027 & \,\,0.055 & \,\,0.053 & \,\,0.016 \\ \hline \hline
    \end{tabular} }\\
  {\footnotesize \textit{Note: $n=100$, $1000$ Monte-Carlo replications.}}
  \end{center}
\end{table}

\begin{table}[ht] \setlength\tabcolsep{4.5pt} \renewcommand{\arraystretch}{0.935}
   \caption{Bias, Standard Deviations and Size for the SV Model (\ref{eq:SV2})}
  \centering \small
  \begin{tabular}{lr|ccccc|ccccc}
    \hline\hline
    & & \multicolumn{5}{c|}{$k=2$} & \multicolumn{5}{c}{$k=4$} \\
  S & & $\mu_y$ & $\rho_y$ & $\vartheta_y$ & $\rho_\sigma$ & $\kappa_\sigma$ & $\mu_y$ & $\rho_y$ & $\vartheta_y$ & $\rho_\sigma$ & $\kappa_\sigma$ \\ 
    \hline 
  \multirow{3}{*}{1} & bias & 0.000 & -0.003 & 0.005 & -0.167 & -0.092 & -0.000 & -0.001 & 0.010 & -0.078 & -0.105 \\ 
   & std & 0.014 & 0.014 & 0.082 & 0.276 & 0.280 & 0.012 & 0.014 & 0.066 & 0.182 & 0.201 \\ 
  & size & 0.315 & 0.200 & 0.100 & 0.100 & 0.155 & 0.215 & 0.140 & 0.060 & 0.020 & 0.085 \\ \hline
  \multirow{3}{*}{5} & bias & 0.001 & -0.006 & 0.020 & -0.077 & -0.133 & 0.000 & -0.006 & 0.011 & -0.051 & -0.083 \\ 
  & std & 0.009 & 0.012 & 0.062 & 0.216 & 0.210 & 0.008 & 0.012 & 0.053 & 0.126 & 0.138 \\ 
  & size & 0.535 & 0.170 & 0.090 & 0.005 & 0.055 & 0.335 & 0.125 & 0.050 & 0.000 & 0.005 \\ \hline
  \multirow{3}{*}{20} & bias & 0.000 & -0.005 & 0.012 & -0.041 & -0.116 & -0.000 & -0.005 & 0.008 & -0.020 & -0.066 \\ 
  & std & 0.008 & 0.011 & 0.060 & 0.177 & 0.193 & 0.007 & 0.011 & 0.056 & 0.071 & 0.113 \\ 
  & size & 0.505 & 0.155 & 0.045 & 0.000 & 0.000 & 0.425 & 0.105 & 0.050 & 0.000 & 0.000 \\   
     \hline \hline
  \end{tabular}\\{\footnotesize \textit{Note: $n=750$, $200$ Monte-Carlo replications. $(\mu_y,\rho_y,\vartheta_y,\rho_\sigma,\kappa_\sigma) = (0.025,0.98,-0.73,0.7,0.6)$, $\kappa_\sigma$ is scaled by $10^4$ for readability. For reference - the bias, std and size for GMM estimates of $\mu_y$ are $0.001$, $0.008$, $0.455$ and for $\rho_y$ $-0.006$, $0.027$, $0.06$. }}
  \end{table}

\subsection{Sensitivity to the Estimation Inputs} \label{sec:MC_Extra_sens}

\begin{table}[H] \setlength\tabcolsep{4.5pt} \renewcommand{\arraystretch}{0.935}
  \begin{center} \caption{Sensitivity to Estimation Inputs in the AR(1) Model (\ref{eq:ar_1})} \label{tab:tuning} {
    \small
    \begin{tabular}{b|a|bb|bb|bb} \hline \hline
     & Baseline & $L=2$ & $L=6$ & $\underline{\sigma}_{k(n)}=1.4$ & $\underline{\sigma}_{k(n)}=2.2$ & $B=100$ & $B=250$ \\ \hline
     $\hat\rho_n$ & 0.630 & 0.638 & 0.623 & 0.607 & 0.623 & 0.629 & 0.627\\ \hline \hline
    \end{tabular} }\\
    {\footnotesize \textit{Note: plots based on one simulated sample, $n=200$, $k=3$. Baseline: $L=4$, $\underline{\sigma}_{k(n)}=1.8$, $B=500$}}
  \end{center}
\end{table}

\begin{figure}[h]
  \begin{center} \caption{Sensitivity to Estimation Inputs in the AR(1) Model (\ref{eq:ar_1})}\label{fig:tuning}
  \includegraphics[scale=0.55]{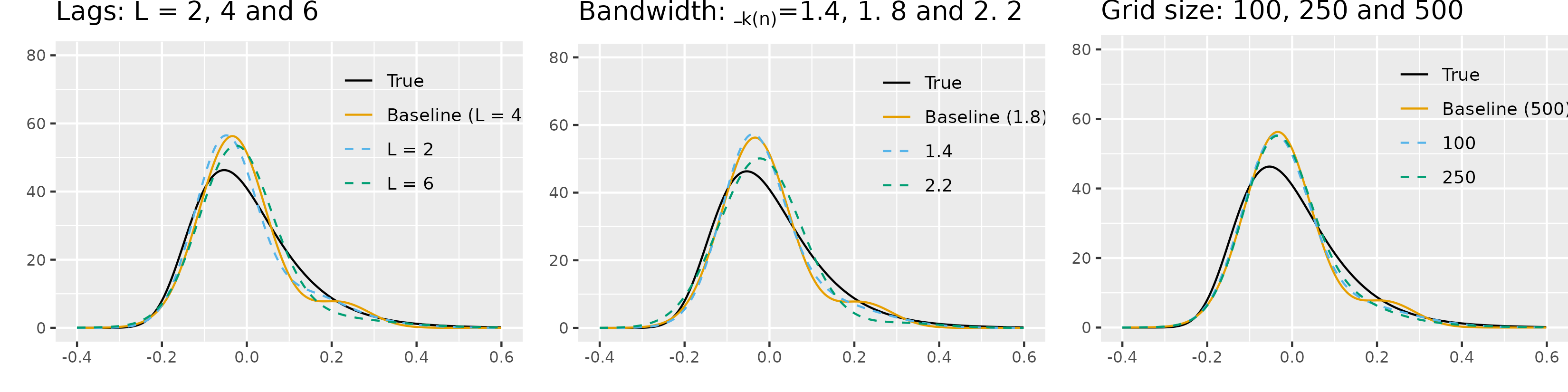}\\
  {\footnotesize \textit{Note: plots based on one simulated sample, $n=200$, $k=3$.}} \end{center}
\end{figure}

\subsection{Asset Pricing in a Stationnary Production Economy} \label{sec:MC_Extra_AP}

\paragraph{Model:} The model is a simplified version of \citet{Ruge-Murcia2016} with CRRA preferences. Log-productivity and inflation follow AR(1) processes instead of a VAR(1). The main equations are summarized below. Utility is CRRA, using the notation of the empirical application:
$U_t = \frac{C_t^{1-\gamma}}{1-\gamma} + \beta \mathbb{E}_t(U_{t+1}).$ 
Productivity $Z_t$ evolves according to:
$\log Z_{t+1} = \rho_z \log Z_{t} + e_{1,t+1},$
where $e_{1,t+1} \overset{iid}{\sim} f_1$ has mean zero. Inflation $\pi_{t+1} = P_{t+1}/P_t$ evolves according to:
$\log \pi_{t+1} = \log\bar{\pi} + \rho_\pi (\log \pi_{t}- \log\bar{\pi}) + e_{2,t+1},$
where $e_{2,t+1} \overset{iid}{\sim} f_2$ has mean zero. Production is Cobb-Douglas $Y_t = Z_t K_t^\alpha$, capital evolves according to $K_{t+1} = (1-\delta)K_t + I_t$ with quadratic adjustment costs $\Psi_t = \frac{\psi}{2}(I_t/K_t-\delta)^2K_t$. The bond pricing equation is:  $Q_t^\ell = \beta\mathbb{E}_t \left[  \left( \frac{C_{t+1}}{C_t} \right)^{-\gamma} \frac{Q_{t+1}^{\ell-1}}{\pi_{t+1}} \right],$
where $Q_t^0=1$. Only the $3$m yield is computed. Two parameters $(\alpha,\delta)=(0.35,0.025)$ are calibrated as in \citet{Ruge-Murcia2016}. Consumption and investment growth are de-meaned to remove the effect of calibration on the levels. Estimation is performed as in the empirical application but with $S=1$. For reference, estimates with parametric skewed-logistic (\textsc{skl}) shocks are also reported. The settings for the optimizer are the same for all three estimations.  The average time is 20mn for the parametric estimates, 31mn for $k=3$ and 49mn for $k=5$ in a 12-core cluster environment. Table \ref{tab:AP} reports average, median estimates and their standard deviations. \textsc{skl} estimates for $\gamma$ and $\phi$ are very skewed upwards, mixtures estimates appear to be more robust and closer to their large sample approximation. Figure \ref{fig:AP} reports the average and interquantile range of the density estimates $\hat f_{1,n},\hat f_{2,n}$. 
   
\begin{table}[h] \setlength\tabcolsep{4.5pt} \renewcommand{\arraystretch}{0.935}
  \begin{center} \caption{Asset Pricing Model: Estimates of $\theta$} \label{tab:AP} {
    \small
    \begin{tabular}{ll|bbbbbb}
      \hline \hline
     & & $\beta$ & $\gamma$ & $\phi$ & $\rho_z$ & $\rho_\pi$ & $\log\overline{\pi}$\\ \hline
     & true & 0.99 & 4 & 20 & 0.95 & 0.9 & 0.009 \\ \hline
     \multirow{4}{*}{\textsc{skl}} & mean & 0.990 & 4.893 & 20.441 & 0.946 & 0.894 & 0.009 \\ 
      & median & 0.990 & 4.174 & 20.154 & 0.951 & 0.897 & 0.009\\
      & std & 0.003 & 3.470 & 6.034 & 0.026 & 0.025 & 0.002 \\ 
      & size & 0.036 & 0.216 & 0.076 & 0.152 & 0.536 & 0.200 \\ \hline
      \multirow{4}{*}{$k=3$} & mean & 0.990 & 4.233 & 20.456 & 0.950 & 0.901 & 0.009 \\ 
      & median & 0.990 & 4.100 & 20.198 & 0.950 & 0.899 & 0.009\\
      & std & 0.002 & 1.017 & 2.410 & 0.006 & 0.018 & 0.001 \\ 
      & size & 0.112 & 0.100 & 0.172 & 0.084 & 0.008 & 0.004 \\ \hline
      \multirow{4}{*}{$k=5$} & mean & 0.990 & 4.183 & 20.351 & 0.950 & 0.902 & 0.009 \\ 
      & median & 0.990 & 4.077 & 20.203 & 0.950 & 0.901 & 0.009\\
      & std & 0.002 & 0.696 & 1.540 & 0.004 & 0.013 & 0.001 \\ 
      & size & 0.064 & 0.104 & 0.176 & 0.072 & 0.000 & 0.000 \\  \hline
      \multicolumn{2}{c|}{ub} & 0.9999  & 20 & 60 &  0.99 & 0.99 & 0.02\\
      \multicolumn{2}{c|}{lb} & 0.95 &  0.01 & 0.01 &  0.7 &  0.6 & 0.003\\
      \hline \hline
    \end{tabular}
     }\\
    {\footnotesize \textit{Note: $n=235$, $S=1$, $250$ Monte-Carlo replications.}}
  \end{center}
\end{table} 

\begin{figure}[H]
  \begin{center} \caption{Asset Pricing Model: Estimates of $f_1,f_2$}\label{fig:AP}
  \includegraphics[scale=0.4]{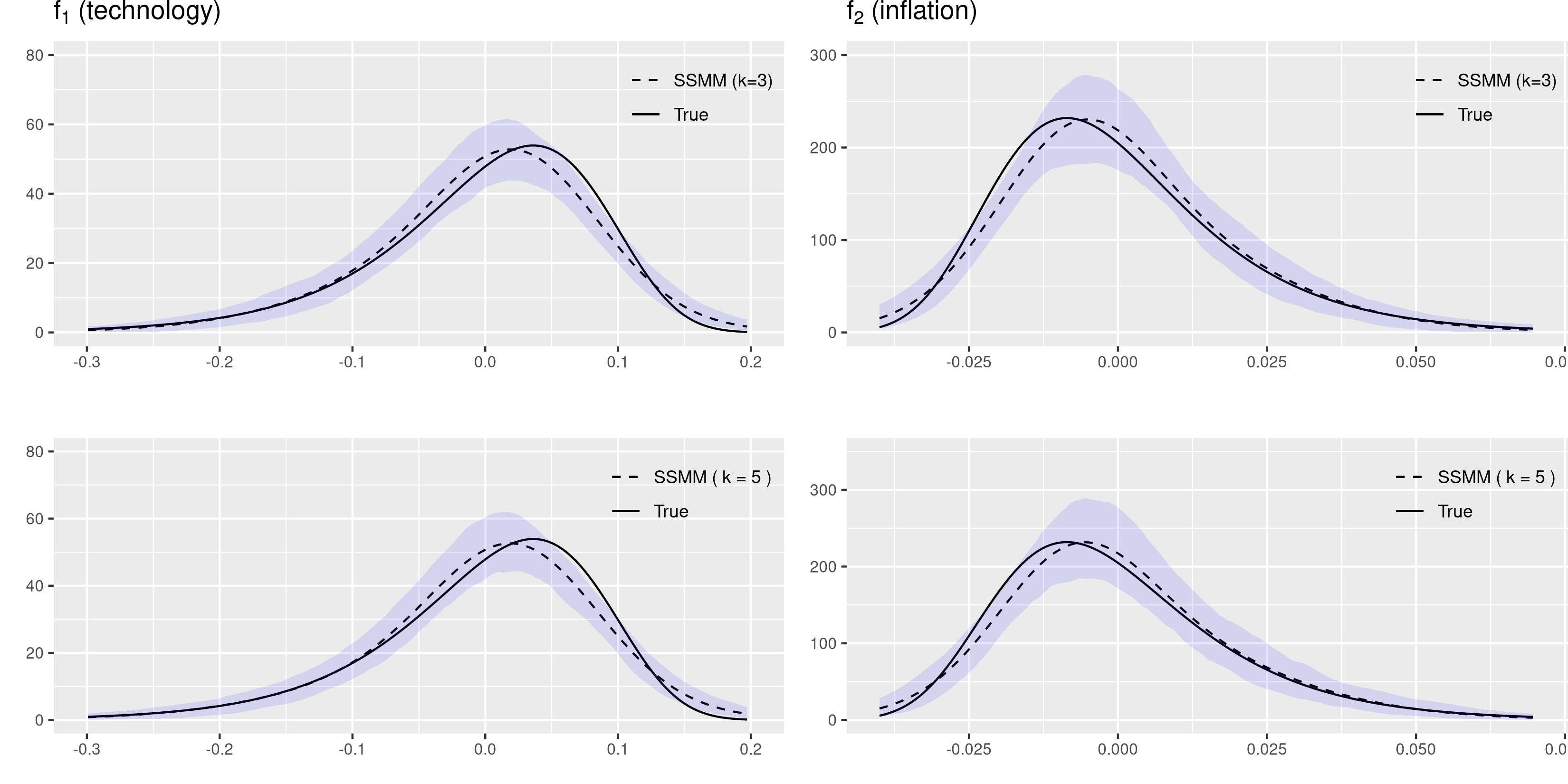}\\
  {\footnotesize \textit{Note: bands = 95\% pointwise interquantile range, $n=235$, $k=3$, $100$ Monte-Carlo replications.}} \end{center}
\end{figure}

\end{appendices}
\end{document}